\newif\ifConference
\newif\ifJournal
\newif\ifAnonymous
\newif\ifAppendix
\newif\ifFinal
\newif\ifLNCS

\Conferencetrue
\LNCSfalse
\Appendixtrue
\Finaltrue

\ifLNCS
  \documentclass[a4paper,USenglish,cleveref,autoref,thm-restate,numberwithinsect,nameinlink]{llncs}
\else
  \documentclass[a4paper,USenglish,cleveref,autoref,thm-restate,numberwithinsect,nameinlink]{lipics-v2021}
\fi

\usepackage{etex}
\usepackage[utf8]{inputenc}
\usepackage[T1]{fontenc}

\IfPackageLoadedTF{babel}{}{
  \usepackage[english]{babel}
}

\usepackage[charter]{mathdesign}
\usepackage{eucal}
\usepackage{microtype}

\usepackage{booktabs}

\IfPackageLoadedTF{hyperref}{}{
  \usepackage[bookmarks=false, pdftex, colorlinks=true, allcolors=blue]{hyperref}
}
\hypersetup{
	colorlinks,
	linkcolor={red!50!black},
	citecolor={blue!90!black!80},
	urlcolor={blue!80!black}
}

\IfPackageLoadedTF{cleveref}{}{
\usepackage[capitalize, nameinlink]{cleveref}
}

\usepackage{amsfonts,amsmath,amsthm}
\usepackage{graphicx}
\usepackage[vlined,ruled,linesnumbered]{algorithm2e}
\usepackage{mathtools}
\usepackage{xcolor}
\usepackage{nicefrac}
\usepackage{tabularx}
\usepackage{xspace}

\ifFinal
\usepackage[disable]{todonotes}
\else
\usepackage[notref, notcite]{showkeys}
\usepackage{todonotes}
\fi

\usepackage{etoolbox}
\newcommand{\appendixproofs}{}
\newcommand{\toappendix}[1]{\gappto{\appendixproofs}{#1}}
\toappendix{
	
	\section{Appendix}
	
	\setcounter{section}{0}
}

\newcommand{\thmtoappendix}[3]{\toappendix{
		\section{Proof of~\Cref{#1}}
		\renewcommand\thetheorem{\ref{#1}}
		\renewcommand\thelemma{\ref{#1}}
		\renewcommand\theproposition{\ref{#1}}
		#2
		
		#3
	}
}

\usepackage{tikz}
\usetikzlibrary{arrows.meta,shapes,positioning,calc,decorations.pathreplacing,decorations.markings,decorations.pathmorphing,patterns}

\pgfdeclarelayer{background2}
\pgfdeclarelayer{background}
\pgfdeclarelayer{foreground}
\pgfsetlayers{background2,background,main,foreground}

\usepackage[edges]{forest}

\tikzstyle{bold}=[draw, line width=2pt]
\tikzstyle{optional}=[dashed]
\tikzstyle{path}=[decorate, decoration={snake, amplitude=.6mm}]

\tikzstyle{small}=[inner sep=2pt]
\tikzstyle{tiny}=[inner sep=1.7pt]
\tikzstyle{textnode}=[inner sep=0pt]

\tikzstyle{triangle}=[draw, regular polygon, regular polygon sides=3]
\tikzstyle{vertex}=[circle, draw, fill=white]
\tikzstyle{reti}=[vertex, fill=black]
\tikzstyle{leaf}=[vertex, rectangle]
\tikzstyle{leaf2}=[vertex, regular polygon, regular polygon sides=3]

\tikzstyle{smallvertex}=[vertex, small]
\tikzstyle{smallleaf}=[leaf, inner sep=3.3pt]
\tikzstyle{smallleaf2}=[leaf2, inner sep=1.7pt]
\tikzstyle{smalltriangle}=[triangle, inner sep=1.5pt]
\tikzstyle{smallreti}=[reti, small]

\tikzstyle{match}=[edge,line width=3pt]
\tikzstyle{edge}=[draw,-]
\tikzstyle{arc}=[draw,arrows={-Latex[length=6pt]}]
\tikzstyle{boldarc}=[draw, bold, arrows={-Latex[length=10pt]}]
\tikzstyle{revarc}=[draw, arrows={Latex[length=6pt]-}]
\tikzstyle{boldrevarc}=[draw, bold, arrows={Latex[length=10pt]-}]

\newcommand{\nextnode}[5][vertex]{\node[small#1] (#2) at ($(#3)+(#4)$) {} edge[#5] (#3);}
\newcommand{\nextnodelab}[7][vertex]{\node[small#1] (#2) at ($(#3)+(#4)$) {}; \draw[#5] (#2) -- (#3) node[pos=#6] {#7};}

\newcommand{\edgelab}[3][0]{\rotatebox{#1}{\rotatebox{-#1}{#2} \rotatebox{-#1}{#3}}}

\makeatletter
\@ifundefined{claim}{{\upshape\itshape}{\upshape\rmfamily}}{}
\@ifundefined{construction}{\newtheorem{construction}{Construction}{\upshape\itshape}{\upshape\rmfamily}}{}
\@ifundefined{rrule}{{\upshape\itshape}{\upshape\rmfamily}}{}
\@ifundefined{observation}{\newtheorem{observation}{Observation}{\upshape\itshape}{\upshape\rmfamily}}{}
\makeatother

\newcommand{\kommentar}[1]{}
\newcommand{\Oh}{\ensuremath{O}}

\newcommand{\proofpara}[1]{\smallskip
	
	\noindent\textit{#1.}}

\DeclareMathOperator{\swwithoutN}{{sw}}
\DeclareMathOperator{\nswwithoutN}{{nsw}}

\DeclareMathOperator{\parents}{parents}

\DeclareMathOperator{\off}{off}

\DeclareMathOperator{\DP}{DP}

\newcommand{\w}{{\ensuremath{\omega}}\xspace}

\newcommand{\yes}{{\normalfont\texttt{yes}}\xspace}

\newcommand{\NP}{{\normalfont{NP}}\xspace}

\newcommand{\FPT}{{\normalfont{FPT}}\xspace}

\newcommand{\Instance}{{\ensuremath{\mathcal{I}}}\xspace}

\newcommand{\sw}[1][N]{{\ensuremath{\swwithoutN_{#1}}}\xspace}
\newcommand{\nsw}[1][N]{{\ensuremath{\nswwithoutN_{#1}}}\xspace}
\newcommand{\GW}[1][\Gamma]{{\ensuremath{GW_{#1}}}}

\newcommand{\NN}{\mathbb{N}}

\newcommand{\problemdef}[3]{
  \vspace{-1ex}\begin{trivlist}
    \item \tikz{\node [draw,inner sep=4.5pt] {\begin{minipage}{\textwidth-9.4pt}
          \normalsize\textsc{#1}
          
          \smallskip

          \begin{tabularx}{\textwidth-9.4pt}{ll@{\hspace{3pt}}>{\raggedright}X}
            \normalsize\textbf{Input} & :	& \normalsize#2 \cr
            \normalsize\textbf{Question} & :				& \normalsize#3
         \end{tabularx}
    \end{minipage}};}
  \end{trivlist}}

\newcommand{\PROB}[1]{{{\textsc{#1}}}\xspace}

\newcommand{\smashsum}[2][lr]{\ensuremath{\smashoperator[#1]{\sum_{#2}}}}
\newcommand{\smashprod}[2][lr]{\ensuremath{\smashoperator[#1]{\prod_{#2}}}}

  \let\citep\cite

\newcommand{\belo}[1][N]{\ensuremath{<_{#1}}}
\newcommand{\beloeq}[1][N]{\ensuremath{\leq_{#1}}}
\newcommand{\abov}[1][N]{\ensuremath{>_{#1}}}
\newcommand{\aboveq}[1][N]{\ensuremath{\geq_{#1}}}

\newcommand{\zeroone}[1]{\ensuremath{\left[#1\right]}}

\newcommand{\APD}[1][N]{{\ensuremath{\text{APD}_{#1}}}\xspace}
\newcommand{\PD}[1][N]{{\ensuremath{\text{PD}_{#1}}}\xspace}
\newcommand{\EPD}[1][N]{{\ensuremath{\mathbb{E}\text{PD}_{#1}}}\xspace}

\newcommand{\APDlong}{average-tree phylogenetic diversity\xspace}
\newcommand{\APDX}{\ensuremath{\APD(X_N)}\xspace}
\newcommand{\MaxAPD}{\PROB{Max-APD}}

\newcommand{\ucNAP}{\PROB{unit-cost-NAP}}

\newcommand{\indeg}[1][]{\ensuremath{{\operatorname{deg}^-_{#1}}}}
\newcommand{\outdeg}[1][]{\ensuremath{{\operatorname{deg}^+_{#1}}}}
\newcommand{\edgeproba}[3][N]{\ensuremath{\varphi_{#1}\left({#2}, {#3}\right)}}
\newcommand{\haspath}[3][N]{\ensuremath{{#2}\geq_{#1}{#3}}}
\newcommand{\nhaspath}[3][N]{\ensuremath{{#2}\not\geq_{#1}{#3}}}
\newcommand{\haspathzo}[3][N]{\zeroone{\haspath[#1]{#2}{#3}}}
\newcommand{\nhaspathzo}[3][N]{\zeroone{\nhaspath[#1]{#2}{#3}}}

\newcommand{\inedges}[1][v]{\ensuremath{E}^{#1-}}
\newcommand{\outedges}[1][v]{\ensuremath{E}^{#1+}}
\newcommand{\childprob}[2][\Delta]{\ensuremath{p^{#2}_{\leq #1}}}

\newcommand{\compsw}{\mathfrak{S}} 

\newcommand{\swis}{\ensuremath{\mathcal{S}}\xspace} 

\newcommand{\ptable}[2]{\ensuremath{\text{p}\left[#1,#2\right]}}
\newcommand{\dtable}[2]{\ensuremath{\text{d}\left[#1,#2\right]}}

\let\emptyset\varnothing

\title{Average-Tree Phylogenetic Diversity Parameterized by Scanwidth and Invisibility}

\ifAnonymous

\hideLIPIcs

\else

\ifLNCS

\title{Average-Tree Phylogenetic Diversity Parameterized by Scanwidth and Invisibility
\thanks{%
Leo van Iersel was partially funded by the Dutch Organisation for Scientific Research (NWO) grant OCENW.KLEIN.125 and OCENW.M.21.306.
Work was done while Mathias Weller was at LaBRI, Université Bordeaux, France.
}
}

\author{%
Leo~van~Iersel%
\inst{1}%
\orcidID{0000-0001-7142-4706}\\
\email{l.j.j.vanIersel@tudelft.nl}
\and
\\Mark~Jones%
\inst{2}%
\orcidID{0000-0002-4091-7089}\\
\email{m.jones@mdx.ac.uk}
\and
\\Jannik~Schestag%
\inst{1}%
\orcidID{0000-0001-7767-2970}\\
\email{j.t.schestag@tudelft.nl}
\and 
\\Celine~Scornavacca%
\inst{3}%
\orcidID{0009-0004-0179-9771}\\
\email{celine.scornavacca@umontpellier.fr}
\and 
\\Mathias~Weller%
\inst{4}%
\orcidID{0000-0002-9653-3690}\\
\email{mathias.weller@cnrs.fr}
}

\institute{%
TU Delft, The Netherlands \and
Middlesex University, London, United Kingdon \and
ISEM, Université de Montpellier, CNRS, IRD, EPHE, Montpellier, France \and
LIGM, CNRS, Univsersité Gustave Eiffel, Marne-la-Vallée, France
}

\else

\author{Leo van Iersel}
{TU Delft, The Netherlands \and \url{https://leovaniersel.wordpress.com/}}
{l.j.j.vanIersel@tudelft.nl}
{https://orcidID.org/0000-0001-7142-4706}
{Partially funded by the Dutch Organisation for Scientific Research (NWO) grant OCENW.KLEIN.125 and OCENW.M.21.306.}
\author{Mark Jones}
{Middlesex University, London, United Kingdon \and \url{https://www.thenetworkcenter.nl/People/Postdocs/person/83/Dr-Mark-Jones}}
{m.jones@mdx.ac.uk}
{https://orcidID.org/0000-0002-4091-7089}
{}
\author{Jannik Schestag}
{TU Delft, The Netherlands \and
\url{https://www.tudelft.nl/en/eemcs/the-faculty/departments/applied-mathematics/people/jt-theo-schestag-msc}}
{j.t.schestag@tudelft.nl}
{https://orcidID.org/0000-0001-7767-2970}
{Partially funded by the Dutch Research Council (NWO), project OCENW.GROOT.2019.015 “Optimization for and with Machine Learning (OPTIMAL)”.}
\author{Celine Scornavacca}
{ISEM, Université de Montpellier, CNRS, IRD, EPHE, Montpellier, France \and \url{https://sites.google.com/view/celinescornavacca}}
{celine.scornavacca@umontpellier.fr}
{https://orcidID.org/0009-0004-0179-9771}
{Partially funded by French Agence Nationale de la Recherche through the CoCoAlSeq Project (ANR-19-CE45-0012).}
\author{Mathias Weller}
{LIGM, CNRS, Univsersité Gustave Eiffel}
{mathias.weller@tu-berlin.de}
{https://orcidID.org/0000-0002-9653-3690}{}

\Copyright{Leo van Iersel, Mark Jones, Jannik Schestag, Celine Scornavacca, and Mathias Weller}

\fi 

\authorrunning{van Iersel, Jones, Schestag, Scornavacca, and Weller}

\fi 

\date{\today}

\newcommand{\todos}[1]{\todo[color=green!70!black!70]{#1}}
\newcommand{\todosi}[1]{\todo[inline,color=green!70!black!70]{#1}}

\allowdisplaybreaks

\ccsdesc{Applied computing~Computational biology}
\ccsdesc{Theory of computation~Fixed parameter tractability}
\ccsdesc{Theory of computation~Problems, reductions and completeness}

\keywords{phylogenetic diversity; phylogenetic networks; network phylogenetic diversity; algorithms; computational complexity}

\hideLIPIcs

\begin{document}
\nolinenumbers
	\maketitle
	
  \begin{abstract}
    We investigate parameterized algorithms for computing the average-tree phylogenetic diversity (APD) in rooted phylogenetic networks,
    studying the problem under different structural parameters that capture the deviation of a network from a tree.
    Our primary parameter is the scanwidth,
    a measure of the tree-likeness of a given directed acyclic graph.
    We show that a subset of taxa with maximum APD can be found in polynomial time in phylogenetic networks of scanwidth at most~2,
    but becomes NP-hard in networks of scanwidth~3.\todos{At least~3?}
    Further, we design an algorithm that computes the APD of a given set of taxa in~$\Oh(2^{\sw} n)$~time,
    where~$\sw$ denotes the scanwidth and~$n$ the number of taxa in the input network.
    Finally, we give a linear-time algorithm for computing the APD of a given set of taxa
    if the network induced by these taxa is reticulation-visible.
    We generalize this algorithm to still run in polynomial time
    if each biconnected component of the induced network has only constantly many invisible reticulations.\todos{It's actually FPT...}
  \end{abstract}
    
\setcounter{page}{0}
\newpage
\section{Introduction}
Human activities are driving what is widely recognized as a sixth mass extinction~\citep{Ripple2017warning}. 
The question of how best to preserve biodiversity under limited financial and temporal resources has become increasingly urgent~\citep{weitzman1998noah}. This challenge has motivated the development of quantitative measures that aim to guide conservation priorities. 

One of these is feature diversity (FD), formally introduced by Faith~\cite{faith1992conservation} as
the total amount of evolutionary features represented by a set of taxa---that is, a set of species, populations, or individuals.
To make feature diversity easier to compute, phylogenetic diversity (PD) is often used as a tree-based proxy
that does not require detailed trait data~\citep{faith1992conservation}.
Although this proxy is not always biologically meaningful~\cite{wicke2021formal}, PD remains a widely adopted choice.

\def\maybehide#1#2{\if\relax#1\relax\phantom{#2}\else#2\fi}

\looseness=-1
Formally, when evolutionary relationships of a set of taxa~$X$ 
are represented by a rooted phylogenetic tree~$T$ with leaf set~$X$ and edge lengths,
the PD of a set~$Z \subseteq X$ is defined as
the sum of the lengths of all edges that lie on a path from the root of~$T$ to a leaf in~$Z$~\citep{faith1992conservation}.
In the classical tree setting, PD is well understood from an algorithmic perspective.
Given a phylogenetic tree~$T$ and an integer~$k$, a subset of~$k$ species maximizing PD can be 
found efficiently using a greedy algorithm~\citep{pardi2005species,steel2005phylogenetic}.
However, reticulate evolutionary events such as hybrid speciation, lateral gene transfer, and recombination---which allow species to inherit genetic material from multiple ancestors---imply more complexity in evolutionary histories than what can be captured by a tree.
In such cases, evolutionary relationships are more accurately modeled by rooted phylogenetic networks rather than trees~\citep{huson2010phylogenetic}.
Rooted phylogenetic networks are directed acyclic graphs with a single source node (the ``root'')
that describe the evolutionary history of a set~$X$ of taxa,
represented by the leaves (sinks) of the network.
Such networks, when equipped with edge lengths and inheritance probabilities, can be used to estimate the biodiversity of a subset of the taxa.  
Edge lengths can have two different meanings: either they correspond to the expected number of substitutions per site or to elapsed evolutionary time (for a class of trees and networks called \emph{ultrametric}). In both cases, longer edges mean a more distant (looser) evolutionary relationship between the connected nodes.
Inheritance probabilities tell us how much of the genetic material of an organism resulting from a reticulate evolutionary event comes from each parent when lineages merge.

\begin{figure}[t]
    \centering
    \begin{tikzpicture}[xscale=.85, yscale=1]
      \tikzstyle{every node}=[font=\scriptsize]
      \tikzstyle{fade}=[opacity=0]

      \foreach[count=\i from 0] \x/\y/\stretch/\aleft/\aright/\bleft/\bright/\hd in {0/0/1/1/1/1/1/x, 6/1.5/.7/0/1/0/1/, 11/1.5/.7/0/1/1/0/, 6/-2.3/.7/1/0/0/1/, 11/-2.3/.7/1/0/1/0/} {
        \node[smallvertex] (rt\i) at (\x,\y) {};
        \nextnodelab{0\i}{rt\i}{-135:1.5*\stretch}{revarc}{.7}{\edgelab[-45]{2~~}{~}}
        \nextnodelab{00\i}{0\i}{-135:1.5*\stretch}{revarc}{.7}{\edgelab[-45]{5~~}{~}}
        \nextnodelab[leaf, label=below:$a$]{a\i}{00\i}{-90:2.121*\stretch}{revarc}{.5}{\edgelab{1~~}{~}}
        
        \nextnodelab[reti]{01\i}{0\i}{-45:1.5*\stretch}{revarc, opacity=\aleft}{.6}{\edgelab[45]{1}{\maybehide{\hd}{~0.3}}}
        
        \nextnodelab{001\i}{00\i}{-30:1.5*\stretch}{revarc}{.5}{\edgelab[45]{1~~}{~}}
        \nextnodelab[leaf, label=below:$b$]{b\i}{001\i}{-90:1.414*\stretch}{revarc}{.5}{\edgelab{2~~~}{}}
        
        \nextnodelab[reti]{0011\i}{001\i}{-30:1.5*\stretch}{revarc, opacity=\bleft}{.55}{\edgelab[60]{4}{\maybehide{\hd}{~0.4}}}
        \nextnodelab[leaf, label=below:$c$]{c\i}{0011\i}{-90:.707*\stretch}{revarc}{.7}{\edgelab{1~~}{~}}
        
        \nextnodelab{1\i}{rt\i}{-45:1.5*\stretch}{revarc}{.5}{\edgelab[45]{2~~}{~}}
        \nextnodelab{11\i}{1\i}{-45:2*\stretch}{revarc}{.5}{\edgelab[45]{1~~}{~}}
        \nextnodelab[leaf, label=below:$d$]{d\i}{11\i}{-135:\stretch}{revarc}{.7}{\edgelab[-45]{1~~}{}}
        \nextnodelab[leaf, label=below:$e$]{e\i}{11\i}{-45:\stretch}{revarc}{.5}{\edgelab[45]{2~~}{}}
        
        \draw[arc, opacity=\aright] (1\i) -- (01\i) node[pos=.3] {\edgelab[-45]{6}{\maybehide{\hd}{~0.7}}};
        \draw[arc, opacity=\bright] (01\i) -- (0011\i) node[pos=.25] {\edgelab[20]{~~8}{\maybehide{\hd}{~0.6}}};

        \node[opacity=\i] at ($(rt\i) + (195:2*\stretch)$) {$\sigma_\i$};
      }
      \foreach \u/\a/\la in {rt/0/\rho, 0/135/u, 01/180/r, 00/180/p, 001/-135/q, 0011/0/s, 1/45/v, 11/45/w}{
        \node at ($(\u0)+(\a:.3)$) {$\la$};
      }
    \end{tikzpicture}
    \caption{Example of a rooted phylogenetic network~$N$ (left) and its four switchings~$\sigma_1,\ldots ,\sigma_4$ (right). Edge lengths are indicated to the left and inheritance probabilities to the right of each edge. The probabilities of the switchings are $P(\sigma_1)=0.42$, $P(\sigma_2)=0.28$, $P(\sigma_3)=0.18$, $P(\sigma_4)=0.12$. The PD scores of $Z=\{a,c,e\}$ in the switchings are (considering the bold edges) $PD_{\sigma_1}=28$, $PD_{\sigma_2}=19$, $PD_{\sigma_3}=23$ and $PD_{\sigma_4}=19$. Hence, the APD score of~$Z$ is equal to $\APD(Z)=0.42\cdot 28+0.28\cdot 19+0.18\cdot 23+0.12\cdot 19=23.5$.
	}
    \label{fig:intro}
\end{figure}
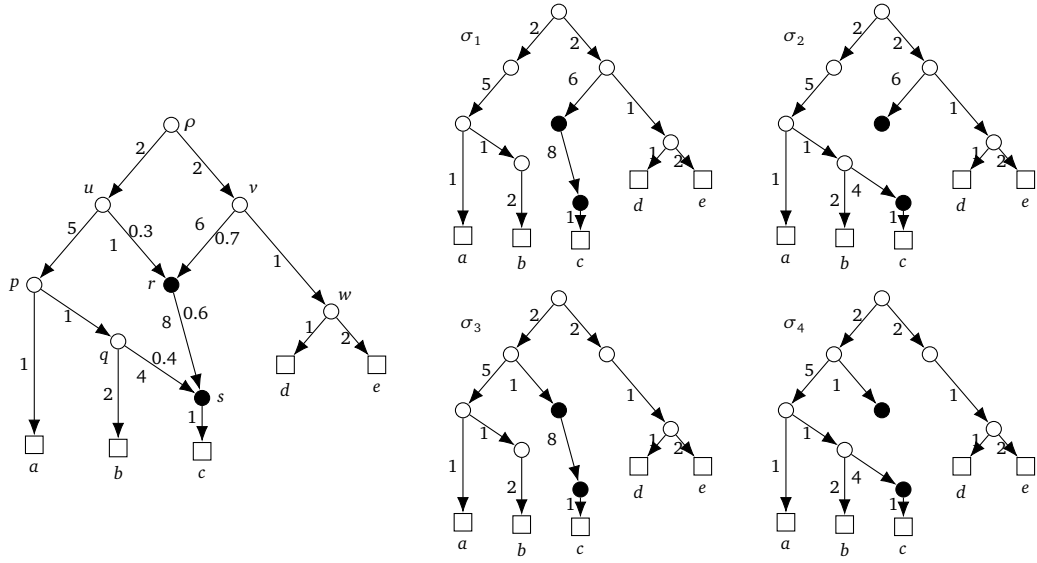

\looseness=-1
%
Generalizing phylogenetic diversity from trees to networks is nontrivial,
and several network-based variants of PD have been proposed in recent years~\cite{bordewichNetworks,MAPPD23,van2024maximizing,van2025average,WickeFischer2018}.
%
In this paper, we focus on the \emph{average-tree phylogenetic diversity}~(APD)~\cite{van2025average,WickeFischer2018}, whose definition requires that
each edge~$uv$ incoming to a hybridization~$v$ is equipped with an ``inheritance probability''~$p_I(uv)$,
indicating the probability that any given gene of~$v$ has been inherited from~$u$.
Then, the APD is defined as the weighted average of phylogenetic diversities in all ``switchings''%
\footnote{\looseness=-1
A switching of $N$ is a tree resulting from the removal of all but one incoming edge for each reticulation.
For the purpose of defining the \APDlong, one can also take the weighted average of all embedded (``displayed'') phylogenetic trees,
but switchings are more convenient as they do not require node suppression.}
of the network,
see \cref{sec:prelims} for the definition and
see \cref{fig:intro} for an intuitive example.
We consider two main problems in this context:
\emph{computing} the APD of a given set of leaves and
\emph{deciding} whether there is a size-$k$ leaf-set with at least a given APD.
Algorithms for the latter problem can be used to find a subset of the taxa with maximum APD.

%

\subparagraph*{Previous Work.}
Interestingly, it is already~\#P-hard to compute the APD of the set~$X_N$ of \emph{all taxa}~\cite{van2025average},
which is a special case of both problems, so it is unlikely that either can be solved in polynomial time.
On the positive side, there is an algorithm with running time~$2^{O(r)} n^{O(1)}$,
where $n$ and $r$~are the numbers of taxa and reticulations, respectively~\cite{van2025average}, 
which can be used to solve both problems.


\subparagraph*{Our Results.}
\looseness=-1
Here, we consider parameters that are potentially much smaller than the number of reticulations and,
hence, may lead to much faster algorithms.
We generalize the tractability on trees by considering parameters measuring how ``close'' the input network is to being a tree.
The ``scanwidth'' is a fairly novel measure of tree-likeness introduced by
Berry, Scornavacca, and Weller~\cite{berry2020scanning},
which found some attention recently~\cite{bruchhold2026exploiting,HoltgrefePANDA,MAPPD25,twvssw}.
This measure has a real practical interest, in particular when compared to the treewidth (of the underlying undirected graph):
first, it enables more intuitive dynamic programming, respecting the directionality of the given DAG and,
second, some problems that are {W[1]-hard} when parameterized by treewidth become fixed-parameter tractable when parameterized by scanwidth~\cite{twvssw}.
We start by showing in \cref{sec:sw-hardness} that it is \NP-hard to decide whether~$k$ leaves can attain a given diversity, even on networks of scanwidth~3.
In contrast, we show in \cref{sec:sw-DP} that the APD of the set~$X_N$ of all taxa can be computed in~$\Oh(2^{\sw} n)$~time,
and this result directly extends to arbitrary subsets of the taxa.

Orthogonally to considering networks of low scanwidth, we consider networks that are close to being ``reticulation-visible''.
Indeed, we show in \cref{sec:rv} that the APD of~$X_N$ can be computed in linear time on reticulation-visible networks,
implying that the APD of any set~$Z\subseteq X_N$ can be computed in linear time if the network ``induced\footnote{%
The subnetwork ``induced'' by a node-set~$Z$ is the result of removing all nodes that do not have a path to $Z$.}''
by~$Z$ is reticulation-visible.
Note that this generalizes the classic linear-time algorithm for trees by Faith~\cite{faith1992conservation}.
We generalize the result further, showing that
the APD of~$X_N$ can still be computed
in polynomial time on networks in which each biconnected component has only constantly many reticulations that are not visible.

Results marked with~$\star$ are deferred partly or completely to the appendix.
	
	\section{Preliminaries}\label{sec:prelims}

  For a statement~$\psi$, we use $\zeroone{\psi}$ to indicate the truth value of $\psi$,
  where $\zeroone{\psi}=1$ if $\psi$ is true and $0$~otherwise.
  We expect the reader to not confuse this notation with entries in our dynamic programming tables, which are denoted by $\ptable{.}{.}$ and $\dtable{.}{.}$.
	We use~$\uplus$ to denote a disjoint union.

  \subparagraph*{Phylogenetic Networks.}\looseness=-1
  For any DAG~$N$, we let $V(N)$ and $E(N)$ denote the set of nodes and edges of $N$.
  If~$uv$ is an edge of $N$, then $u$ is a \emph{parent} of $v$
  and~$v$ is a \emph{child} of $u$.
  We call~$u$ and $v$ the \emph{tail} and \emph{head}, respectively, of $uv$.
  The number of parents and children of a node~$x$ in $N$ is its in-degree~$\indeg[N](x)$ and out-degree~$\outdeg[N](x)$, respectively.
  For a vertex or edge~$q$ of $N$, we let~$N-q$ denote the result of removing~$q$ from $N$
  (with all incident edges if $q$ is a vertex).
  %
  If a node~$u$ has a (possibly empty) path to a node~$v$ in $N$, that is, $N$ contains a $u$-$v$-path, then we write~$\haspath[N]{u}{v}$.
  We generalize this notion to the case where $u$ has a path to any node of a set of nodes~$Z$ and
  we further extend the notion to edges~$uv\in E(N)$, in which case $\haspath[N]{uv}{Z}$ means
  that $N$ contains a path~$p$ from $u$ to any node in $Z$ such that the first edge of $p$ is $uv$.

  In this work, a \emph{(phylogenetic) network}~$N$ is a rooted\footnote{There is exactly one source vertex called the \emph{root}.}
  directed acyclic graph (DAG)
  whose vertices all have
  in-degree~$\leq 1$ (\emph{tree-nodes}) or
  in-degree~$\geq 2$ and out-degree~$\leq 1$ (\emph{reticulations}) and
  whose \emph{leaves} (vertices with out-degree~0) are labeled bijectively by so-called taxa, which we refer to by~$X_N$.
  As they are in bijection, we use taxa and leaves interchangeably.
  If all nodes in $N$ have in-degree $\leq 1$, then~$N$~is a \emph{(phylogenetic) tree}.
  Note that we allow vertices with indegree~1 and outdegree~1 since our algorithms can easily deal with such vertices.
  If~$v$ is a leaf and $\haspath{u}{v}$, then $v$ is an \emph{offspring} of $u$ and~$\off_{N}(u)$ denotes the set of offspring of~$u$.
  We generalize this notion to edges by~$\off_{N}(u_1u_2) := \off_{N}(u_2)$.
  If all paths from the root of $N$ to $v$ contain~$u$, then $u$ is \emph{visible from $v$}
  and $u$ is \emph{visible} if such a node~$v$ exists for $u$.
  Then, $N$ is \emph{reticulation-visible} if all reticulations of~$N$ are visible.
  A \emph{tree-path} is a path whose inner nodes are tree-nodes.
  The network~$N$ is \emph{biconnected} if it does not contain cut-vertices
  (that is, $N-v$ is weakly connected for each vertex~$v$ of $N$),
  and a maximal biconnected subgraph of $N$ is called a \emph{biconnected component} of~$N$.
  We note here that all biconnected components of a rooted DAG are themselves rooted DAGs.
  The \emph{level} of a network is the maximum number of reticulations in any biconnected component.
  %

  \subparagraph*{Scanwidth.}
  A \emph{tree-extension}~$\Gamma$ of a network~$N$ with leaf-set~$X_N$,
  is a tree whose nodes are $V(N)$ and whose leaves are $X_N$ and in which~$\haspath{u}{v}$ implies~$\haspath[\Gamma]{u}{v}$ for all~$u$ and~$v$.
  Ituitively, the ancestor-ralation of $\Gamma$ extends the ancestor-relation of $N$.
  For any node~$v$ in $V(N)=V(\Gamma)$,
  let~$\GW(v) := \{uw \in E(N) \mid u\abov[\Gamma] v \aboveq[\Gamma] w\}$ denote the set of edges of~$N$ that ``pass'' $v$ in $\Gamma$.
  We omit the subscript if~$\Gamma$ is clear from the context.
  We abbreviate~$\GW(Z):=\bigcup_{z\in Z}\GW(z)$, for any set~$Z\subseteq V(N)$.
  To emphasize the similarity to treewidth, we sometimes call the sets~$\GW$ ``bags'' of the tree-extension.
  Then, the \emph{width} of~$\Gamma$ is $\sw(\Gamma):=\max_{v\in V(N)}|\GW(v)|$ and
  the \emph{scanwidth} of~$N$ is the minimum width over all tree-extensions of~$N$.
  See \cref{fig:scanwidth} for an example.

\begin{figure}[t]
    \centering
    \begin{tikzpicture}[xscale=-1, yscale=.85]
      \tikzstyle{fade}=[opacity=0]
      \colorlet{tegray}{lightgray!70!white}
      \tikzstyle{trext}=[line width=10pt, tegray]

      \node[smallvertex, label=below:$\rho$] (v0) at (0,0) {};
      \foreach[count=\i] \u/\st in {v/,u/,r/reti,p/,q/,s/reti} {
        \pgfmathtruncatemacro{\oldi}{\i - 1}
        \nextnode[vertex, \st, label=below:$\u$]{v\i}{v\oldi}{180:1}{fade}
      }
      \nextnode[leaf, label=below:$c$]{v7}{v6}{180:1}{fade}
      \nextnode[vertex, label=below:$w$]{v8}{v1}{135:1.5}{fade}
      \nextnode[leaf, label=0:$e$]{v9}{v8}{160:1}{fade}
      \nextnode[leaf, label=0:$d$]{v10}{v8}{-160:1}{fade}
      \nextnode[leaf, label=0:$a$]{v11}{v4}{135:1}{fade}
      \nextnode[leaf, label=0:$b$]{v12}{v5}{135:1}{fade}

      \foreach \u/\v/\b in {0/1/0, 0/2/-15, 1/3/15, 1/8/0, 2/3/0, 2/4/-15, 3/6/12, 4/5/0, 4/11/0, 5/6/0, 5/12/0, 6/7/0, 8/9/0, 8/10/0}{
        \draw[arc] (v\u) to[bend left=\b] (v\v);
      }

      \begin{pgfonlayer}{background}
        \draw[trext] (v0.west) -- (v7.east);
        \draw[trext] (v1.center) -- (v8.center) -- (v9.east);
        \draw[trext] (v1.center) -- (v8.center) -- (v10.east);
        \draw[trext] (v12.north east) -- (v5.center) -- (v0.center);
        \draw[trext] (v11.north east) -- (v4.center) -- (v0.center);
      \end{pgfonlayer}      
    \end{tikzpicture}
    \caption{A tree-extension~$\Gamma$ (in gray) of the phylogenetic network~$N$ in \cref{fig:intro}, which is drawn ``inside''~$\Gamma$, illustrating that the width of~$\Gamma$ is~$3$ (since $|\GW(r)|=|\{vr,ur,up\}|=3$) and $\sw\leq 3$.
	}
    \label{fig:scanwidth}
\end{figure}

\looseness=-1
  \subparagraph*{Switchings.}
  For a set~$R$ of nodes (usually reticulations) in a DAG~$N$,
  a \emph{partial switching}~$\sigma$ is a subgraph of $N$, resulting from removing all but one incoming edge of each~$r\in R$.
  The set of all switchings is denoted with~$\swis_R(N)$.
  We omit the prefix ``partial'' as well as the subscript~$R$, if $R$ contains all reticulations of~$N$.
  	
  \subparagraph*{Diversity.}
  In a network~$N$, reticulations represent evolutionary events when some species received genetic material from multiple parent-species.
  That is, the gene-pool of an (ancestral) hybrid node~$v$ consists of parts of the gene-pools of the parents~$u_i$ of~$v$ in~$N$.
  We denote the proportion of genetic material of the genome of $v$ that has been inherited
  from $u_i$ by $p_I(u_iv)$. 
  Alternatively, one can interpret~$p_I(u_iv)$ as the probability that a gene inherited from~$u_i$ is picked when picking genes of $v$ uniformly at random.
  Thus, we refer to $p_I$ as \emph{inheritance probability}.
  If~$\sum_{uv\in E(N)} p_I(uv)=1$ for all nodes~$v$ of $N$, then $p_I$ is \emph{normal}.
  In particular, if $v$ is not a reticulation, then the unique incoming edge~$uv$ of $v$ has $p_I(uv)=1$.
  Given a network~$N$ with edge-lengths~$\w:E(N)\to\mathbb{N}$ and normal inheritance probabilities~$p_I:E(N)\to\mathbb{Q}^+$,
  the \emph{probability} of a switching~$\sigma$
  is~$P(\sigma):=\prod_{v\in V(\sigma)} \sum_{uv\in E(\sigma)} p_I(uv)$.
  While the input network will have probability one, switchings of $N$ may have lower probabilities.
  For a set~$\swis'$ of switchings, 
  let~$P(\swis') := \sum_{\sigma\in\swis'} P(\sigma)$.
  The \emph{\APDlong} (APD) of a set~$Z$ of taxa is the weighted average of $\PD[\sigma](Z)$ for all switchings~$\sigma$ of~$N$, that is
  \begin{equation}\label{eq:APD}
    \APD(Z)
    := \;\smashsum{\sigma\in\swis(N)}\; P(\sigma) \cdot \PD[\sigma](Z)
    = \;\smashsum{\sigma\in\swis(N)}\; P(\sigma) \cdot \;\smashsum{e\in E(\sigma)}\; \w(e)\cdot \haspathzo[\sigma]{e}{Z}.
  \end{equation}
  Alternatively, this can be seen as the expected diversity~$\PD[\sigma](Z)$ of any switching~$\sigma$ of $N$
  drawn at random according to the probability~$P(\sigma)$.

  \looseness=-1
  By rearranging the sums in \Cref{eq:APD}, we can reformulate $\APD(Z)$ as the expected total weight of the edges with paths to leaves in~$Z$
  in a switching~$\sigma$ drawn at random from $\swis(N)$ according to the probability~$P(\sigma)$ (see \cite[Lemma~4.3]{van2025average}).
  To this end, we use the probability~$\edgeproba{uv}{Z}$
  that a randomly drawn switching of $N$ contains a path starting with $uv$ and ending in $Z$.
  \begin{align}\label{eq:APD2025}
    \edgeproba{uv}{Z} = \;\smashsum{\sigma\in\swis(N)}\; P(\sigma)\cdot \haspathzo[\sigma]{uv}{Z}
    && \text{so} &&
    \APD(Z) \stackrel{\eqref{eq:APD}}{=} \smashsum{uv \in E(N)} \w(uv)\cdot \edgeproba{uv}{Z}
  \end{align}
  This work focusses on deciding whether any size-$k$ set of leaves in $N$ has diversity at least~$D$.
  \problemdef{\MaxAPD}
  {A network $N$ with leaf-set~$X_N$, equipped with edge-weights $\w$ and normal inheritance probabilities~$p_I$, as well as integers~$k$ and $D$}
  {Is there a set~$Z \subseteq X_N$ of size at most~$k$ such that $\APD[N](Z)\geq D$.}

  \subparagraph*{Useful Observations.}

  Note that partial switchings can be ``completed'' by switching all remaining reticulations.
  We can observe that, for different switchings, the set of such completions are disjoint and,
  thus, their contribution to the \APDlong is additive.

  \begin{observation}\label{obs:combine-swis}
    Let~$Z\subseteq X_N$ and 
    let~$\mathfrak{S}$ be a partition of $\swis(N)$.
    Then,
    \begin{align*}
    	\APD[N](Z) = \sum_{\Sigma\in\mathfrak{S}}\sum_{\sigma\in\Sigma} P(\sigma)\cdot\PD[\sigma](Z).
    \end{align*}
  \end{observation}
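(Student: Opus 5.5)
This follows directly from the definition in \Cref{eq:APD}. By definition, $\APD[N](Z)=\sum_{\sigma\in\swis(N)} P(\sigma)\cdot\PD[\sigma](Z)$ is a finite sum over the set $\swis(N)$. The plan is to regroup its terms according to the blocks of the partition~$\mathfrak{S}$. We use only commutativity and associativity of finite addition. No property of $P$, $\PD$, or the network is needed beyond the fact that each summand is a well-defined rational number attached to a single switching.

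First, $\swis(N)$ is finite: each reticulation has finitely many incoming edges, so there are at most $\prod_r \indeg(r)$ switchings. Hence every sum involved is a finite sum of rationals, and any reordering or regrouping is legitimate. Second, since $\mathfrak{S}$ is a partition of $\swis(N)$, its blocks are pairwise disjoint and their union is $\swis(N)$. So every $\sigma\in\swis(N)$ lies in exactly one block $\Sigma\in\mathfrak{S}$. Summing the term $P(\sigma)\cdot\PD[\sigma](Z)$ first within each block and then over the blocks therefore counts each term exactly once. This gives
\[
\sum_{\sigma\in\swis(N)}P(\sigma)\cdot\PD[\sigma](Z)=\sum_{\Sigma\in\mathfrak{S}}\sum_{\sigma\in\Sigma}P(\sigma)\cdot\PD[\sigma](Z).
\]

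There is no real obstacle here. The only point worth stating explicitly is the intended use of this observation: the blocks will typically be the sets of completions of distinct partial switchings. So one should check that such completion sets really are pairwise disjoint and cover $\swis(N)$. This holds because a full switching $\sigma$ restricts to a unique partial switching on $R$, namely the edges of $\sigma$ entering nodes of $R$. Grouping switchings by this restriction is therefore a genuine partition, and the identity above applies to it.
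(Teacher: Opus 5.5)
Your proof is correct and takes essentially the same approach as the paper, which gives no formal proof and only remarks that disjoint sets of switchings contribute additively to the finite sum in \eqref{eq:APD}; your regrouping over the blocks of $\mathfrak{S}$ is exactly this. Your closing remark about completions of partial switchings is accurate, but it belongs to \cref{obs:combine-reti-swis} rather than to this observation.
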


  \begin{observation}\label{obs:combine-reti-swis}
    Let~$Z$ be a set of taxa and
    let~$R$ be a set of reticulations of $N$.
    Then, $\{\swis(\sigma_R) \mid \sigma_R\in\swis_R(N)\}$ is a partition of $\swis(N)$ and
    $\APD[N](Z) = \sum_{\sigma_R\in\swis_R(N)} \APD[\sigma_R](Z)$.
  \end{observation}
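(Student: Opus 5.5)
The plan is to prove the partition claim first, and then get the formula from \cref{obs:combine-swis} applied to that partition. Throughout, a (partial) switching is identified with its choice function: for every reticulation $r$ handled so far, the single incoming edge of $r$ that is kept. The network $\sigma_R$ is again a DAG, and it carries the edge-lengths~$\w$ and inheritance probabilities~$p_I$ restricted from~$N$. Its switchings $\swis(\sigma_R)$ are obtained by switching the remaining reticulations of $N$, namely those not in~$R$. These are exactly the reticulations of $\sigma_R$, because every node of $R$ now has in-degree one.

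\textbf{Partition.} Let $\sigma\in\swis(N)$. For each $r\in R$, $\sigma$ keeps exactly one incoming edge of $r$. Keeping only these choices defines a unique $\sigma_R\in\swis_R(N)$, and $\sigma$ is obtained from $\sigma_R$ by switching the remaining reticulations. Hence $\sigma\in\swis(\sigma_R)$, and $\sigma_R$ is the only element of $\swis_R(N)$ with this property: two distinct partial switchings differ in the kept edge at some $r\in R$, so their completions differ at~$r$. Conversely, every $\sigma\in\swis(\sigma_R)$ keeps exactly one incoming edge at every reticulation of $N$, so it is a switching of $N$. Each class $\swis(\sigma_R)$ is nonempty, since reticulations outside $R$ keep all their in-edges in $\sigma_R$. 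Therefore $\{\swis(\sigma_R)\mid \sigma_R\in\swis_R(N)\}$ is a partition of $\swis(N)$.

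\textbf{Formula.} By \cref{obs:combine-swis} applied to this partition,
\[
\APD[N](Z)=\sum_{\sigma_R\in\swis_R(N)}\;\sum_{\sigma\in\swis(\sigma_R)} P(\sigma)\cdot\PD[\sigma](Z).
\]
It remains to show that the inner sum equals $\APD[\sigma_R](Z)$, that is, that $P(\sigma)$ and $\PD[\sigma](Z)$ do not depend on whether $\sigma$ is viewed as a switching of $N$ or of~$\sigma_R$. The value $\PD[\sigma](Z)$ depends only on the graph~$\sigma$ and on~$\w$, so it is the same in both views. The probability $P(\sigma)=\prod_{v}\sum_{uv\in E(\sigma)}p_I(uv)$ also depends only on the edge set of~$\sigma$ and on the restricted~$p_I$. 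In particular, for $r\in R$ the factor is the $p_I$-value of the kept edge in both views.

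\textbf{Main obstacle.} The only delicate point is definitional. The inheritance probabilities on $\sigma_R$ are no longer normal, since at $r\in R$ they sum to the $p_I$-value of the kept edge. So $\APD[\sigma_R]$ must be read with the general formula~\eqref{eq:APD}, using non-normalized~$p_I$ and the paper's definition of~$P$. With that reading, the weights of the switchings of $\sigma_R$ automatically include the factor $P(\sigma_R)$, the identity holds term by term, and no renormalization is needed.
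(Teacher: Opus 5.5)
Your proof is correct and follows essentially the same route as the paper, which only justifies this observation in the remark before it: completions of distinct partial switchings are disjoint and together give all of $\swis(N)$, so the sum regroups by \cref{obs:combine-swis}. You insist that $\APD[\sigma_R]$ be evaluated with the unnormalized restriction of $p_I$, so that the factor $P(\sigma_R)$ is absorbed into it; this is exactly the reading under which the statement holds as written, and it is the one used in the proof of \cref{lem:blob}, whereas the proof of \cref{lem:bcc} tacitly switches to a renormalized reading by writing $P(\sigma_I)\cdot\APD[\sigma_I](X_N)$.
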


	\section{Max-APD Parameterized by Scanwidth}\label{sec:sw-hardness}
  Many questions in algorithmic phylogenetics on networks are expected to become tractable
  when the input network is tree-like in the sense that it has a small scanwidth~\cite{bordewichNetworks,Bruchhold24,HvIJ24,MAPPD25}.
  In this section, we show that this holds only partially for optimizing the \APDlong of an input network.
  In particular, it turns out to be \NP-hard to decide whether a given diversity can be reached by selecting $k$~leaves,
  even on binary networks of scanwidth~3,
  but it is still tractable on networks with scanwidth at most~2.
  On the other hand, it is possible to compute the \APDlong of the complete set~$X_N$ of all taxa in $\Oh(2^{\sw} n)$~time,
  where $\sw$ is the scanwidth of the input network and $n$ is the number of taxa,
  even though this task is \#P-hard, in general~\cite{van2025average}.
	
  We first observe that \MaxAPD can be solved in polynomial time on networks of scanwidth~$\leq 2$.
	
  \begin{proposition}[\cite{holtgrefeThesis}]\label{prop:categorize-networks-sw=2}
		Let $N$ be a phylogenetic network.
    Then,
    \begin{enumerate}[(a)]
			\item the scanwidth of N is~$\leq 1$ if and only if N is a tree~\cite[Prop 3.30]{holtgrefeThesis} and
			\item the scanwidth of N is~$\leq 2$ if and only if N is a level-1-network~\cite[Prop 3.31]{holtgrefeThesis}.
		\end{enumerate}
	\end{proposition}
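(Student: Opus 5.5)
We prove the two characterizations separately, each by an explicit tree-extension for one direction and a forced large bag for the other.

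\emph{Part (a).} If $N$ is a tree, take $\Gamma:=N$. For every $v$, the set $\GW(v)$ contains exactly the edges $uw$ with $u\abov[N] v\aboveq[N] w$. In a tree the only such edge is the unique incoming edge of $v$, so the width is at most~$1$.

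Conversely, suppose $N$ has a reticulation $r$ with parents $u_1\neq u_2$, and let $\Gamma$ be any tree-extension. Both $u_1$ and $u_2$ are $\Gamma$-ancestors of $r$, and $u_1r,u_2r$ both have head~$r$. Hence both lie in $\GW(r)$, so $|\GW(r)|\geq 2$ and the scanwidth is at least~$2$.

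\emph{Part (b), ``if''.} Let $N$ be level-1. Build $\Gamma$ bottom-up along the biconnected components. In a tree-like component, use the tree order. A level-1 cycle consists of a root $s$, two tree-paths $s\to\dots\to a$ and $s\to\dots\to b$, and a reticulation $r$ with parents $a$ and $b$. Linearize it in $\Gamma$ as $s$, then the first side down to $a$, then the second side down to $b$, then $r$; hang the pendant subnetworks of each cycle node as side subtrees in~$\Gamma$.

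Now count the edges passing each node $v$ of the chain.
\begin{itemize}
\item On the first side, $\GW(v)$ contains $v$'s incoming cycle edge and the edge leaving $s$ on the other side. That is~$2$.
\item On the second side, it contains $v$'s incoming edge and the edge $ar$. That is again~$2$.
\item At $r$, it contains $ar$ and $br$.
\end{itemize}
Inside a pendant subtree only edges local to that subtree pass, because cycles are edge-disjoint and hang off cut-vertices. The main care point is checking this last claim, namely that no edge from another component ever passes such a node. It holds because every component meets the rest only at its root cut-vertex.

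\emph{Part (b), ``only if''.} Suppose some biconnected component $B$ contains two reticulations, and let $\Gamma$ be any tree-extension. We must show some bag has size at least~$3$; this is the main obstacle. Since $B$ is biconnected, we aim to find a reticulation $r$ in $B$ such that, at $r$, besides its two incoming edges, some third edge of $B$ still passes.

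The first attempt uses a $\Gamma$-lowest reticulation $r$ of $B$. Its two incoming edges are in $\GW(r)$. If they were the only passing edges, then the subtree of $\Gamma$ at $r$ would be cut off from the rest of $N$ by the two edges into $r$ alone. Since $r$ has out-degree~$\le 1$, we should get that $r$, or its child, separates the part below from the rest, contradicting the presence of a second reticulation in the biconnected $B$ below or around it.

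Equivalently, we argue via edge cuts. $\GW(v)$ is exactly the set of edges entering the $\Gamma$-subtree at $v$ from outside. Choose $v$ as the $\Gamma$-lowest node whose subtree contains both reticulations' in-edges in $B$, or else contains a full cycle. Biconnectivity then forces at least three edges of $B$ entering this subtree, via the standard fact that a $2$-edge cut in a level-$\geq2$ biconnected component cannot isolate a part containing a reticulation with both parents outside.

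The key lemma I would prove first is this: for any node set $S$ closed under descendants in $N$ with $|\text{in-edges}(S)|\le 2$, $S\cap B$ contains at most one reticulation. Given this lemma, a minimal $\Gamma$-subtree containing two reticulations of $B$ yields width $\geq 3$.
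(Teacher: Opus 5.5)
The paper contains no proof of this proposition; it quotes it from Holtgrefe's thesis, so your proposal has to stand on its own. Part~(a) is correct, and so is your construction for the ``if'' direction of~(b). The ``only if'' direction of~(b), however, is not proved.

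\textbf{The first attempt gives no contradiction.} Suppose $\GW(r)$ consists only of the two in-edges of a $\Gamma$-lowest reticulation~$r$ of~$B$. Then the out-edge of~$r$ is a bridge, and $B$ meets the $\Gamma$-subtree of~$r$ only in~$r$. That is perfectly consistent with $r$ being $\Gamma$-lowest.

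\textbf{Your key lemma is false.} Trivially, $S=V(N)$, or the set of descendants of the root of~$B$, is closed under descendants, has at most one entering edge, and contains every reticulation of~$B$. Excluding the root of~$B$ does not rescue it. Take edges $\rho a,\rho b,ar_1,br_1,r_1c,cr_2,br_2$, plus leaves $z,x,y$ below $a,c,r_2$.
\begin{itemize}
\item $\{\rho,a,b,r_1,c,r_2\}$ is a single biconnected component with two reticulations.
\item $S=\{b,r_1,c,r_2,x,y\}$ is entered only by $\rho b$ and $ar_1$.
\item $S$ is even the $\Gamma$-subtree of~$b$ in the tree-extension with spine $\rho,a,b,r_1,c,r_2$, where $|\GW(b)|=2$.
\end{itemize}
So the size of the cut of a descendant-closed set cannot by itself force width~$3$. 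The step ``given this lemma, a minimal $\Gamma$-subtree containing two reticulations yields width $\ge 3$'' is therefore unsupported. The appeal to a ``standard fact'' about $2$-edge cuts is not an argument either.

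\textbf{What is missing is an argument that genuinely uses minimality.} Assume every reticulation has in-degree~$2$; otherwise $|\GW(r)|\ge 3$ immediately. Let $s_B$ be the root of~$B$, which is an ancestor of all nodes of~$B$. Let $v$ be $\Gamma$-minimal such that its $\Gamma$-subtree $S_v$ contains two reticulations of~$B$. Note that no edge of~$N$ joins the subtrees of two distinct $\Gamma$-children of~$v$.

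\emph{Case $s_B\in S_v$.} Then $B\subseteq S_v$, and $v$ is not a reticulation of~$B$. By minimality, two reticulations lie below different $\Gamma$-children of~$v$. This makes $v$ a cut vertex of~$B$, or makes $B$ disconnected if $v\notin B$; either way a contradiction.

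\emph{Case $s_B\notin S_v$ and $|\GW(v)|\le 2$.} Since $B$ is $2$-edge-connected, $\GW(v)$ consists of exactly two edges of~$B$. Now consider what $v$ can be.
\begin{itemize}
\item $v\notin B$ is impossible: the in-edge of~$v$ would be a third entering edge.
\item $v$ a reticulation is impossible: $\GW(v)$ would be its two in-edges, so its out-edge is a bridge and $S_v\cap B=\{v\}$.
\item $v$ a tree node of~$B$: then $\GW(v)$ consists of the in-edge of~$v$ and one further edge~$f$. Every $\Gamma$-child subtree of~$v$ not containing the head of~$f$ is attached to the rest of~$B$ only through~$v$. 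Biconnectivity then puts both reticulations below a single $\Gamma$-child, contradicting minimality.
\end{itemize}

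\textbf{A caveat on the ``if'' direction.} With the paper's definitions (non-binary reticulations allowed, level counting reticulations per biconnected component), the ``if'' direction of~(b) is false as literally stated. A single reticulation with three parents already forces $|\GW(r)|\ge 3$ in every tree-extension. Your description of a level-1 cycle tacitly assumes in-degree~$2$, and you should state that assumption explicitly.
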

	
  Since \MaxAPD can be solved in polynomial time on networks with constant level~\cite{van2025average}%
	\footnote{The paper only shows fixed-parameter tractability for the number of reticulations,
  but it can be improved to the network's level by considering biconnected components independently.},
	\cref{prop:categorize-networks-sw=2} implies the following.
	
	\begin{corollary}
		\MaxAPD is polynomial-time-solvable on networks of scanwidth at most~2.
	\end{corollary}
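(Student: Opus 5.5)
The corollary follows by combining \cref{prop:categorize-networks-sw=2} with the algorithm of~\cite{van2025average}, provided that algorithm is lifted from the total number of reticulations to the level. Given an instance $(N,\w,p_I,k,D)$ of \MaxAPD with the scanwidth of $N$ at most~2, \cref{prop:categorize-networks-sw=2}(b) tells us that $N$ is a level-1 network. Every biconnected component therefore contains at most one reticulation. It remains to show that \MaxAPD is polynomial-time solvable on networks of constant level; for level~1 this is even simpler than the general case.

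\textbf{Reduction to biconnected components.} The algorithm of~\cite{van2025average} runs in time $2^{\Oh(r)}n^{\Oh(1)}$. I would structure a dynamic program over the tree of biconnected components (the block-cut tree, rooted at the component containing the root of~$N$).

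\begin{itemize}
\item In a level-1 network each non-trivial component is a single ``cycle'': two directed paths from a source node~$s$ to the unique reticulation~$r$, with pendant subnetworks hanging off the tree-nodes of these paths.
\item The inheritance probabilities at~$r$ affect only the edges of that cycle. By \cref{obs:combine-reti-swis} we may branch on the two incoming edges of~$r$ with weights $p_I$. Each branch yields a tree locally, and these choices are independent across components because switchings factor as a product.
\item Hence $\edgeproba{e}{Z}$ for an edge~$e$ on a cycle depends only on which pendant subnetworks below~$e$ (within the cycle) contain selected leaves. For cycle edges below~$r$, it depends only on whether $r$'s offspring intersects~$Z$.
\end{itemize}

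\textbf{The dynamic program.} For each node~$v$ and each budget $k'\le k$, I would compute the maximum APD contribution of edges below~$v$ achievable with $k'$ selected leaves among $\off_N(v)$. For a cycle rooted at~$s$, I would augment the table with a Boolean flag recording whether some selected leaf lies below~$r$. The subtree of~$r$ is solved once, for both flag values. A knapsack-style combination along each of the two cycle paths then handles the pendant subtrees. The contribution of each cycle edge is computed from the flag and from whether a selected leaf lies below it on its own path; this gives either $1$, or $p_I$ of the corresponding reticulation edge, in the appropriate cases. Tree nodes are combined by the standard knapsack merge over children, giving $\Oh(nk^2)$ time overall.

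\textbf{Main obstacle.} The delicate step is arguing that the expected-diversity formula~\eqref{eq:APD2025} decomposes additively across biconnected components. Concretely, one must show that $\edgeproba{uv}{Z}$ depends only on the switching restricted to reticulations of the components between~$uv$ and~$Z$. Along such a chain, the relevant events are ``some path reaches~$Z$'', and at cut-vertices these events factor. I would prove this by induction on the block-cut tree, using \cref{obs:combine-reti-swis} to split over the single reticulation of each component.

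Once this decomposition holds, the level bound of~1 makes each local computation constant-size. The footnoted extension of~\cite{van2025average} to constant level then gives the polynomial bound directly.
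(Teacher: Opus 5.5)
Your proposal is correct and takes the paper's route: \cref{prop:categorize-networks-sw=2}(b) reduces the problem to level-1 networks, and the level-lifted version of the algorithm of~\cite{van2025average} (the paper's footnote) completes the argument. Your direct knapsack DP over cycles goes further than the paper and makes that footnote concrete for level~1; note that the ``two paths from $s$ to $r$'' shape needs every reticulation to have in-degree at most~2, which holds here because all incoming edges of a reticulation $r$ lie in $\GW(r)$ for any tree-extension, so scanwidth $\le 2$ forces $\indeg[N](r)\le 2$.
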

	
  \noindent
  Unfortunately, this positive result cannot be extended to phylogenetic networks of scanwidth~3.
\newcommand{\propswNP}[1]{
	\begin{theorem}
		#1
		\MaxAPD remains \NP-hard on binary networks of scanwidth~3.
		\todo{JS: The result holds even on temporal networks. We should not mention that in a conference version, but incorporate it into the journal version.}
	\end{theorem}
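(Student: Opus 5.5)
The plan is a polynomial-time reduction from a numeric problem into a ``caterpillar-like'' network whose tree-extension is essentially a path. A network of bounded scanwidth cannot host an arbitrary graph structure, since every bag of the tree-extension may contain only three edges. So the combinatorial hardness has to come from the numbers: edge weights and inheritance probabilities. The key feature to exploit is that, by \eqref{eq:APD2025}, the contribution of an edge~$uv$ is $\w(uv)\cdot\edgeproba{uv}{Z}$. For an edge lying above several independent reticulation gadgets, $1-\edgeproba{uv}{Z}$ is a \emph{product} of the probabilities that no selected leaf below is reached. This makes the objective multiplicative in the chosen leaves. I would therefore reduce from \PROB{Product Knapsack}: given items with weights $a_i$ and values $b_i$, a budget, and a target, decide whether some subset obeys the budget and has product of values at least the target. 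This problem is known to be (weakly) \NP-hard. An additive knapsack-type source problem is a fallback if the product version resists encoding.

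Construction. I would take a backbone path $\rho=s_0\to s_1\to\dots\to s_n$. Below each $s_i$ I would attach a small binary gadget containing one reticulation~$r_i$ with a leaf~$x_i$, where one parent of~$r_i$ lies on the backbone and the other on a private branch. The gadget's inheritance probabilities $p_I$ are chosen so that, in a random switching, the selected leaf~$x_i$ ``reaches back'' to a designated heavy edge $e^\ast$ near the root with probability $q_i$. This heavy edge $e^\ast$ is the only edge whose coverage depends jointly on all choices, and it should contribute roughly $\w(e^\ast)\cdot\bigl(1-\prod_{i\in I}(1-q_i)\bigr)$. All other edges are private to a single gadget, so their total contribution is additive, roughly $\sum_{i\in I} c_i$. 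I would set the private weights to encode the knapsack budget, either as a large penalty offset or by selecting exactly~$k$ leaves and adding filler leaves of fixed diversity. I would set the $q_i$ so that $\prod(1-q_i)$ encodes the product of the values~$b_i$. Rational probabilities, and logarithmic or scaled weights, must be chosen so that the encoding is polynomial in size. This also requires a rounding argument: one has to approximate products of rationals without losing exactness at the threshold. I would handle it by scaling all weights by a common denominator, so that the target $D$ is an exact rational of polynomial bit-length.

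Correctness and scanwidth. For correctness I would compute $\edgeproba{uv}{Z}$ edge by edge, using \cref{obs:combine-reti-swis} to condition on the switching of each~$r_i$ independently. The two directions then follow from the resulting closed formula for $\APD(Z)$. For the width bound I would give an explicit tree-extension~$\Gamma$: the backbone forms a path in~$\Gamma$, and each gadget hangs below $s_i$ with its reticulation placed directly under its lower parent. I would then check that every $\GW(v)$ contains at most the current backbone edge plus the two edges entering~$r_i$, hence at most~3 edges. Binarity is ensured by refining any higher-degree node into a caterpillar of degree-3 nodes with zero-weight edges.

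Main obstacle. The hard step is designing the gadget so that the interaction between gadgets is exactly the clean product on $e^\ast$. In particular, paths from one selected leaf must not cover backbone edges in a way that couples with other gadgets beyond the intended product. I would try first to route every reticulation's second parent to a single shared ``upper'' tree-path. With only one shared path, each bag contains at most one shared edge, so the width stays at~3 while the multiplicative coupling is confined to that path.
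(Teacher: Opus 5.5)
Your network can be made to work, but there is a genuine gap at the core of the reduction: the choice of source problem and how its constraints are encoded. Suppose you put the second parent of each $r_i$ on the shared upper path. It must lie outside the part of $N$ below $e^\ast$, since otherwise every chosen $x_i$ reaches $e^\ast$ with probability~1. Suppose also that you give weight~0 to all edges except $e^\ast$ and the gadget edges, and that no leaf below $e^\ast$ is reachable by a tree-path. Then the APD of $\{x_i\mid i\in I\}$ is exactly $\w(e^\ast)\bigl(1-\prod_{i\in I}(1-q_i)\bigr)+\sum_{i\in I}c_i$, and the ladder has width~3. So the obstacle you single out is not the real one.

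The real obstacle is a mismatch of constraints. \MaxAPD offers only a cardinality bound and one threshold $D$ on this single function. Product Knapsack has a weighted budget $\sum_{i\in I}a_i\le C$ and a separate target on $\prod_{i\in I}b_i$. Putting the $a_i$ into the $c_i$ as penalties (your ``large penalty offset'') merges the two criteria into one number. In general, a set that violates the budget but has a much larger product, or vice versa, then lands on the wrong side of $D$, and filler leaves do not change this. To rescue your route you would need the following:
\begin{itemize}
\item a source problem with an exact target, such as Subset Sum with a cardinality constraint;
\item parameters $1-q_i\approx e^{-\lambda a_i}$ and $c_i=K-\mu a_i$, chosen so that $-\mu s-\w(e^\ast)e^{-\lambda s}$ with $s=\sum_{i\in I}a_i$ is maximized exactly at the target sum;
\item a proof that rational approximations of polynomial bit-length preserve the gap between $s=C$ and $s=C\pm 1$.
\end{itemize}
None of this is in the proposal, and scaling by a common denominator does not address it.

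The paper avoids numerics entirely by reducing from \ucNAP. That problem is already \NP-hard, and its objective $\EPD(S)=\sum_{e}\w(e)\bigl(1-\prod_{x\in S\cap\off_T(e)}(1-p_S(x))\bigr)$ is literally an APD. The construction takes the input tree $T$ and a copy $T'$ sharing the root. For each taxon $x$ it adds:
\begin{itemize}
\item a reticulation $\gamma_x$ with parents $x$ and $x'$ and inheritance probabilities $p_S(x)$ and $1-p_S(x)$;
\item a leaf $\delta_x$ below $\gamma_x$ with weight $M-1$;
\item a leaf $\alpha_x$ below $x$ with weight~1;
\item a leaf $\beta_x$ below $x'$ with weight $2M$;
\end{itemize}
where $M=\w_\Sigma(E)$. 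With $k'=k+|X_N|$ and $D'=D+M(k+2|X_N|+1)$, every solution must take all $\beta_x$, so $T'$ contributes the constant $M$. It must also take exactly $k$ leaves $\delta_x$, and the edges of $T$ then contribute exactly $\EPD$ of the chosen taxa. Subdividing each edge $uv$ of $T$ by $v'$ gives a tree-extension of width~3.

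Your upper path plays the role of $T'$. The difference is that the paper keeps the whole weighted tree $T$ and inherits the hardness, numerics included, from \ucNAP. You instead collapse everything onto one edge and would have to reprove hardness for that special case yourself.
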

}
\propswNP{\label{thm:const-sw}}
  \noindent
  To prove \cref{thm:const-sw}, we reduce the following problem to \MaxAPD.
    
  \problemdef{\ucNAP}
  {An edge-weighted binary phylogenetic tree~$T = (V,E)$ on leaf-set~$X_N$, survival-probabilities~$p_S: X_N\to(0,1]$, and integers~$k,D \in \NN$}
  {Is there a set~$S$ of~$k$ taxa such that the \emph{expected phylogenetic diversity}~$\EPD(S)$ is at least~$D$?}
  Herein, the expected phylogenetic diversity~$\EPD(S)$ is the sum of expected edge weights when selected taxa survive with their respective probability.
  More formally,
  \begin{equation}\label{eq:EPD}
    \EPD(S) := \sum_{e\in E} \w(e) \cdot \bigg( 1 - \;\smashoperator[lr]{\prod_{x\in S \, \cap \, \off_T(e)}}\; (1-p_S(x)) \bigg).
  \end{equation}
  We note that the survival-probabilities in the formulation of \ucNAP capture the idea that,
  even if saving a species is intended, maybe certain survival cannot be ensured.
  This leads to the notion of \emph{expected} surviving phylogenetic diversity of a set of leaves,
  which is not to be confused with the expected phylogenetic diversity score of a switching, used throughout the paper.
  
  Note that \ucNAP is
  \ifJournal
    \NP-hard~\cite{GNAP,van2024maximizing} even on trees of height 2 and ultrametric trees of height~3.
    It is easy to see that the hardness also holds for binary trees.
  \else
    \NP-hard~\cite{GNAP,van2024maximizing} and
    it is easy to see that the hardness also holds for binary trees.
  \fi
  We use this in the following reduction.
  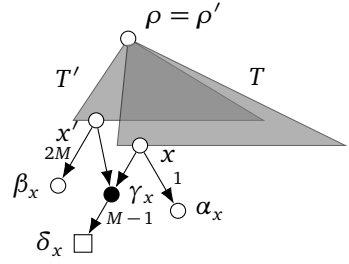
\begin{figure}[t]
    \centering
    \begin{tikzpicture}
      \node[smallvertex, label=10:{$\rho=\rho'$}] (rt) at (0,0) {};

      \foreach[count=\i] \otherchild/\weight/\pri in {\beta/2M/', \alpha/1/} {
      \node at ($(rt)+(.5,-.5)+(2.5*\i-3.8,0)$) {$T\pri$};
        \coordinate (LL\i) at ($(rt) + (-150:1.5) + (-30:\i/1.5)$);
        \draw[fill=gray, opacity=0.6] (rt) -- (LL\i) -- ($(LL\i) + (2+\i/2,0)$) -- (rt);
        \node[smallvertex] (x\i) at ($(LL\i)+(.3,0)$) {};
        \node at ($(x\i) + (-300 + 140*\i:.4)$) {$x\pri$};
        \node[smallvertex, label=180*\i:{$\otherchild_x$}] (b\i) at ($(x\i) + (180+60*\i:1)$) {} edge[revarc] node[xshift=-21 + 14*\i] {{\scriptsize $\weight$}} (x\i);
      }

      \node[smallreti, label=right:$\gamma_x$] (gamma) at ($(x2)+(-120:.75)$) {} edge[revarc] (x1) edge[revarc] (x2);
      \node[smallleaf, label=left:$\delta_x$] (delta) at ($(gamma)+(-120:.75)$) {} edge[revarc] node[xshift=12] {{\scriptsize $M-1$}} (gamma);
    \end{tikzpicture}
    \caption{Illustration of \cref{cons:const-sw}. Semitransparent gray trees are copies of the input tree~$T$ of \ucNAP.
    Edge-weights are written next to the edges and
    the two reticulation edges have inheritance probability~$p_S(x)$ and $1-p_S(x)$, respectively.}
    \label{fig:const-sw}
  \end{figure}

  \begin{construction}\label{cons:const-sw}
    Given an instance~$\Instance = (T=(V,E),p_S,k,D)$ of \ucNAP with
    survival probabilities~$p_S$,
    leaf-set~$X_N$,
    root~$\rho$ and
    $D>1$,
    we construct the following instance~$\Instance':=(N, k', D')$ of \MaxAPD.
    First, we construct a network~$N$ with inheritance probability~$p_I$.
    To this end,
    let~$\rho':=\rho$,
    let~$M := \w_\Sigma(E)$, and
    duplicate $T$ in the following sense:
    \begin{enumerate}
      \item
      Add a node~$v'$ for each node~$v \in V\setminus\{\rho\}$
        and add an edge~$u'w'$ for each edge~$uw \in E$, setting~$\w(u'w'):=\w(uw)$.
        Let $E'$ denote the resulting edge-set.
      \item
      For each taxon~$x\in X_N$, add a gadget of four nodes~$Y_x:=\{\alpha_x, \beta_x, \gamma_x, \delta_x\}$ and
        edges~$x\gamma_x$ and $x'\gamma_x$ with weight~1 and inheritance probabilites~$p_I(x\gamma_x):=p_S(x)$ and $p_I(x'\gamma_x):=1-p_S(x)$,
        and edges~$x\alpha_x$ with weight~1, $\gamma_x\delta_x$ with weight~$M-1$, and~$x'\beta_x$ with weight~$2M$.
        Note that $\gamma_x$ is a reticulation with parents~$x$ and $x'$.
    \end{enumerate}
    Finally, set~$k' := k+|X_N|$ and~$D' := D + M\cdot (k + 2\cdot|X_N| + 1)$.
  \end{construction}

\newcommand{\proofPropswNP}[2]{
  \begin{proof}#1
    \looseness=-1
    Let~$\Instance':=(N, k', D')$ be an instance of \MaxAPD resulting
    from applying \cref{cons:const-sw} to an instance~$\Instance = (T=(V,E),k,D)$ of \ucNAP
    where~$T$ is a binary tree with leaf-set~$X_N$.
    Let $E'$ be defined as in \cref{cons:const-sw}.
    Then, it is easy to see that $N$ is binary.
    To show that $N$ has scanwidth three, we construct a tree-extension $\tau$ for N by
    first starting with~$T$ itself,
    then subdividing each edge~$uv$ with a new node~$v'$ and,
    finally, for each leaf~$x$ add child~$\beta_x$ to~$x'$, children~$\alpha_x$ and~$\gamma_x$ to~$x$, and child~$\delta_x$ to~$\gamma_x$.
    Since $T$ is a tree, for each node~$v\in V(T)$ with parent~$u$ and children~$w_1$ and~$w_2$,
    the bag~$\GW(v')$ contains two edges, $uv$ and~$u'v'$ where, and~$\GW(v)$ contains three edges, $uv$, $v'w_1'$, and~$v'w_2'$.
    For each leaf $x$ with parent~$v$, the bag~$\GW(x')$ contains edges~$v'x'$ and~$vx$, $\GW(x)$ contains~$vx$ and~$x'\gamma_x$, $\GW(\gamma_x)$ contains~$x'\gamma_x$ and~$x\gamma_x$, the other three bags contain only their incoming edge.
    Consequently, $\tau$ has scanwidth~3.
	
	#2
  \end{proof}
}
\newcommand{\proofPropswNPfull}{
    \medskip\noindent
    It remains to show that $\Instance \in \ucNAP$ if and only if~$\Instance' \in \MaxAPD$.

    ``$\Rightarrow$'':
		Let~$S := \{x_1,\dots,x_k\}$ be a solution for \ucNAP on~$\Instance$ and
    let~$S' := \{ \delta_x \mid x\in S\} \cup \{ \beta_x \mid x\in X_N \}$,
    noting that $|S'|=k+|X_N| = k'$.
		We will consider edges individually.
    First, the edges incoming to~$S'$ contribute~$k\cdot (M-1) + |X_N|\cdot 2M$ to the overall diversity.
    Further, each edge in~$E'\setminus E$ has a path without any reticulation to a leaf in~$\{ \beta_x \mid x\in X_N \} \subseteq S'$,
    totalling a score of~$\sum_{e\in E}\w(e) = M$.
    Further, for each $x\in S$, the two edges incoming to~$\gamma_x$ contribute~$p_S(x) + (1-p_S(x)) = 1$, totalling a score of $k$.
    Thus, the total contribution so far is~$k\cdot (M-1) + |X_N|\cdot 2M + M + k = M\cdot (k + 2|X_N| + 1)$.
		
    The remaining edges are the edges in $E$.
    For these edges~$e$, we have~$\nhaspath[\sigma]{e}{Z}$ if $\sigma$ does not contain~$x\gamma_x$ for any~$x\in S\cap\off_T(e)$,
    the probability of which is~$\prod_{x\in S \,\cap\, \off_T(e)} (1-p_S(x))$.
		We thus conclude
		\begin{eqnarray*}
			\APD(S')
      &\stackrel{\eqref{eq:APD2025}}{=}& \smashsum{e \in E(N)} \w(e)\cdot \smashsum{\text{switching }\sigma}\; P(\sigma) \cdot \haspathzo[\sigma]{e}{S'}\\
      &=& M\cdot (k + 2\cdot|X_N| + 1) + \sum_{e \in E} \w(e)\cdot \smashsum{\text{switching }\sigma}\; P(\sigma) \cdot \haspathzo[\sigma]{e}{S'}\\
      &\stackrel{\sum P(\sigma)=1}{=} & M\cdot (k + 2\cdot|X_N| + 1) + \sum_{e \in E} \w(e)\cdot \left(1- \smashsum{\text{switching }\sigma}\; P(\sigma) \cdot \nhaspathzo[\sigma]{e}{S'}\right)\\
      &=& M\cdot (k + 2\cdot|X_N| + 1) + \sum_{e \in E} \w(e)\cdot \left(1 - \smashoperator[lr]{\prod_{x \in S \,\cap\, \off(e)}}\; (1 - p_S(x))\right)\\
      &\stackrel{\eqref{eq:EPD}}{=}& M\cdot (k + 2\cdot|X_N| + 1) + \EPD(S)\\
      &\ge& M\cdot (k + 2\cdot|X_N| + 1) + D = D'.
		\end{eqnarray*}
		
    \medskip
    ``$\Leftarrow$'':
		In the converse direction,
    let~$S$ be a solution of \MaxAPD on~$\Instance'$,
    which contains
    \begin{itemize}
      \item $a_S$ taxa of~$A := \{ \alpha_x \mid x\in X_N \}$,
      \item $b_S$ taxa of~$B := \{ \beta_x \mid x\in X_N \}$, and
      \item $d_S$ taxa of~$D := \{ \delta_x \mid x\in X_N \}$.
    \end{itemize}
    Then, $D+M\cdot (k+2\cdot |X_N|+1) = D' \le \APD(S) \le 2M\cdot b_S + M\cdot d_S + 2\w_\Sigma(E) + a_S = a_S + M \cdot (2b_S + d_S + 2)$.
    Now, if $b_S+d_S < |X_N| + k$, then the right-hand-side is at most $1 + M\cdot (2|X_N| + k + 1) < D'$ since $D > 1$.
    But then, since~$a_S + b_S + d_S = |S| \le k' = k+|X_N|$ and $\max\{a_S,b_S,d_S\}\leq |X_N|$, we conclude that $b_S=|X_N|$ and $d_S=k$ and $a_S=0$.
		We claim that~$S' := \{ x \mid \delta_x \in S \}$ is a solution of~\ucNAP on~\Instance.
		Clearly, $|S'| = |S\cap D| = d_S = k$ and
    $\APD(S)$ can be partitioned into
    \begin{itemize}
      \item $b_S\cdot 2M$, scored on the edges~$x'\beta_x$,
      \item $d_S\cdot (M-1)$, scored on the edges~$\gamma_x\delta_x$,
      \item $d_S$, scored on the edges~$x'\gamma_x$ and $x\gamma_x$, exactly one of which is in every switching,
      \item $\w'_\Sigma(E')$, scored on the edges of $T'$, since $\beta_x\in S$ for all~$x\in X_N$, and
      \item $\EPD[T](S')$, scored on the edges of $T$ since $\alpha_x\notin S$ for all $x\in X_N$.
    \end{itemize}
    Thus, $\EPD[T](S') = \APD(S) - b_S\cdot 2M - d_S \cdot M - \w'_\Sigma(E') = D' - M\cdot (2|X_N| + k + 1) = D$.
}
\proofPropswNP{}{\proofPropswNPfull}

	\section{Computing \APDX Parameterized by Scanwidth}\label{sec:sw-DP}
  While \cref{thm:const-sw} shows that optimizing the \APDlong is probably intractable with respect to the scanwidth of the input,
  we show that computing the \APDlong of any fixed set~$Z \subseteq X_N$ is.
  To this end, fix a set of taxa~$Z$ and an edge~$uv$ and
  observe that~$\nhaspath[\sigma]{uv}{Z}$ for every switching~$\sigma$, if~$\off(v) \cap Z = \emptyset$.
  Consequently, we can delete all nodes~$v$ with~$\off(v) \cap Z = \emptyset$ from~$N$.
  This lets us assume that $N$ contains no such node and computing $\APD[N](X_N)$ is, thus, sufficient.
  
\newcommand\thmswfpt[1]{
	\begin{theorem}[$\star$]
		#1
		Let~$N$ be a network on~$X_N$ with $m$~edges.
		Given a tree extension~$\Gamma$ for~$N$,
		\APDX can be computed in~$\Oh(2^{w_\Gamma} n)$~time,
		where~$w_\Gamma$ is the width of $\Gamma$.
	\end{theorem}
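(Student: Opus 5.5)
By \Cref{eq:APD2025}, $\APD(X_N)=\sum_{uv\in E(N)}\w(uv)\cdot\edgeproba{uv}{X_N}$. I will compute, for every edge $uv$, the probability $\edgeproba{uv}{X_N}$ that a random switching $\sigma$ contains $uv$ and $v$ reaches a leaf in $\sigma$. Since every node of $N$ has an offspring, each tree-node is always reached from a tree child or a reticulation child. So the event ``$\haspath[\sigma]{v}{X_N}$'' is determined by which reticulation children of $v$ pick their incoming edge from $v$, recursively. The plan is a bottom-up dynamic program over $\Gamma$ that tracks the joint distribution of the ``alive'' status of the edges in $\GW(v)$.

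For each node $v$ of $\Gamma$, let $\Gamma_v$ be the subtree rooted at $v$. For each subset $A\subseteq\GW(v)$, define $\ptable{v}{A}$ as the probability, over the partial switching of the reticulations in $\Gamma_v$, that exactly the edges $uw\in A$ are ``alive''. An edge $uw$ is alive if $uw$ is kept in the switching and $w$ reaches a leaf within the switched part below it. The tree-extension property guarantees that every descendant of $w$ in $N$ lies in $\Gamma_v$, so aliveness of $uw$ is determined by randomness inside $\Gamma_v$.

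The randomness in different child subtrees $\Gamma_{c_1},\dots,\Gamma_{c_t}$ of $v$ is independent, because their reticulations are disjoint. Moreover, $\GW(v)$ consists of the in-edges of $v$ together with the edges of $\bigcup_i\GW(c_i)$ whose tail lies strictly above $v$. The transition at $v$ therefore has three steps.
\begin{enumerate}
\item Combine the children's tables by a product over independent subtrees (a subset convolution over disjoint edge sets).
\item Marginalize over the status of the out-edges of $v$, which appear in the children's bags but not in $\GW(v)$. The node $v$ is alive-reaching if and only if some out-edge is alive, or $v$ is a leaf.
\item Set the status of the in-edges of $v$. If $v$ is a tree-node, its unique in-edge is alive exactly when $v$ reaches a leaf. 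If $v$ is a reticulation, exactly one in-edge $u_iv$ is chosen with probability $p_I(u_iv)$, and only that one can be alive.
\end{enumerate}
During step 2, for each out-edge $vw$, I add $\w(vw)\cdot\Pr[vw\text{ alive}]$ to a running total. Since this marginal is read off once the edge leaves the bags, every edge is counted exactly once, and the final total equals $\APD(X_N)$ by \Cref{eq:APD2025}. Root in-edges do not exist, so the answer is complete at the root of $\Gamma$.

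The main obstacle is the running time. A naive product of child tables costs more than $2^{w_\Gamma}$ per node. My fix is to merge children one at a time. Edges in different children's bags are disjoint except for edges sharing a head above $v$, and those are already determined within a single child. So each merge is a pointwise product indexed by the union of the two sets, costing $O(2^{|\text{union}|})\le O(2^{w_\Gamma}\cdot\text{poly})$. One can alternatively binarize $\Gamma$ or note that the union of child bags relates to $\GW(v)$ up to out-edges of $v$.

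To reach the claimed $O(2^{w_\Gamma} n)$, I would note that the DP only needs to branch on the statuses of edges whose state is genuinely uncertain; edges whose heads are tree-nodes are alive deterministically (when kept). I would also charge the work per node to $2^{|\GW(v)|}$ times the node's degree, and bound the number of nodes and edges linearly via the scanwidth bound $m\le n+O(w_\Gamma)\cdot|V|$, or simply accept a degree factor absorbed by suppressing degree-two nodes. Verifying this amortization carefully, especially with unbounded out-degree, is where I expect most of the technical care to go.
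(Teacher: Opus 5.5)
Your correctness argument is sound and is essentially the paper's dynamic program. Your $\ptable{v}{A}$ is exactly the probability of a $(v,A)$-compatible switching, and the product structure over the children of $v$ in $\Gamma$ is what the paper establishes when it decomposes the compatible switchings at $v$. Reading each $\edgeproba{e}{X_N}$ off as a marginal and summing via \eqref{eq:APD2025} is a valid, and simpler, replacement for the paper's $\mathrm{d}$-tables.

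The gap is the running time, which is the actual content of the theorem. You index the intermediate tables by the union of the children's bags. But $\bigcup_{i\le h}\GW(w_i)$ contains all of $\outedges_{\le h}$, and the out-degree of $v$ is not bounded by $w_\Gamma$; only each $|\outedges_i|\le|\GW(w_i)|$ is. For instance, a root with $d$ leaf children has a tree-extension of width~$1$, yet your merged table at the root is indexed by $2^d$ subsets. So the claimed bound $2^{|\mathrm{union}|}\le 2^{w_\Gamma}\cdot\mathrm{poly}$ is false. Charging work to node degrees or suppressing degree-two nodes cannot absorb a factor $2^{\deg^+_N(v)}$.

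The missing idea is to eliminate the out-edges of $v$ immediately after merging each child. You keep only one bit recording whether $v$ already reaches a leaf through $w_1,\dots,w_h$; this is precisely the paper's split into the tables $Q$ and $C$. The intermediate table is then indexed by a subset of $\GW(v)$ plus one bit. Merging $w_h$ needs only $\ptable{w_h}{Y_h}$ and $\sum_{\emptyset\neq Z\subseteq\outedges_h}\ptable{w_h}{Y_h\uplus Z}$. The latter can be precomputed for all $Y_h$ in $\Oh(2^{|\GW(w_h)|})$ total time, since $\GW(w_h)=(\GW(v)\cap\GW(w_h))\uplus\outedges_h$. This gives $\Oh(2^{w_\Gamma})$ per edge of $\Gamma$, hence $\Oh(2^{w_\Gamma}\,|V(N)|)$ overall, with no further amortization.

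To avoid an extra factor $w_\Gamma$ in your accounting, compute $\Pr[y\text{ reaches a leaf}]$ once per node $y$ and use $\edgeproba{uy}{X_N}=p_I(uy)\cdot\Pr[y\text{ reaches a leaf}]$.

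Finally, drop the suggested shortcut that edges into tree-nodes are alive deterministically when kept. A tree-node whose children are all reticulations reaches no leaf in any switching in which each of these reticulations picks another parent. Pruning those statuses would therefore make the DP wrong.
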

}
\thmswfpt{\label{thm:FPT-sw}}

\noindent
We prove \cref{thm:FPT-sw} using a bottom-up dynamic programming over the given tree extension~$\Gamma$.

\subparagraph*{Table Semantics.}
For each node~$v$, we construct two main (and four auxiliary) tables indexed by subsets~$Y\subseteq\GW(v)$.
To define them, we use the following notation.

\begin{definition}[compatible switching]\label{def:compatible}
  Let~$v\in V(N)$,
  let~$A\subseteq V(N)$ and
  let~$Y \subseteq \GW(v) \cap \GW(A)$.
  A switching $\sigma\in\swis(N)$ is \emph{$(v,A,Y)$-compatible}
  if~$\haspath[\sigma]{uw}{X_N} \iff uw\in Y$ holds for each $uw\in\GW(v)\cap\GW(A)$.
  In the special case that $A=\{v\}$, we say $\sigma$ is \emph{$(v,Y)$-compatible}.
\end{definition}

Intuitively, a switching~$\sigma$ is $(v,Y)$-compatible if exactly the ``selected'' edges~$Y$ of $\GW(v)$ can reach leaves in~$\sigma$.
In the running example of \cref{fig:intro,fig:scanwidth},
$\sigma_1$ is $(r,\{up,vr\})$-compatible,
$\sigma_3$ is $(r,\{up,ur\})$-compatible, and
$\sigma_2$ and $\sigma_4$ are $(r,\{up\})$-compatible.

Now, the first table~$p$ with entries~$\ptable{v}{Y}\in \mathbb{Q} \cap [0,1]$ stores the probability of drawing a $(v,Y)$-compatible switching,
when each switching~$\sigma$ has probability~$P(\sigma)$.
The second table~$d$ with entries~$\dtable{v}{Y} \in \mathbb{Q}_{\ge 0}$ stores the overall \APDlong of these switchings,
restricted to edges below~$\GW(v)$ (that is, assuming that all edges that are not on a path from an edge of $\GW(v)$ to a leaf have weight zero).
Then, the \APDlong of $N$ can be read from entry $\dtable{\rho}{\emptyset}$, where~$\rho$ is the root of~N and $\Gamma$.

\subparagraph*{Table Computation.}
In the following, we show how to compute the table entries.
For leaves~$x\in X_N$, we have $\GW(x)=\{ux\}$ for some~$u$, and we compute
\begin{align}\label{eq:sw-base}
  \ptable{x}{Y} :=
  \begin{cases}
    1 & \text{if $Y=\{ux\}$}\\
    0 & \text{if $Y=\emptyset$}
  \end{cases}
  &&
  \text{and}
  &&
  \dtable{x}{Y} :=
  \begin{cases}
    \w(ux) & \text{if $Y=\{ux\}$}\\
    0 & \text{if $Y=\emptyset$}
  \end{cases}
\end{align}
This is correct since all switchings contain each leaf's unique incoming edge.

Now, let~$v$ be an internal node of~N, either a tree-node or a reticulation,
and let~$w_1,\dots,w_\Delta$ be the children of~$v$ in the tree-extension~$\Gamma$.%
\footnote{Note that~$v$ can have more than~$\Delta$ children in~$N$,
but we can assume that the subtree of~$\Gamma$ below each~$w_i$ contains at least one child of $v$ in $N$ (see~\cite{berry2020scanning}).}
For a more concise writeup, let us agree on the following abbreviations:
\begin{enumerate}[(a)]
  \item
    The sets of incoming and outgoing edges of~$v$ in~$N$ are~$\inedges$ and~$\outedges$, respectively.
  \item
    For each~$h\leq\Delta$ and each~$Z\subseteq E(N)$, including~$Y$ and~$\outedges$, we write~$Z_h:=Z\cap\GW(w_h)$
    and $Z_{\leq h}:=\bigcup_{i=1}^h Z_i$.
    In particular, $\outedges_h := \outedges \cap \GW(w_h)$.
  \item
    For any~$h\leq\Delta$ any set~$Z\subseteq \outedges$,
    the probability of drawing a switching~$\sigma$ such that
    each $\sigma$ is $(w_i,Y_i\uplus Z_i)$-compatible for each~$i\le h$,
    is~$\childprob[h]{Z}:=\prod_{i=1}^h \ptable{w_i}{Y_i\uplus Z_i}$.
\end{enumerate}
Intuitively, $\childprob[h]{Z}$ is the probability of drawing a switching that is compatible with~$Y$ and~$Z$ for all of the first~$h$ children of~$v$ in~$\Gamma$.
To compute $\ptable{v}{Y}$ and $\dtable{v}{Y}$,
we distinguish cases according to how many of $v$'s incoming edges are in $Y$.
%
If $|Y\cap \inedges|\geq 2$, then no switching can be $(v,Y)$-compatible as $v$ only has one incoming edge in any switching,
so we store~$\dtable{v}{Y}=\ptable{v}{Y}=0$.

If~$Y\cap \inedges=\emptyset$, then~$v$ cannot reach any leaf in any $(v,Y)$-compatible switching.
In this case, we just combine the probabilities and expected diversity scores of the children~$w_i$ in $\Gamma$ as follows:%
\footnote{Note that, technically, $\ptable{w_i}{Y_i}$ might be 0, in which case $\ptable{v}{Y}=\childprob{\emptyset}$ is also 0,
and $Y_i$ is an illegal choice for~$w_i$. This lets us safely ignore $\dtable{v}{Y}$ in this case.}
\vspace{-2ex}
\begin{align}
  \ptable{v}{Y}
  := \childprob{\emptyset}
  && \text{and} &&
  \dtable{v}{Y}
  := \childprob{\emptyset} \cdot \sum_{i=1}^\Delta \frac{\dtable{w_i}{Y_i}}{\ptable{w_i}{Y_i}}     \label{eq:sw-internal-pd-empty}
\end{align}

\looseness=-1
Finally, the most interesting case is~$Y\cap \inedges = \{uv\}$, for some~$u$.
Then, at least one outgoing edge of~$v$ reaches a leaf in all $(v,Y)$-compatible switchings.
There are different switchings for each subset of surviving outgoing edges of~$v$ in~$N$,
all of which influence the table entries $\ptable{v}{Y}$ and $\dtable{v}{Y}$.
To compute these entries, we thus have to consider all sets~$Z$ of outgoing edges of~$v$ in~$N$.
Then, we define the table entries
\begin{align}
  \ptable{v}{Y}
  & := p_I(uv) \cdot \;\smashsum{Z\subseteq \outedges;~Z \neq \emptyset} \; \childprob{Z}    \label{eq:sw-internal-p}\\
    \dtable{v}{Y}
  & := \w(uv) \cdot \ptable{v}{Y} +
      p_I(uv) \cdot 
      \;\smashsum{Z\subseteq \outedges;~Z \neq \emptyset}\;
      \childprob{Z} \cdot
      \sum_{i=1}^\Delta
      \frac{\dtable{w_i}{Y_i\uplus Z_i}}{\ptable{w_i}{Y_i\uplus Z_i}}   \label{eq:sw-internal-d}
\end{align}
%
Unfortunately, this formulation requires iterating over all subsets~$Z$ of out-edges of~$v$ in~$N$,
which might exceed the target running time.
The standard technique in this case is a dynamic program over the out-edges~$vw_i$ of~$v$ in~$\Gamma$,
which is not necessary for the correctness of the algorithm but to achieve the desired running time.
%
%
To this end, we partition the set of switchings again,
into the switchings~$\sigma$ that contain a path from $v$ to a leaf below any~$w_i$ with $i\leq h$ (tables~$C\mathrm{p}$ and $C\mathrm{d}$),
and those that do not (tables~$Q\mathrm{p}$ and $Q\mathrm{d}$).
Formally, the table entries for $h\geq 1$ are
\begin{align}
  Q\ptable{v}{Y,h} := \childprob[h]{\emptyset}
  && \text{and} &&
  Q\dtable{v}{Y,h}
   := \childprob[h]{\emptyset} \cdot \sum_{i=1}^h \frac{\dtable{w_i}{Y_i}}{\ptable{w_i}{Y_i}}   \label{eq:sw-auxQ-pd}\\
  C\ptable{v}{Y,h}
   := \;\;\;\smashsum{Z\subseteq E_{v,\leq h}^+;~Z\neq\emptyset}\;\;  \childprob[h]{Z}
  && \text{and} &&
  C\dtable{v}{Y,h}
   := \;\;\;\smashsum{Z\subseteq E_{v,\leq h}^+;~Z\neq\emptyset} \;\;
    \childprob[h]{Z} \cdot~~
    \sum_{i=1}^h
      \frac{\dtable{w_i}{Y_i\uplus Z_i}}{\ptable{w_i}{Y_i\uplus Z_i}}    \label{eq:sw-auxC-pd}
\end{align}
Then, since $\outedges\subseteq\biguplus_{i=1}^\Delta\GW(w_i)$, we can read what we need for~\eqref{eq:sw-internal-p} and~\eqref{eq:sw-internal-d}
from $C\ptable{v}{Y,\Delta}$ and $C\dtable{v}{Y,\Delta}$.
In order to formulate a recurrence, we define tables~$C$ and~$Q$ starting with $h=0$:
\begin{align}
  Q\ptable{v}{Y,0} := 1
  && \text{and} &&
  C\ptable{v}{Y,0} := C\dtable{v}{Y,0} := Q\dtable{v}{Y,0} := 0
  \label{eq:sw-aux-pd0}
\end{align}
Intuitively, $Q\ptable{v}{Y,h}$ and $Q\dtable{v}{Y,h}$ represent switchings~$\sigma$
in which $v$ does not have a path to a leaf below~$w_i$ for any $i\leq h$.
Thus, the probabilities of the sets of switchings simply multiply for each~$i$.
Further, the expected diversity of such a switching can be computed by combining
the expected diversity of a combined switching ``below $w_1,\ldots,w_{h-1}$'' in which $v$ does not reach a leaf (represented by $Q\dtable{v}{Y,h-1}$) and
a switching ``below $w_h$'' in which $v$ does not reach a leaf (represented by $\dtable{w_h}{Y_h}$).
We can now show how to compute the tables for $0<h\leq\Delta$ recursively.

\begin{lemma}\label{lem:sw-auxQ-correct}
  Let $0<h\leq\Delta$.
  Then,
  \begin{align*}
    Q\ptable{v}{Y, h} & = Q\ptable{v}{Y, h-1} \cdot \ptable{w_h}{Y_h}\\
    Q\dtable{v}{Y, h} & = Q\dtable{v}{Y, h-1} \cdot \ptable{w_h}{Y_h} + \dtable{w_h}{Y_h}\cdot Q\ptable{v}{Y,h-1}.
  \end{align*}
\end{lemma}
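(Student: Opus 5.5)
This is a purely algebraic verification from the closed-form definitions in \eqref{eq:sw-auxQ-pd} and the base case \eqref{eq:sw-aux-pd0}. We first handle $h=1$ separately, since the closed forms are stated only for $h\geq 1$ and the recursion then calls on the $h=0$ values of \eqref{eq:sw-aux-pd0}. Afterwards we treat $h\geq 2$, where both sides are given by the closed forms.

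\emph{Probability table.} By definition, $\childprob[h]{\emptyset}=\prod_{i=1}^h \ptable{w_i}{Y_i}$, so it factors as $\childprob[h-1]{\emptyset}\cdot\ptable{w_h}{Y_h}$. For $h\geq 2$ this gives the first identity directly. For $h=1$ we have $\childprob[1]{\emptyset}=\ptable{w_1}{Y_1}=1\cdot\ptable{w_1}{Y_1}=Q\ptable{v}{Y,0}\cdot\ptable{w_1}{Y_1}$, using $Q\ptable{v}{Y,0}=1$.

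\emph{Diversity table.} Write $S_h:=\sum_{i=1}^h \dtable{w_i}{Y_i}/\ptable{w_i}{Y_i}$, so that $Q\dtable{v}{Y,h}=\childprob[h]{\emptyset}\cdot S_h$. Split $S_h=S_{h-1}+\dtable{w_h}{Y_h}/\ptable{w_h}{Y_h}$ and multiply by $\childprob[h]{\emptyset}=\childprob[h-1]{\emptyset}\cdot\ptable{w_h}{Y_h}$. The first summand becomes $\childprob[h-1]{\emptyset}S_{h-1}\cdot\ptable{w_h}{Y_h}=Q\dtable{v}{Y,h-1}\cdot\ptable{w_h}{Y_h}$. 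In the second summand the factor $\ptable{w_h}{Y_h}$ cancels, leaving $\dtable{w_h}{Y_h}\cdot\childprob[h-1]{\emptyset}=\dtable{w_h}{Y_h}\cdot Q\ptable{v}{Y,h-1}$. For $h=1$, the same computation goes through with $Q\dtable{v}{Y,0}=0$ and $Q\ptable{v}{Y,0}=1$, giving $\dtable{w_1}{Y_1}$, which matches $\childprob[1]{\emptyset}\cdot S_1$.

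\emph{Main obstacle: zero denominators.} The cancellation above breaks down when some $\ptable{w_i}{Y_i}=0$. We adopt the convention of the paper's footnote: whenever $\ptable{w_i}{Y_i}=0$, the term $\dtable{w_i}{Y_i}/\ptable{w_i}{Y_i}$ is multiplied by a product containing that zero and is read as $0$. We also need $\dtable{w_i}{Y_i}=0$ whenever $\ptable{w_i}{Y_i}=0$; this holds because $d$ sums weighted diversities over a set of switchings of total probability zero. Under this convention, suppose some $\ptable{w_i}{Y_i}=0$ with $i\leq h$.
\begin{itemize}
\item If $i<h$, then $\childprob[h-1]{\emptyset}=0$, so $Q\ptable{v}{Y,h-1}=Q\dtable{v}{Y,h-1}=0$, and both sides of both identities are $0$.
\item If $i=h$, then $\ptable{w_h}{Y_h}=\dtable{w_h}{Y_h}=0$, and again both sides vanish.
\end{itemize}
Hence the identities hold in all cases, which proves the lemma.
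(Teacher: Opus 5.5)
Your proof is correct and follows the paper's own argument. You treat $h=1$ via the base values $Q\ptable{v}{Y,0}=1$ and $Q\dtable{v}{Y,0}=0$, and for $h>1$ you factor $\childprob[h]{\emptyset}=\childprob[h-1]{\emptyset}\cdot\ptable{w_h}{Y_h}$ and split off the last summand $\dtable{w_h}{Y_h}/\ptable{w_h}{Y_h}$. Your explicit handling of the case $\ptable{w_i}{Y_i}=0$ (reading the quotient as $0$ and using $\dtable{w_i}{Y_i}=0$ there) goes slightly beyond the paper's proof, which cancels $\ptable{w_h}{Y_h}$ without comment and relies only on its earlier footnote.
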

\begin{proof}
  For~$h=1$, observe that
  \begin{align*}
    Q\ptable{v}{Y,1}
    & \stackrel{\eqref{eq:sw-auxQ-pd}}{=}
      \childprob[1]{\emptyset}
    = \ptable{w_1}{Y_1}
    \stackrel{\eqref{eq:sw-aux-pd0}}{=}
      \underbrace{Q\ptable{v}{Y,0}}_{=1} \cdot \ptable{w_1}{Y_1}\\
    Q\dtable{v}{Y,1}
    & \stackrel{\eqref{eq:sw-auxQ-pd}}{=} 
        \childprob[1]{\emptyset} \cdot \frac{\dtable{w_1}{Y_1}}{\ptable{w_1}{Y_1}}
    = \dtable{w_1}{Y_1}
    \stackrel{\eqref{eq:sw-aux-pd0}}{=}
      \underbrace{Q\dtable{v}{Y, 0}}_{=0} \cdot \ptable{w_1}{Y_1} + \dtable{w_1}{Y_1}\cdot \underbrace{Q\ptable{v}{Y,0}}_{=1}.\\
  \intertext{For $h>1$, we have}
    Q\ptable{v}{Y,h}
    & \stackrel{\eqref{eq:sw-auxQ-pd}}{=}
      \childprob[h]{\emptyset}
    = \prod_{i=1}^h\ptable{w_i}{Y_i}
    = \childprob[h-1]{\emptyset} \cdot \ptable{w_h}{Y_h}
    \stackrel{\eqref{eq:sw-auxQ-pd}}{=}
      Q\ptable{v}{Y, h-1} \cdot \ptable{w_h}{Y_h}\\
    Q\dtable{v}{Y,h}
    & \stackrel{\eqref{eq:sw-auxQ-pd}}{=}
      \childprob[h]{\emptyset} \cdot \sum_{i=1}^h \frac{\dtable{w_i}{Y_i}}{\ptable{w_i}{Y_i}}
      = \childprob[h-1]{\emptyset} \cdot \ptable{w_h}{Y_h} \cdot
        \left(
          \frac{\dtable{w_h}{Y_h}}{\ptable{w_h}{Y_h}}
          + \sum_{i=1}^{h-1} \frac{\dtable{w_i}{Y_i}}{\ptable{w_i}{Y_i}}
        \right)\\
    & \stackrel{\eqref{eq:sw-auxQ-pd}}{=}
      \childprob[h-1]{\emptyset} \cdot \dtable{w_h}{Y_h}
      +
      \ptable{w_h}{Y_h} \cdot Q\dtable{v}{Y,h-1}\\
    & \stackrel{\eqref{eq:sw-auxQ-pd}}{=}
      Q\ptable{v}{Y,h-1} \cdot \dtable{w_h}{Y_h}
      +
      \ptable{w_h}{Y_h} \cdot Q\dtable{v}{Y,h-1}\qedhere
  \end{align*}
\end{proof}

\looseness=-1
Intuitively, to compute the $C$-tables for $h$,
we consider switchings~$\sigma$ for which~$v$ reaches (in $\sigma$) some leaf ``below'' one of the first $h$~children~$w_1,\ldots,w_h$ of~$v$ (in $\Gamma$).
We partition the set of these switchings according to whether or not~$v$~reaches a leaf below~$w_h$.
Then, we express their combined probability (and diversity) in terms of the probabilities of
drawing a switching~$\sigma_h$ below~$w_h$ (represented by~$\ptable{w_h}{Y_h\uplus Z}$ for some~$Z\subseteq\outedges_h$)
and a switching~$\sigma_{\leq h-1}$ below~$w_1,\ldots,w_{h-1}$ (represented by~$C\ptable{v}{Y,h-1}$ if $v$~reaches a leaf below~$w_1,\ldots,w_{h-1}$,
and by~$Q\ptable{v}{Y,h-1}$, otherwise).
Herein, we have to ensure that $v$ reaches a leaf in~$\sigma_h$ or $\sigma_{\leq h-1}$ (since we are computing a $C$-table).

\newcommand\lemswaux[1]{
  \begin{lemma}[$\star$]
  	#1
    Let $0<h\leq\Delta$ and
    abbreviate
    $p^{C+D}_{h-1}:=C\ptable{v}{Y,h-1} + Q\ptable{v}{Y, h-1}$ and
    $d^{C+D}_{h-1}:=C\dtable{v}{Y,h-1} + Q\dtable{v}{Y, h-1}$.
    Then,
    \begin{align*}
      C\ptable{v}{Y, h}
      = & C\ptable{v}{Y, h-1}\cdot\ptable{w_h}{Y_h} 
        + \;\;\;\smashsum{Z\subseteq \outedges_h;~Z\neq\emptyset}\;\; p^{C+Q}_{h-1} \cdot \ptable{w_h}{Y_h\uplus Z}\\
      C\dtable{v}{Y, h}
      = & C\dtable{v}{Y, h-1} \cdot\ptable{w_h}{Y_h}
        + \dtable{w_h}{Y_h} \cdot C\ptable{v}{Y,h-1}\\
      & + \;\;\;\smashsum{Z\subseteq \outedges_h;~Z\neq\emptyset}\;\;
          \dtable{v}{Y_h\uplus Z}\cdot p^{C+Q}_{h-1}
        + d^{C+Q}_{h-1} \cdot \ptable{w_h}{Y_h\uplus Z}
    \end{align*}
  \end{lemma}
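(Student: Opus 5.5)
The plan is to prove both identities directly from the definitions of the $C$- and $Q$-tables in~\eqref{eq:sw-auxC-pd} and~\eqref{eq:sw-auxQ-pd}, exactly as in the proof of \cref{lem:sw-auxQ-correct}, by splitting the range of summation. I read the abbreviations in the statement as $p^{C+Q}_{h-1}$ and $d^{C+Q}_{h-1}$. In the last line, I read the factor $\dtable{v}{Y_h\uplus Z}$ as $\dtable{w_h}{Y_h\uplus Z}$, and the sum over $Z$ as ranging over both summands. These appear to be typos, and this is the only reading under which the formula type-checks.

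\textbf{Disjointness of the bags.} First I will record that the sets $\GW(w_1),\dots,\GW(w_\Delta)$ are pairwise disjoint. An edge $uw$ in $\GW(w_i)$ has its head $w$ in the subtree of $\Gamma$ rooted at $w_i$, and these subtrees are disjoint. Hence $\outedges_{\leq h}=\outedges_{\leq h-1}\uplus\outedges_h$. Consequently, every $Z\subseteq\outedges_{\leq h}$ decomposes uniquely as $Z=Z'\uplus Z_h$ with $Z'=Z_{\leq h-1}$. Moreover, $Z'_i=Z_i$ for all $i<h$, and $Z'_h=\emptyset$. This gives the factorization
\[
\childprob[h]{Z}=\childprob[h-1]{Z'}\cdot\ptable{w_h}{Y_h\uplus Z_h}.
\]

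\textbf{The $C\mathrm{p}$-recurrence.} The condition $Z\neq\emptyset$ splits into two disjoint cases. In case (i), $Z_h=\emptyset$ and $Z'\neq\emptyset$. In case (ii), $Z_h\neq\emptyset$ and $Z'$ is arbitrary, including $Z'=\emptyset$. In case (i), summing the factorization over nonempty $Z'\subseteq\outedges_{\leq h-1}$ gives $C\ptable{v}{Y,h-1}\cdot\ptable{w_h}{Y_h}$. In case (ii), I fix $Z_h$ and sum over all $Z'$. The term $Z'=\emptyset$ contributes $Q\ptable{v}{Y,h-1}=\childprob[h-1]{\emptyset}$, and the nonempty $Z'$ contribute $C\ptable{v}{Y,h-1}$. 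So case (ii) yields $p^{C+Q}_{h-1}\cdot\ptable{w_h}{Y_h\uplus Z_h}$, summed over nonempty $Z_h\subseteq\outedges_h$. Together the two cases give the first identity. For $h=1$ the same computation works, using~\eqref{eq:sw-aux-pd0}: there $C$ vanishes and $Q\ptable{v}{Y,0}=1$.

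\textbf{The $C\mathrm{d}$-recurrence.} For each $Z$, I split the inner sum $\sum_{i=1}^h\dtable{w_i}{\cdot}/\ptable{w_i}{\cdot}$ into the part over $i\le h-1$ and the term $i=h$. Multiplying by the factorized $\childprob[h]{Z}$, the part over $i\le h-1$ becomes
\[
\Bigl(\childprob[h-1]{Z'}\cdot\sum_{i<h}\frac{\dtable{w_i}{Y_i\uplus Z'_i}}{\ptable{w_i}{Y_i\uplus Z'_i}}\Bigr)\cdot\ptable{w_h}{Y_h\uplus Z_h}.
\]
The $i=h$ term becomes $\childprob[h-1]{Z'}\cdot\dtable{w_h}{Y_h\uplus Z_h}$, because the denominator cancels.

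I then sum over the same two cases. In case (i), the first part sums to $C\dtable{v}{Y,h-1}\cdot\ptable{w_h}{Y_h}$ and the second to $\dtable{w_h}{Y_h}\cdot C\ptable{v}{Y,h-1}$. In case (ii), for fixed $Z_h$, summing over all $Z'$ (empty and nonempty) turns the bracketed expression into $Q\dtable{v}{Y,h-1}+C\dtable{v}{Y,h-1}=d^{C+Q}_{h-1}$, and turns $\childprob[h-1]{Z'}$ into $p^{C+Q}_{h-1}$. Together these give the second identity.

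\textbf{Main obstacle.} I expect the only delicate point to be the quotients $\dtable{}{}/\ptable{}{}$ when some $\ptable{w_i}{\cdot}=0$. Following the footnote to~\eqref{eq:sw-internal-pd-empty}, I will adopt the convention that $\dtable{w_i}{S}=0$ whenever $\ptable{w_i}{S}=0$, and that the corresponding summands vanish. This makes each product $\childprob[h]{Z}\cdot\dtable{w_i}{\cdot}/\ptable{w_i}{\cdot}$ well defined: it equals $\dtable{w_i}{\cdot}\prod_{j\neq i}\ptable{w_j}{\cdot}$. I will argue in this product form throughout, so that no division is ever needed and the cancellations above are purely formal.
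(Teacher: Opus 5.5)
Your proposal is correct and follows essentially the same route as the paper's proof. The paper likewise decomposes each $Z$ into $A\subseteq E^+_{v,\leq h-1}$ and $B\subseteq E^+_{v,h}$, splits into the cases $B=\emptyset$ (with $A\neq\emptyset$) and $B\neq\emptyset$ (with $A$ arbitrary), and separates the $i=h$ term of the inner sum. It arrives at exactly the formula you reconstructed, with $\dtable{w_h}{Y_h\uplus Z}$ and the $C{+}Q$ abbreviations. Your explicit disjointness argument for the bags $\GW(w_i)$ and your product-form handling of the quotients are somewhat more careful than the paper, which checks $h=1$ separately and leaves both points implicit.
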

}
\lemswaux{\label{lem:sw-auxC-correct}}
\thmtoappendix{lem:sw-auxC-correct}{
	\lemswaux{}
}{
  \begin{proof}
    First, for $h=1$, observe that
    \begin{align*}
      C\ptable{v}{Y, 1}
      \stackrel{\eqref{eq:sw-auxC-pd}}{=} &
      \;\;\;\smashsum{Z\subseteq E_{v,\leq 1}^+;~Z\neq\emptyset}\;\; \overbrace{\ptable{w_1}{Y_1\cup Z_1}}^{=\childprob[1]{Z}}
      \stackrel{\eqref{eq:sw-aux-pd0}}{=} 
        \overbrace{C\ptable{v}{Y, 0}}^{=0}\cdot\ptable{w_1}{Y_1} \\
      & + \;\;\;\smashsum{Z\subseteq \outedges_1;~Z\neq\emptyset}\;\; \overbrace{(C\ptable{v}{Y,0} + Q\ptable{v}{Y,0})}^{=p^{C+Q}_0 = 1} \cdot \ptable{w_1}{Y_1\cup Z_1}\\
      C\dtable{v}{Y, 1}
      \stackrel{\eqref{eq:sw-auxC-pd}}{=} &
        \;\;\;\smashsum{Z\subseteq E_{v,\leq 1}^+;~Z\neq\emptyset} \;\;
        \overbrace{\ptable{w_1}{Y_1\cup Z_1}}^{=\childprob[1]{Z}} \cdot
        \frac{\dtable{w_1}{Y_1\cup Z_1}}{\ptable{w_1}{Y_1\cup Z_1}}
      \;\; = \;\; \;\;\;\smashsum{Z\subseteq \outedges_1;~Z\neq\emptyset} \;\;
        \dtable{w_1}{Y_1\cup Z_1}\\
      = & \overbrace{C\dtable{v}{Y, 0}}^{=0} \cdot\ptable{w_1}{Y_1}
        + \dtable{w_1}{Y_1} \cdot \overbrace{C\ptable{v}{Y,0}}^{=0}\\
      & + \;\;\;\smashsum{Z\subseteq \outedges_1;~Z\neq\emptyset}\;\;
          \dtable{w_1}{Y_1\cup Z_1}\cdot \overbrace{p^{C+Q}_0}^{=1}
          + \overbrace{d^{C+Q}_0}^{=0} \cdot \ptable{w_1}{Y_1\cup Z_1}.\\
    \intertext{For $h>1$, we have}
      C\ptable{v}{Y, h}
      \stackrel{\eqref{eq:sw-auxC-pd}}{=} &
        \;\;\;\smashsum{Z\subseteq E_{v,\leq h}^+;~Z\neq\emptyset} \childprob[h]{Z}
        =
        \;\;\;\smashsum[l]{Z\subseteq E_{v,\leq h}^+;~Z\neq\emptyset}\;  \prod_{i=1}^{h}\ptable{w_i}{Y_i\uplus Z_i}
        =
        \;\;\;\smashsum{Z\subseteq E_{v,\leq h}^+;~Z\neq\emptyset}\;  \ptable{w_h}{Y_h\uplus Z_h} \cdot \prod_{i=1}^{h-1}\ptable{w_i}{Y_i\uplus Z_i}\\
      = &
        \;\;\;\smashsum{A\subseteq E_{v,\leq h-1}^+;~B\subseteq \outedges_h;~A\cup B\neq\emptyset}\;  \ptable{w_h}{Y_h\cup B} \cdot \prod_{i=1}^{h-1} \ptable{w_i}{Y_i\cup A_i}
        =
        \;\;\;\smashsum{A\subseteq E_{v,\leq h-1}^+;~B\subseteq \outedges_h;~A\cup B\neq\emptyset}\;  \ptable{w_h}{Y_h\cup B} \cdot \childprob[h-1]{A}\\
      = & 
        \ptable{w_h}{Y_h \cup \emptyset} \cdot \smashsum[l]{A\subseteq E_{v,\leq h-1}^+;~A\neq\emptyset} \childprob[h-1]{A} +
        \;\;\;\smashsum{B\subseteq \outedges_h;~B\neq\emptyset}\;  \ptable{w_h}{Y_h\cup B} \cdot \smashsum{A\subseteq E_{v,\leq h-1}^+} \childprob[h-1]{A}\\
      = & 
        \ptable{w_h}{Y_h} \cdot C\ptable{w_i}{Y_i,h-1} +
        \;\;\;\smashsum{B\subseteq \outedges_h;~B\neq\emptyset}\;  \ptable{w_h}{Y_h\cup B} \cdot \left( \prod_{i=1}^{h-1} \childprob[h-1]{\emptyset} + \smashsum[l]{A\subseteq E_{v,\leq h-1}^+;~A\neq\emptyset} \prod_{i=1}^{h-1} \childprob[h-1]{A} \right)\\
      = & 
        \ptable{w_h}{Y_h} \cdot C\ptable{w_i}{Y_i,h-1} +
        \;\;\;\smashsum{B\subseteq \outedges_h;~B\neq\emptyset}\;  \ptable{w_h}{Y_h\cup B} \cdot P^{C+Q}_{h-1}\\
      \intertext{And for the second table of~$C$ consider}
      C\dtable{v}{Y,h}
      \stackrel{\eqref{eq:sw-auxC-pd}}{=} &
        \smashsum{Z\subseteq E_{v,\leq h}^+;~Z\neq\emptyset} \;\;
        \childprob[h]{Z} \cdot \sum_{i=1}^h
        \frac{\dtable{w_i}{Y_i\uplus Z_i}}{\ptable{w_i}{Y_i\uplus Z_i}}
      = 
        \smashsum{Z\subseteq E_{v,\leq h}^+;~Z\neq\emptyset} \;\;
        \childprob[h-1]{Z} \cdot \ptable{w_h}{Y_h\uplus Z_h} \cdot \sum_{i=1}^h
        \frac{\dtable{w_i}{Y_i\uplus Z_i}}{\ptable{w_i}{Y_i\uplus Z_i}}\\
      = &
        \;\;\;\smashsum{A\subseteq E_{v,\leq h-1}^+;~B\subseteq \outedges_h;~A\cup B\neq\emptyset}\;
        \childprob[h-1]{A} \cdot \ptable{w_h}{Y_h \cup B} \cdot \sum_{i=1}^h
        \frac{\dtable{w_i}{Y_i\cup A_i \cup B_i}}{\ptable{w_i}{Y_i\cup A_i \cup B_i}}\\
      = &
        \;\;\;\smashsum{A\subseteq E_{v,\leq h-1}^+;~A\neq\emptyset}\;
        \childprob[h-1]{A} \cdot \ptable{w_h}{Y_h \cup \emptyset} \cdot \sum_{i=1}^h
        \frac{\dtable{w_i}{Y_i\cup A_i \cup \emptyset}}{\ptable{w_i}{Y_i\cup A_i \cup \emptyset}}\\
      & +
        \;\;\;\smashsum[l]{B\subseteq \outedges_h;~B\neq\emptyset}\;
        \;\;\;\smashsum[r]{A\subseteq E_{v,\leq h-1}^+}\;
        \childprob[h-1]{A} \cdot \ptable{w_h}{Y_h \cup B} \cdot \sum_{i=1}^h
        \frac{\dtable{w_i}{Y_i\cup A_i \cup B_i}}{\ptable{w_i}{Y_i\cup A_i \cup B_i}}\\
      = &
        \ptable{w_h}{Y_h} \cdot \;\;\;\smashsum{A\subseteq E_{v,\leq h-1}^+;~A\neq\emptyset}\;
        \childprob[h-1]{A} \cdot \left(\sum_{i=1}^{h-1}
        \frac{\dtable{w_i}{Y_i\cup A_i}}{\ptable{w_i}{Y_i\cup A_i}} + \frac{\dtable{w_h}{Y_h\cup A_h}}{\ptable{w_h}{Y_h\cup A_h}} \right)\\
      & +
        \;\;\;\smashsum{B\subseteq \outedges_h;~B\neq\emptyset}\;
        \ptable{w_h}{Y_h \cup B} \cdot 
        \;\;\;\smashsum{A\subseteq E_{v,\leq h-1}^+}\;
        \childprob[h-1]{A} \cdot \left( \sum_{i=1}^{h-1}
        \frac{\dtable{w_i}{Y_i\cup A_i \cup B_i}}{\ptable{w_i}{Y_i\cup A_i \cup B_i}} +
        \frac{\dtable{w_h}{Y_h\cup A_h \cup B_h}}{\ptable{w_h}{Y_h\cup A_h \cup B_h}} \right)\\
      \stackrel*= &
        \ptable{w_h}{Y_h} \cdot \;\;\;\smashsum{A\subseteq E_{v,\leq h-1}^+;~A\neq\emptyset}\;
        \childprob[h-1]{A} \cdot \left(\sum_{i=1}^{h-1}
        \frac{\dtable{w_i}{Y_i\cup A_i}}{\ptable{w_i}{Y_i\cup A_i}} + \frac{\dtable{w_h}{Y_h}}{\ptable{w_h}{Y_h}} \right)\\
      & +
        \;\;\;\smashsum{B\subseteq \outedges_h;~B\neq\emptyset}\;
        \ptable{w_h}{Y_h \cup B} \cdot 
        \;\;\;\smashsum{A\subseteq E_{v,\leq h-1}^+}\;
        \childprob[h-1]{A} \cdot \left( \sum_{i=1}^{h-1}
        \frac{\dtable{w_i}{Y_i\cup A_i}}{\ptable{w_i}{Y_i\cup A_i}} +
        \frac{\dtable{w_h}{Y_h\cup B}}{\ptable{w_h}{Y_h\cup B}} \right)\\
      = &
        \ptable{w_h}{Y_h} \cdot \;\;\;\smashsum{A\subseteq E_{v,\leq h-1}^+;~A\neq\emptyset}\;
        \childprob[h-1]{A} \cdot \sum_{i=1}^{h-1}
        \frac{\dtable{w_i}{Y_i\cup A_i}}{\ptable{w_i}{Y_i\cup A_i}}
      +
        \ptable{w_h}{Y_h} \cdot \;\;\;\smashsum{A\subseteq E_{v,\leq h-1}^+;~A\neq\emptyset}\;
        \childprob[h-1]{A} \frac{\dtable{w_h}{Y_h}}{\ptable{w_h}{Y_h}} \\
      & +
        \;\;\;\smashsum{B\subseteq \outedges_h;~B\neq\emptyset}\;
        \ptable{w_h}{Y_h \cup B} \cdot 
        \;\;\;\smashsum{A\subseteq E_{v,\leq h-1}^+}\;
        \childprob[h-1]{A} \cdot \sum_{i=1}^{h-1}
        \frac{\dtable{w_i}{Y_i\cup A_i}}{\ptable{w_i}{Y_i\cup A_i}}\\
      & +
        \;\;\;\smashsum{B\subseteq \outedges_h;~B\neq\emptyset}\;
        \ptable{w_h}{Y_h \cup B} \cdot 
        \;\;\;\smashsum{A\subseteq E_{v,\leq h-1}^+}\;
        \childprob[h-1]{A} \cdot
        \frac{\dtable{w_h}{Y_h\cup B}}{\ptable{w_h}{Y_h\cup B}}\\
      = &
        \ptable{w_h}{Y_h} \cdot C\dtable{v}{Y,h-1}
      +
        \dtable{w_h}{Y_h} \cdot C\ptable{v}{Y,h-1}\\
      & +
        \;\;\;\smashsum{B\subseteq \outedges_h;~B\neq\emptyset}\;
        \ptable{w_h}{Y_h \cup B} \cdot 
        d_{h-1}^{C+Q}
      +
        \;\;\;\smashsum{B\subseteq \outedges_h;~B\neq\emptyset}\;
        \dtable{w_h}{Y_h \cup B} \cdot 
        p_{h-1}^{C+Q}\\
    \end{align*}
    In Equality~*, we observe that~$A\subseteq E_{v,\leq h-1}^+$ and~$B\subseteq \outedges_h$ which implies~$A_h=\emptyset$ and~$B_i=\emptyset$ for all~$i<h$.
    This shows that tables~$C$ are computed correctly.
  \end{proof}
}

  We can now compute $\ptable{v}{Y}$ and $\dtable{v}{Y}$ using \eqref{eq:sw-internal-p} and \eqref{eq:sw-internal-d}
  by replacing the sums over all $Z\subseteq \outedges$ by lookups into $C\ptable{v}{Y,\Delta}$ and $C\dtable{v}{Y,\Delta}$.
  More formally, $\ptable{v}{Y} = p_I(uv) \cdot C\ptable{v}{Y,h}$ and~$\dtable{v}{Y} = p_I(uv) \cdot C\dtable{v}{Y,h} + \w(uv) \cdot \ptable{v}{Y}$.\todos{Please check me!}

  To prove \cref{thm:FPT-sw}, it remains to formally prove that $\APD(X_N) = \dtable{\rho}{\emptyset}$, as well as the running time.
  For this, we need some additional notation.
  We say that a node~$r$ of $N$ is \emph{switched} by a switching~$\sigma$ if $\sigma$ contains only one incoming edge of~$r$ and
  we let~$R(\sigma)$ be the set of such nodes.
  We define the combining operation~$\oplus$ on switchings, such that $\sigma_1 \oplus \sigma_2 := (V(N), E(\sigma_1)\cap E(\sigma_2))$,
  that is, all vertices that are switched by~$\sigma_1$ or by~$\sigma_2$ are switched accordinly by $\sigma_1\oplus\sigma_2$.
  If~$R(\sigma_1)\cap R(\sigma_2)=\emptyset$, then 
  \begin{align}
    \label{eq:merged-prob}
    P(\sigma_1\oplus\sigma_2) 
    = \;\smashprod{\substack{ur\in E(\sigma_1\oplus \sigma_2)\\ r\in R(\sigma_1\oplus\sigma_2)}}\; p_I(ur)
    = \;\smashprod{\substack{ur\in E(\sigma_1)\cap E(\sigma_2)\\ r\in R(\sigma_1) \uplus R(\sigma_2)}}\; p_I(ur)
    = \smashprod{\substack{ur\in E(\sigma_1)\\ r\in R(\sigma_1)}} p_I(ur) \cdot \smashprod{\substack{ur\in E(\sigma_2)\\ r\in R(\sigma_2)}} p_I(ur)
    = P(\sigma_1)\cdot P(\sigma_2).
  \end{align}
  For sets of switchings~$\mathfrak{S}_1,\mathfrak{S}_2$,
  we define $\mathfrak{S}_1\oplus\mathfrak{S}_2:=\{\sigma_1\oplus\sigma_2\mid\sigma_1\in\mathfrak{S}_1 \land \sigma_2\in\mathfrak{S}_2\}$
  and we note that, if $R(\sigma_1)\cap R(\sigma_2)=\emptyset$ for all $\sigma_1\in\mathfrak{S}_1$ and $\sigma_2\in\mathfrak{S}_2$, then
  \begin{align}
    P(\mathfrak{S}_1\oplus\mathfrak{S}_2)
    = 
      \smashsum{\sigma_1\in\mathfrak{S}_1,~\sigma_2\in\mathfrak{S}_2} P(\sigma_1\oplus\sigma_2)
    \stackrel{\eqref{eq:merged-prob}}{=}
      \smashsum{\sigma_1\in\mathfrak{S}_1,~\sigma_2\in\mathfrak{S}_2} P(\sigma_1)\cdot P(\sigma_2)
    = P(\mathfrak{S}_1)\cdot P(\mathfrak{S}_2)
    \label{eq:merged-prob-sets}
  \end{align}

\newcommand{\proofthmswfptcorrectness}{
  In the proof, we use $e\leq_N \GW(v)$ as a shorthand to mean that $N$ contains a path starting with an edge in $\GW(v)$ and ending with the edge $e$.
  First, we show the correctness of the dynamic programming by induction on the height of~$v$ in $\Gamma$.
  To this end, we prove for all~$Y\subseteq \GW(v)$ that the entries~$\ptable{v}{Y}$ and $\dtable{v}{Y}$
  contain the probability and diversity score (restricted to edges below~$\GW(v)$) of the set of all $(v,Y)$-compatible switchings in $\swis(N)$.

  If $v$ is a leaf in $\Gamma$ and, thus, in $N$, then $\inedges=\{uv\}$ for some node~$u$ of $N$.
  Thus, $\haspath[\sigma]{uv}{X_N}$ for all switchings~$\sigma\in\swis(N)$,
  so all switchings are $(v,\{uv\})$-compatible, but not $(v,\emptyset)$-compatible.
  As the diversity score of any such switching, restricted to~$uv$ is $\w(uv)$, 
  the correctness of \eqref{eq:sw-base} follows.

  For the rest of the proof, suppose that~$v$ has children~$w_1,w_2,\ldots,w_\Delta$ in $\Gamma$ and
  the tables store the correct value for all proper descendants of~$v$ in $\Gamma$.
  For all $u\in V(N)$ and $Y\subseteq\GW(u)$,
  let~$\compsw^u_Y$ denote the set of $(u,Y)$-compatible partial switchings of~$N$
  that switch only reticulations in the subtree of~$\Gamma$ rooted at~$u$.
  We consider three cases for $\inedges\cap Y$.

  \textbf{Case 1}: $|\inedges\cap Y|\geq 2$.
  Then, no switching is $(v,Y)$-compatible, so their probability is $0 = \ptable{v}{Y}$ and their combined score is $0 = \dtable{v}{Y}$.

  \textbf{Case 2}: $\inedges\cap Y=\emptyset$.
	Then, $\nhaspath[\sigma]{uv}{X_N}$ for all $(v,Y)$-compatible switchings.
	We start by showing $\compsw^v_Y = \mathfrak{S}_v \oplus \bigoplus_{i=1}^\Delta \compsw^{w_i}_{Y_i}$,
  where~$\mathfrak{S}_v$ is set of partial switchings in which only~$v$ is switched.
	If~$v$ is a tree-vertex, then~$\mathfrak{S}_v$ only contains~$N$.
	\begin{lemma}\label{lem:SvY}
    Let~$\inedges\cap Y=\emptyset$.
    Then, $\compsw^v_Y = \mathfrak{S}_v \oplus \bigoplus_{i=1}^\Delta \compsw^{w_i}_{Y_i}$
	\end{lemma}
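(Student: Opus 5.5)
We prove the set equality $\compsw^v_Y = \mathfrak{S}_v \oplus \bigoplus_{i=1}^\Delta \compsw^{w_i}_{Y_i}$ by showing both inclusions. The main structural fact we rely on is that the sets of reticulations switched in the different parts are pairwise disjoint. The reticulations in the subtree $\Gamma_v$ of $\Gamma$ rooted at $v$ are partitioned into $\{v\}\cap R$ and the reticulation sets of the subtrees $\Gamma_{w_1},\dots,\Gamma_{w_\Delta}$. Hence a partial switching $\sigma$ that switches exactly the reticulations of $\Gamma_v$ decomposes uniquely as $\sigma=\sigma_v\oplus\sigma_1\oplus\dots\oplus\sigma_\Delta$. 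Here $\sigma_v\in\mathfrak{S}_v$ is the restriction of $\sigma$ to $v$, and $\sigma_i$ switches only the reticulations of $\Gamma_{w_i}$ as $\sigma$ does. It therefore suffices to show that $\sigma$ is $(v,Y)$-compatible if and only if each $\sigma_i$ is $(w_i,Y_i)$-compatible.

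\textbf{Key step: reachability is local to each subtree.} Take an edge $xy\in\GW(w_i)$, so $y$ lies in $\Gamma_{w_i}$. Every path of $N$ that starts at $y$ stays inside $\Gamma_{w_i}$, because $\Gamma$ extends the ancestor relation of $N$. Consequently, whether $\haspath[\sigma]{xy}{X_N}$ holds depends only on which in-edges are kept at reticulations of $\Gamma_{w_i}$, and on these reticulations $\sigma$ and $\sigma_i$ agree. So $\haspath[\sigma]{xy}{X_N}$ holds if and only if $\haspath[\sigma_i]{xy}{X_N}$ holds.

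\textbf{Relating the bag of $v$ to the bags of its children.} Every edge $e\in\GW(v)$ either enters $v$ or has its head in some $\Gamma_{w_i}$. In the second case, if its tail is above $v$, then $e\in\GW(v)\cap\GW(w_i)$. The edges in $\inedges$ are handled by the hypothesis $\inedges\cap Y=\emptyset$: under this hypothesis, $(v,Y)$-compatibility requires that no in-edge of $v$ reaches a leaf. I expect the main obstacle to lie here. The reachability of an in-edge $uv$ is determined by the out-edges of $v$ in $N$, namely $\outedges\subseteq\GW(w_1)\cup\dots\cup\GW(w_\Delta)$, which belong to the children's bags but not to $\GW(v)$. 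We must therefore check that requiring $Y_i\cap\outedges=\emptyset$, which holds since $Y\subseteq\GW(v)$, forces all out-edges of $v$ to be dead in each $\sigma_i$. If $v$ is a reticulation, we must additionally check that the unique kept in-edge is then also dead. This requires the convention that the compatibility condition for $w_i$ is taken over $\GW(w_i)$ with the out-edges of $v$ set to ``not reaching''. We read the lemma's $Y_i$ as $Y\cap\GW(w_i)$, so that the out-edges of $v$ are unselected, and use the fact that every out-edge of $v$ lies in exactly one $\GW(w_i)$.

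\textbf{The two inclusions.} Given the key step, $\subseteq$ follows by restricting $\sigma$ to each part. For $\supseteq$, we combine compatible parts $\sigma_v\oplus\sigma_1\oplus\dots\oplus\sigma_\Delta$ and verify compatibility edge by edge. Finally, the decomposition is unique, which matters later for applying~\eqref{eq:merged-prob-sets}.
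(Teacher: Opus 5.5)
Your proposal follows the paper's proof: you decompose $\sigma=\sigma_v\oplus\sigma_1\oplus\dots\oplus\sigma_\Delta$, and you use that reachability from an edge of $\GW(w_i)$ depends only on how the reticulations below $w_i$ in $\Gamma$ are switched (the paper uses this implicitly, you state it explicitly). You also observe that $\GW(w_i)$ consists of edges of $\GW(v)$ plus out-edges of $v$, and use $\inedges\cap Y=\emptyset$ to link ``all in-edges of $v$ dead'' with ``all out-edges of $v$ dead'' in both inclusions. One caveat you share with the paper: in the $\subseteq$ direction, concluding that the out-edges of $v$ are dead needs an in-edge of $v$ that is kept in $\sigma$, so the argument, and in fact the lemma, fails at the root $\rho$. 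There $\GW(\rho)=\emptyset$ makes $\compsw^\rho_\emptyset$ the set of all switchings, while the right-hand side is empty because every switching contains a root-leaf path. The lemma should therefore be stated for $v\neq\rho$, with the root handled separately (e.g.\ by adding a virtual in-edge of weight $0$).
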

	\begin{proof}
		``$\subseteq$'':
    Let~$\sigma$ be a partial switching in~$\compsw^v_Y$.
    Then, all reticulations in the subtree of~$\Gamma$ rooted at~$v$ are switched in~$\sigma$ and,
    therefore, all reticulations in all subtrees of~$\Gamma$ rooted at any~$w_i$.
		Let~$\sigma_i$ be the partial switching of only the reticulations in the subtree of~$\Gamma$ rooted at any~$w_i$.
		Further, let~$\sigma_v$ be the switching that only switches~$v$ according to how~$v$ is switched in $\sigma$
    (if~$v$ is a tree-vertex, then~$\sigma_v$ is~$N$).
		Clearly, $\sigma = \sigma_1 \oplus \dots \oplus \sigma_\Delta \oplus \sigma_v$.
		It remains to show that~$\sigma_i \in \compsw^{w_i}_{Y_i}$, for each~$i\in [\Delta]$,
    that is, $\sigma_i$ is $(w_i,Y_i)$-compatible and switches only the reticulations below~$w_i$ in $\Gamma$.
    As the second condition is satisfied by construction of $\sigma_i$, we only show the first.
    To this end, first consider some~$uw\in Y_i$ and assume towards a contradiction that~$\nhaspath[\sigma_i]{uw}{X_N}$,
    implying $\nhaspath[\sigma]{uw}{X_N}$ since $\sigma$ is a subgraph of~$\sigma_i$.
    But then, $uw\notin Y$ since $\sigma$ is $(v,Y)$-compatible, contradicting~$uw\in Y_i$.
    Second, consider some arc~$uw\in\GW(w_i)$ with $\haspath[\sigma_i]{uw}{X_N}$ and assume towards a contradiction that $uw\notin Y_i$.
    Then, $uw\notin\GW(v)$ as, otherwise, $\haspath[\sigma]{uw}{X_N}$ and $uw\in Y$ since $\sigma$ is $(v,Y)$-compatible, contradicting $uw\notin Y_i$.
    But the only arcs in $\GW(w_i)\setminus\GW(v)$ are outgoing from~$v$, so $u=v$ and,
    since $\inedges\cap Y=\emptyset$, we know that $v$ does not have a path to a leaf in $\sigma$,
    in particular $\nhaspath[\sigma]{uw}{X_N}$, implying $\nhaspath[\sigma_i]{uw}{X_N}$ since $\haspath[\Gamma]{w_i}{w}$.
		
    ``$\supseteq$'':
		Let $\sigma_i \in \compsw^{w_i}_{Y_i}$ for each~$i \in [\Delta]$ and let~$\sigma_v \in \mathfrak{S}_v$.
		Let~$\sigma := \sigma_1 \oplus \dots \oplus \sigma_\Delta \oplus \sigma_v$ and
    observe that~$\sigma$ switches all reticulations in the subtree of~$\Gamma$ rooted at~$v$.
    We show that $\sigma$ is $(v,Y)$-compatible.
    First, consider some~$uw \in Y$.
    Since~$\inedges\cap Y=\emptyset$, there is some~$w_i$ with~$uw \in \GW(w_i)$, so~$uw\in Y_i$.
    Then, as~$w_i$ is~$(w_i,Y_i)$-compatible, we know~$\haspath[\sigma]{uw}{X_N}$ and $\haspath[\sigma_i]{uw}{X_N}$.
    Second, consider some~$uw \in \GW(v) \setminus Y$; in particular, $uw\notin Y_i$ for any~$i$, implying $\nhaspath[\sigma_i]{uw}{ X_N}$.
    If~$w \neq v$, then~$uw \in \GW(w_i)$ for some~$i$ so~$\nhaspath[\sigma]{uw}{X_N}$ and $\nhaspath[\sigma_i]{uw}{X_N}$.
    If~$w=v$, then observe that no~$Y_i$ contains an outgoing arc of~$v$.
    In consequence, there is no path from~$v$ to any leaf in any~$\sigma_i$ and, thus, $\nhaspath[\sigma]{uv}{X_N}$.
	\end{proof}

	Thus, the probability of drawing a $(v,Y)$-compatible switching is
	\begin{align*}
		P(\compsw^v_Y)
		& = P\left(\mathfrak{S}_v \oplus \bigoplus_{i=1}^{\Delta} \compsw^{w_i}_{Y_i}\right)
		\stackrel{\eqref{eq:merged-prob-sets}}{=} \prod_{i=1}^\Delta \; P\left(\compsw^{w_i}_{Y_i}\right) \cdot P(\mathfrak{S}_v)
		= \prod_{i=1}^\Delta \; P\left(\compsw^{w_i}_{Y_i}\right)\\
		&\stackrel{\text{IH}}{=} \prod_{i=1}^\Delta \ptable{w_i}{Y_i}
		= \childprob[\Delta]{\emptyset}
		\stackrel{\eqref{eq:sw-internal-pd-empty}}{=} \ptable{v}{Y},
	\end{align*}
	where we use that~$P(\mathfrak{S}_v) = \sum_{uv \in \inedges} p_I(uv) = 1$.
	Now, their combined \APDlong (restricted to edges below $\GW(v)$) is
	\begin{align*}
    \smashsum[r]{\sigma\in\compsw^v_Y}\; P(\sigma)\cdot \;\;\smashsum{\substack{e \beloeq \GW(v)\\ \haspath[\sigma]{e}{X_N}}}\; \w(e)
    & \stackrel{\text{\cref{lem:SvY}}}{=}
      \;\smashsum{\sigma\in \mathfrak{S}_v \oplus \bigoplus_{i=1}^\Delta \compsw^{w_i}_{Y_i}}\;\; P(\sigma) \;\cdot
		    \;\;\smashsum{\substack{e \beloeq \GW(v)\\ \haspath[\sigma]{e}{X_N}}}\; \w(e)\\
		\intertext{and, considering the edges below each $\GW(w_j)$ separately}
		& =
      \sum_{j=1}^\Delta\;
        \smashsum[r]{\sigma\in \mathfrak{S}_v \oplus \bigoplus_{i=1}^\Delta\; \compsw^{w_i}_{Y_i}} \; P(\sigma) \;\;\cdot
		      \;\;\;\smashsum{\substack{e \beloeq \GW(w_j)\\ \haspath[\sigma]{e}{X_N}}}\; \w(e)\\
		& =
      \sum_{j=1}^\Delta\;
        \sum_{\sigma_1\in \compsw^{w_1}_{Y_1}} \cdots \sum_{\sigma_\Delta\in\compsw^{w_\Delta}_{Y_\Delta}} \sum_{\sigma_v \in \mathfrak{S}_v} P(\sigma_1 \oplus \cdots \oplus \sigma_\Delta \oplus \sigma_v) \cdot
		      \;\smashsum{\substack{e \beloeq \GW(w_j)\\ \haspath[\sigma_j]{e}{X_N}}}\; \w(e)\\
		& \stackrel{\eqref{eq:merged-prob}}{=}
      \sum_{j=1}^\Delta\;
        \sum_{\sigma_1\in \compsw^{w_1}_{Y_1}} \cdots \sum_{\sigma_\Delta\in\compsw^{w_\Delta}_{Y_\Delta}} \sum_{\sigma_v \in \mathfrak{S}_v} P(\sigma_1) \cdots P(\sigma_\Delta) \cdot P(\sigma_v) \cdot
		    \;\smashsum{\substack{e \beloeq \GW(w_j)\\ \haspath[\sigma_j]{e}{X_N}}}\; \w(e)\\
		\intertext{We can take the terms that do not depend on~$j$ before the sum with~$\w$.}
		& =
      \sum_{j=1}^\Delta\;
		    \bigg(
          \underbrace{\sum_{\sigma_1\in\compsw^{w_1}_{Y_1}} \cdots \sum_{\sigma_\Delta\in\compsw^{w_\Delta}_{Y_\Delta}}}_{\text{not for~$j$}} \sum_{\sigma_v \in \mathfrak{S}_v}
		      \underbrace{P(\sigma_1) \cdots P(\sigma_\Delta)}_{\text{not for~$j$}} \cdot P(\sigma_v)
        \bigg)
        \cdot
        \smashsum{\sigma_j\in \compsw^{w_j}_{Y_j}} P(\sigma_j) \cdot
          \;\smashsum{\substack{e \beloeq \GW(w_j)\\ \haspath[\sigma_j]{e}{X_N}}}\; \w(e)\\
    \intertext{Then, by distributivity,}
		& =
      \sum_{j=1}^\Delta\;
		    \bigg(
          \bigg(\sum_{\sigma_v\in \mathfrak{S}_v}P(\sigma_v)\bigg)
          \cdot
          \prod_{\stackrel{i=1}{i\neq j}}^\Delta \sum_{\sigma_i\in\compsw^{w_i}_{Y_i}} P(\sigma_i)
        \bigg)
        \cdot
        \smashsum{\sigma_j\in \compsw^{w_j}_{Y_j}} P(\sigma_j)\cdot
          \;\smashsum{\substack{e \beloeq \GW(w_j)\\ \haspath[\sigma_j]{e}{X_N}}}\; \w(e)\\
    & =
      \sum_{j=1}^\Delta\;
		    \bigg(
          P(\mathfrak{S}_v)
          \cdot
          \prod_{\stackrel{i=1}{i\neq j}}^\Delta P(\compsw^{w_i}_{Y_i})
        \bigg)
        \cdot
        \smashsum{\sigma_j\in \compsw^{w_j}_{Y_j}} P(\sigma_j) \cdot
          \;\smashsum{\substack{e \beloeq \GW(w_j)\\ \haspath[\sigma_j]{e}{X_N}}}\; \w(e)\\
    \intertext{With~\Cref{lem:SvY} and Equation~\eqref{eq:merged-prob-sets} we know~$P(\mathfrak{S}_v)
    	\cdot
    	\prod_{i=1}^\Delta P(\compsw^{w_i}_{Y_i})
    	= P(\compsw^{v}_{Y})$. So,}
		& =
      \sum_{j=1}^\Delta\;
		    \frac{P(\compsw^{v}_{Y})}{P(\compsw^{w_j}_{Y_j})}
        \cdot
		    \underbrace{
			    \sum_{\sigma_j\in \compsw^{w_j}_{Y_j}} P(\sigma_j)
          \cdot
			    \;\smashsum{\substack{e \beloeq \GW(w_j)\\ \haspath[\sigma_j]{e}{X_N}}}\; \w(e)
        }_{\stackrel{\text{IH}}{=}\,\dtable{w_j}{Y_j}}\\
		& \stackrel{\text{IH}}{=}
		  P(\compsw^{v}_{Y})
      \cdot
      \sum_{j\leq\Delta}
        \frac{\dtable{w_j}{Y_j}}{P\left(\compsw^{w_j}_{Y_j}\right)}
		= \childprob{\emptyset} \cdot \sum_{j\leq\Delta} \frac{\dtable{w_j}{Y_j}}{\ptable{w_j}{Y_j}}
		\stackrel{\eqref{eq:sw-internal-pd-empty}}{=} \dtable{v}{Y}
	\end{align*}

	\textbf{Case 3}: $\inedges\cap Y=\{uv\}$ for some~$u^*\in V(N)$.
	Then, $\haspath[\sigma]{uv}{X_N}$ for all $(v,Y)$-compatible switchings.
	We start by showing an adaption of \Cref{lem:SvY}.
	In the remainder of the section,
  let~$\sigma_v$ be the partial switching resulting from removing all incoming edges of~$v$, except for~$uv$
  (that is, in~$\sigma_v$ only~$v$ is switched; if~$v$ is a tree-vertex, then~$\sigma_v = N$),
  and abbreviate~$YZ_i := (Y \cup Z) \cap \GW(w_i)$ for all~$Z \subseteq \outedges[v]$.
	\begin{lemma}\label{lem:SvYZ}
		Let~$|\inedges\cap Y| = 1$.
    Then, $\compsw^v_Y = \{\sigma_v\}\oplus \biguplus_{{Z \subseteq \outedges[v]},\,{Z \neq \emptyset}} \bigoplus_{i=1}^\Delta \compsw^{w_i}_{YZ_i}$.
	\end{lemma}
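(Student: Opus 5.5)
We mirror the proof of \cref{lem:SvY}, proving both inclusions and, in addition, that the union over the sets~$Z$ is disjoint. Throughout, let $uv$ be the unique edge in $\inedges\cap Y$. For a switching $\sigma\in\compsw^v_Y$, let $Z(\sigma):=\{vw\in\outedges \mid \haspath[\sigma]{vw}{X_N}\}$. This is the set of out-edges of~$v$ that reach a leaf in $\sigma$, and it is the natural label for the block of the partition that $\sigma$ belongs to. Since each outgoing edge of $v$ lies in exactly one $\GW(w_i)$ (see the footnote on the children of~$v$ in~$\Gamma$), the sets $YZ_i$ are well defined and $Y\cup Z$ splits as $Y\cap\inedges$ plus $\biguplus_i YZ_i$. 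Disjointness of the union is then immediate: any $\sigma$ in the block for $Z$ satisfies $Z(\sigma)=Z$, because compatibility with each $YZ_i$ fixes exactly which out-edges of $v$ reach leaves.

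For ``$\subseteq$'', let $\sigma\in\compsw^v_Y$. Because $\sigma$ is $(v,Y)$-compatible and $uv\in Y$, the edge $uv$ reaches a leaf in~$\sigma$. Hence $uv\in E(\sigma)$, so $v$ is switched towards $u$ (or is a tree-vertex), and, as $v$ is not a leaf, some out-edge of $v$ reaches a leaf. Therefore $Z:=Z(\sigma)\neq\emptyset$. We decompose $\sigma=\sigma_v\oplus\sigma_1\oplus\dots\oplus\sigma_\Delta$ exactly as in \cref{lem:SvY}, and it remains to check that $\sigma_i$ is $(w_i,YZ_i)$-compatible. For edges of $\GW(w_i)$ that also lie in $\GW(v)$, this follows as before. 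The only new edges in $\GW(w_i)\setminus\GW(v)$ are out-edges $vw$ of $v$. Whether $vw$ reaches a leaf depends only on the part of the network below $w$, and that part is switched identically in $\sigma$ and in $\sigma_i$, because every node below $w$ in~$N$ lies below $w_i$ in~$\Gamma$. So $vw$ reaches a leaf in $\sigma_i$ if and only if $vw\in Z$.

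For ``$\supseteq$'', take a nonempty $Z$, switchings $\sigma_i\in\compsw^{w_i}_{YZ_i}$, and set $\sigma:=\sigma_v\oplus\bigoplus_i\sigma_i$. We must verify $(v,Y)$-compatibility on $\GW(v)$. Edges of $\GW(v)$ other than those entering $v$ lie in some $\GW(w_i)$, and there the argument from \cref{lem:SvY} applies verbatim. The edge $uv$ is in $\sigma$, and since $Z\neq\emptyset$, some $vw\in Z$ reaches a leaf in $\sigma_i$ and hence in~$\sigma$. Thus $uv$ reaches a leaf, as required. Every other in-edge $u'v$ is removed by~$\sigma_v$, so it reaches no leaf, consistent with $u'v\notin Y$.

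The main obstacle is the locality claim used in both directions: reachability of a leaf from an edge in $\GW(w_i)$ must be the same in $\sigma_i$ and in the combined $\sigma$. We justify it with the tree-extension property. Every descendant in~$N$ of the head of such an edge is a descendant of $w_i$ in~$\Gamma$, so the reticulations relevant to that edge are exactly those switched by~$\sigma_i$. With this lemma in hand, Case~3 of the correctness proof then proceeds by the same product-and-distributivity computation as in Case~2, summed over the nonempty sets $Z$.
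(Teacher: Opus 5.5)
Your proof is correct and follows essentially the same route as the paper: you prove both inclusions via the decomposition $\sigma=\sigma_v\oplus\sigma_1\oplus\dots\oplus\sigma_\Delta$, take $Z$ to be the set of out-edges of $v$ that reach a leaf, and use that $Z\neq\emptyset$ is exactly what makes $uv$ reach a leaf. You are more thorough than the paper, which checks only that the selected edges reach leaves and leaves implicit both the locality of reachability inside the subtree of $w_i$ and the disjointness of the union over $Z$; your handling of these points is sound, though the fact that each out-edge of $v$ lies in exactly one $\GW(w_i)$ follows from the tree-extension property, not from the footnote you cite.
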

	\begin{proof}
		We assume in this proof without loss of generality that~$\inedges\cap Y=\{uv\}$.
		
		``$\subseteq$'':
		Let~$\sigma \in \compsw^v_Y$.
		Let~$Z \subseteq \outedges[v]$ be the set of arcs~$vw$ with~$\haspath[\sigma]{vw}{X_N}$.
		Since~$\haspath[\sigma]{uv}{X_N}$, this set is non-empty.
		Define~$\sigma_i$ to be the partial switching which switches reticulations in the subtree of~$\Gamma$ rooted at~$w_i$ as~$\sigma$ and does not switch any other reticulation.
		Thus, $\sigma = \sigma_v \oplus \sigma_1 \oplus \dots \oplus \sigma_\Delta$.
		It remains to show that~$\sigma_i \in \compsw^{w_i}_{YZ_i}$, for each~$i \in [\Delta]$.
		Fix~$i$.
		By definition, $\sigma_i$ switches the right reticulations.
		$\haspath[\sigma]{vw}{X_N}$, for each~$vw \in Z \supseteq Z_i$, by definition of~$Z$ and $\haspath[\sigma]{e}{X_N}$, for each arc~$e \in Y \supseteq Y_i$.
		Since~$\sigma_i$ switches all vertices in the subtree of~$\Gamma$ rooted at~$w_i$ as~$\sigma$, these properties also hold for~$\sigma_i$ and we are done.
		
		``$\supseteq$'':
		Fix any non-empty~$Z \subseteq \outedges[v]$.
		Let~$\sigma_i \in \compsw^{w_i}_{YZ_i}$, for each~$i\in [\Delta]$.
		Define~$\sigma := \sigma_v \oplus \sigma_1 \oplus \dots \oplus \sigma_\Delta$.
		We show~$\sigma \in \compsw^v_Y$.
		In $\sigma$, all reticulations in the subtree of $\Gamma$ rooted at~$v$ are switched.
		For each arc~$a\in Y\setminus \inedges[v]$, there is an~$i \in [\Delta]$, such that~$a \in Y_i$.
		Then, $\haspath[\sigma]{a}{X_N}$ and $\haspath[\sigma_i]{a}{X_N}$.
		Since~$Z$ is non-empty, there is an arc~$vw \in Z_i$, for some~$i \in [\Delta]$, such that~$\haspath[\sigma]{vw}{X_N}$ and $\haspath[\sigma_i]{vw}{X_N}$.
		We conclude that~$\haspath[\sigma]{uv}{X_N}$ and the claim follows.
	\end{proof}
	
  By \cref{lem:SvYZ}, the probability of drawing a $(v,Y)$-compatible switching is
	\begin{align*}
		P(\compsw^v_Y)
    & \stackrel{\text{\cref{lem:SvYZ}}}{=}
      P\bigg(
        \{\sigma_v\}\oplus
        \;\smashoperator[l]{\biguplus_{{Z \subseteq \outedges[v]},\,{Z \neq \emptyset}}}\;
        \bigoplus_{i=1}^\Delta \compsw^{w_i}_{YZ_i}
      \bigg)
		\stackrel{\eqref{eq:merged-prob-sets}}{=} 
      P(\sigma_v)\cdot
      \;\smashsum[l]{{Z \subseteq \outedges[v]},\,{Z \neq \emptyset}}\;
        \prod_{i=1}^\Delta P\left(\compsw^{w_i}_{YZ_i}\right)\\
		& =
      p_I(uv) \cdot \;\smashsum[l]{{Z \subseteq \outedges[v]},\,{Z \neq \emptyset}}\; \prod_{i=1}^\Delta P\left(\compsw^{w_i}_{YZ_i}\right)\\
		& \stackrel{\text{IH}}{=}
      p_I(uv) \cdot \;\smashsum[l]{{Z \subseteq \outedges[v]},\,{Z \neq \emptyset}}\prod_{i=1}^\Delta \ptable{w_i}{YZ_i}
    =
      p_I(uv) \cdot \;\smashsum[l]{{Z \subseteq \outedges[v]},\,{Z \neq \emptyset}} \childprob[\Delta]{Z}
		\stackrel{\eqref{eq:sw-internal-p}}{=} \ptable{v}{Y}.
	\end{align*}
	
	\noindent
	Now, their combined \APDlong (restricted to edges below $\GW(v)$) is
	\begin{align*}
		&
      \smashsum[r]{\sigma\in\compsw^v_Y}\; P(\sigma)\cdot \;\;\smashsum{\substack{e \beloeq \GW(v)\\ \haspath[\sigma]{e}{X_N}}}\; \w(e)\\
		\intertext{and, considering the edge $uv$ as well as the edges below each $\GW(w_j)$ separately}
    = &
      \;\smashsum{\sigma\in\compsw^v_Y}\; P(\sigma)\cdot
      \bigg(
        \w(uv) + \sum_{j=1}^{\Delta} \smashsum[r]{\substack{e \beloeq \GW(w_j)\\ \haspath[\sigma]{e}{X_N}}}\; \w(e)
      \bigg)\\
    = &
      P(\compsw^v_Y)\cdot\w(uv) +
      \;\smashsum{\sigma\in\compsw^v_Y}\; P(\sigma)\cdot
      \sum_{j=1}^{\Delta} 
        \smashsum[r]{\substack{e \beloeq \GW(w_j)\\ \haspath[\sigma]{e}{X_N}}}\; \w(e)\\
		\stackrel{\text{\cref{lem:SvYZ}}}{=} &
      P(\compsw^v_Y)\cdot\w(uv) +
		  \sum_{\substack{Z \subseteq \outedges[v] \\ Z \neq \emptyset}}\;
        \smashsum[r]{\sigma\in \bigoplus_{i=1}^\Delta \compsw^{w_i}_{YZ_i}}\;\; P(\sigma_v \oplus \sigma) \cdot
          \sum_{j=1}^{\Delta} 
          \smashsum[r]{\substack{e \beloeq \GW(w_j)\\ \haspath[\sigma_v \oplus \sigma]{e}{X_N}}}\; \w(e)\\
    \intertext{and, considering the parts below each $w_j$ of each switching~$\sigma$ individually}
		= &
    P(\compsw^v_Y)\cdot\w(uv) +
		\smashsum{\substack{Z \subseteq \outedges[v] \\ Z \neq \emptyset}}\;
		  \sum_{j=1}^\Delta\;
        \smashsum[r]{\sigma_1\in \compsw^{w_1}_{YZ_1}} \;\;\cdots\;\; \smashsum{\sigma_\Delta\in\compsw^{w_\Delta}_{YZ_\Delta}}
          P(\sigma_v \oplus \sigma_1 \oplus \cdots \oplus \sigma_\Delta) \cdot
		      \;\smashsum{\substack{e \beloeq \GW(w_j)\\ \haspath[\sigma_j]{e}{X_N}}}\; \w(e)\\
		\stackrel{\eqref{eq:merged-prob}}{=} &
      P(\compsw^v_Y)\cdot\w(uv) +
      \;\smashsum[l]{\substack{Z \subseteq \outedges[v] \\ Z \neq \emptyset}}\;
		    \sum_{j=1}^\Delta\;
        \smashsum[r]{\sigma_1\in \compsw^{w_1}_{YZ_1}} \;\;\cdots\;\; \smashsum{\sigma_\Delta\in\compsw^{w_\Delta}_{YZ_\Delta}}
          p_I(uv) \cdot P(\sigma_1) \cdots P(\sigma_\Delta) \cdot
		      \;\smashsum{\substack{e \beloeq \GW(w_j)\\ \haspath[\sigma_j]{e}{X_N}}}\; \w(e)\\
    \intertext{with $P(\compsw^v_Y) =\ptable{v}{Y}$ (see above) and similarly to Case~2, we get}
		= &
    \ptable{v}{Y}\cdot\w(uv) +
		p_I(uv)\cdot
    \;\smashsum[l]{\substack{Z \subseteq \outedges[v] \\ Z \neq \emptyset}}\;
		  \sum_{j=1}^\Delta\;
		    \bigg(
		      \prod_{\substack{i=1\\ i\neq j}}^\Delta P(\compsw^{w_i}_{YZ_i})
		    \bigg)
		    \cdot
		    \smashsum{\sigma_j\in \compsw^{w_j}_{YZ_j}} P(\sigma_j) \cdot
		      \smashsum{\substack{e \beloeq \GW(w_j)\\ \haspath[\sigma_j]{e}{X_N}}}\; \w(e)\\
		= &
    \ptable{v}{Y}\cdot\w(uv) +
		p_I(uv)\cdot
    \;\smashsum[l]{\substack{Z \subseteq \outedges[v] \\ Z \neq \emptyset}}\;
		  \sum_{j=1}^\Delta\;
		    \bigg(
		      \underbrace{
            \prod_{i=1}^\Delta
            \underbrace{
              P(\compsw^{w_i}_{YZ_i})
            }_{\stackrel{\text{IH}}{=}\, \ptable{w_i}{YZ_i}}
          }_{= \childprob{Z}}
        \bigg)
          /
          \underbrace{
            P(\compsw^{w_j}_{YZ_j})
          }_{\stackrel{\text{IH}}{=}\, \ptable{w_j}{YZ_j}}
		    \cdot
    		\underbrace{
          \smashsum{\sigma_j\in \compsw^{w_j}_{YZ_j}} P(\sigma_j) \cdot
            \;\smashsum{\substack{e \beloeq \GW(w_j)\\ \haspath[\sigma_j]{e}{X_N}}}\; \w(e)
        }_{\stackrel{\text{IH}}{=}\, \dtable{w_j}{YZ_j}}\\
		= &
		\ptable{v}{Y} \cdot \w(u v) +
    p_I(uv)\cdot
		\;\smashsum{\substack{Z \subseteq \outedges[v] \\ Z \neq \emptyset}}\;
    \childprob{Z}\cdot
		\sum_{j=1}^\Delta\;
    \frac{\dtable{w_j}{YZ_j}}{\ptable{w_j}{YZ_j}}
		\stackrel{\eqref{eq:sw-internal-d}}{=}
		\dtable{v}{Y}.
	\end{align*}

  \noindent
	Then, by the correctness of the computation,
	$\APD[N](X_N) = \dtable{\rho}{\emptyset}$.
}

\newcommand{\proofthmswfpt}[1]{
	\begin{proof}[Proof of \cref{thm:FPT-sw}]
	#1
    \looseness=-1
    Observe that there are $\Oh(2^{\sw} \cdot |E(\Gamma)|)=\Oh(2^{\sw}\cdot n)$ entries in all tables, together.
    Most table entries are computed by looking up constantly many other table entries.
    We recall that the calculations in \eqref{eq:sw-internal-p} and \eqref{eq:sw-internal-d}
    are replaced with lookups into the auxiliary tables and
    the calculations of the $Q$-table
    is defined in \Cref{lem:sw-auxQ-correct,lem:sw-auxC-correct}.
    It only remains to show that the
    entries of the $C$-table, as presented in \Cref{lem:sw-auxC-correct},
    can be computed in $\Oh(2^{\sw}\cdot n)$~time.
    
    To this end,
    fix a node~$v$ with children $w_1,\ldots,w_\Delta$ in $\Gamma$,
    fix any~$h\leq\Delta$,
    and note that there are $|E(\Gamma)|\in\Oh(n)$ such pairs~$(v,h)$.
    Further, consider the pairs~$(Y_h,Z_h)$ with $Y_h\subseteq \GW(v)\cap\GW(w_h)$ and $Z_h\subseteq\outedges[v]\cap\GW(w_h)$.
    As $\GW(v)\cap\outedges[v]=\emptyset$, there are at most $2^{|\GW(v)\cap\GW(w_h)|}\cdot 2^{|\outedges[v]\cap\GW(w_h)|} \leq 2^{|\GW(w_h)|} \leq 2^{\sw}$ such pairs.
    Thus, the two values
    $\smashsum{Z\subseteq \outedges_h;~Z\neq\emptyset}\;\; p^{C+Q}_{h-1} \cdot \ptable{w_h}{Y^*\cup Z}$ and
    $\smashsum{Z\subseteq \outedges_h;~Z\neq\emptyset}\;\; \dtable{v}{Y^*\cup Z}\cdot p^{C+Q}_{h-1} + d^{C+Q}_{h-1} \cdot \ptable{w_h}{Y^*\cup Z}$,
    can be computed in~$\Oh(2^{\sw})$ time.
    Consequently, taking these values as a constant after the calculation, all values in \Cref{lem:sw-auxC-correct} are computed in the desired time.
    This proves the overall running time of~$\Oh(2^{\sw} n)$.
  \end{proof}
}
\proofthmswfpt{The correctness proof is deferred to the appendix.}

\thmtoappendix{thm:FPT-sw}{
	\thmswfpt{}
}{
	\proofthmswfpt{\proofthmswfptcorrectness
	
	}
}

	\ifJournal
    \todo[inline]{MW: this needs to be removed or redone completely before journal-submission}
	\begin{theorem}\label{thm:FPT-nsw}
		For every network~$N$ on~$X_N$,
		\APDX can be computed in $\Oh(????)$ time,
		when a tree extension~$T$ of~$N$ with node scanwidth~$\nsw$ is given.
	\end{theorem}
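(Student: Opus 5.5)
The plan is to reuse the dynamic program from the proof of \cref{thm:FPT-sw}, but to index the tables by sets of \emph{tail nodes} instead of sets of edges. I would aim for a running time of $\Oh(2^{\nsw}\cdot\nsw\cdot n)$. Here I take the node scanwidth of a bag to be the number of distinct tails of edges in $\GW(v)$, and write $T_v:=\{u\mid uw\in\GW(v)\}$ for that set of tails.

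\textbf{Why tails suffice.} Fix $v$ and an edge $uw\in\GW(v)$. Its head $w$ lies in the subtree of $\Gamma$ rooted at $v$. Every reticulation deciding whether $uw$ reaches a leaf is therefore switched inside that subtree. For the part of the network above $v$, only one bit per tail matters: does $u$ reach a leaf through at least one of its out-edges in $\GW(v)$? The set of surviving edges of $u$ is invisible from above, because $u$ reaches a leaf iff at least one out-edge does.

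\textbf{Tables.} I would define $(v,A,Y)$-compatibility with $Y\subseteq T_v$: a switching is compatible if, for every $u\in T_v$, some edge $uw\in\GW(v)$ satisfies $\haspath[\sigma]{uw}{X_N}$ exactly when $u\in Y$. The tables $\ptable{v}{Y}$ and $\dtable{v}{Y}$ keep their previous meaning, and I would redo the three cases at an internal node~$v$.
\begin{itemize}
\item The incoming edges of $v$ are now represented through their tails. The case distinction ``$|Y\cap\inedges|$'' is replaced by asking whether $v$ itself reaches a leaf.
\item A reticulation~$v$ is switched to exactly one parent~$u$. The contribution $p_I(uv)$ then feeds the bit of~$u$, and the other parents of~$v$ lose the edge into~$v$.
\item The children $w_1,\dots,w_\Delta$ in $\Gamma$ are combined by an OR rule. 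A tail $u\in T_v$ is in $Y$ iff it is selected for at least one child $w_i$ with $u\in T_{w_i}$ (or via the edge $uv$ itself). Node $v$ is a new tail in the children's bags; it reaches a leaf iff it is selected in some child.
\end{itemize}
The correctness proof copies \cref{lem:SvY,lem:SvYZ}. The partition of compatible switchings into $\oplus$-products of children's switchings still holds, now indexed by the OR-decomposition of $Y$. The identities \eqref{eq:merged-prob} and \eqref{eq:merged-prob-sets} then give the product and ratio formulas exactly as before.

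\textbf{Main obstacle: the running time.} The OR-combination over children is the crux. Done naively, merging child $w_h$ into the accumulated table costs $3^{\nsw}$ per step, since one enumerates pairs of sets whose union is~$Y$. I would instead follow the $C$/$Q$ auxiliary-table scheme of \cref{lem:sw-auxQ-correct,lem:sw-auxC-correct}, replacing each step by an OR-product computed through the zeta (subset-sum) transform. On transformed tables the probability part becomes a pointwise product. The $\mathrm{d}$-part becomes the derivative-type rule $d\cdot p' + p\cdot d'$, which also holds pointwise after the transform by the same bookkeeping as in \cref{lem:sw-auxQ-correct}. A Möbius transform then brings the result back. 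Each merge costs $\Oh(2^{\nsw}\nsw)$, and there are $|E(\Gamma)|\in\Oh(n)$ merges. The delicate point is checking that the ratio terms $\dtable{w_i}{\cdot}/\ptable{w_i}{\cdot}$ can always be avoided in favour of the additive $p$/$d$ pair, so the transforms apply linearly. I expect this verification to take the most care.
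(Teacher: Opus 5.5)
Your proposal is correct, and its core is the same dynamic program that the paper only sketches for this statement. In both, tables are indexed by subsets of the tail set of $\GW(v)$ (the paper's $B_v$, your $T_v$), and children are merged by summing over pairs $(Y_1,Y_2)$ with $Y_1\cup Y_2=Y$. A final step at $v$ then charges $p_I(yv)$ and $\w(yv)$ when $v$ reaches a leaf and is switched to parent~$y$, which sets the bit of $y$. The paper's two lookups at $Y\cup\{v\}$ and $(Y\setminus\{y\})\cup\{v\}$ are exactly your OR rule for the edge $yv$.

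Your compatibility notion is the reading the sketch actually needs: $u\in Y$ iff some out-edge of $u$ \emph{in $\GW(v)$} reaches a leaf. Taken literally, the sketch's condition ``$u$ has a path to a leaf in $\sigma$'' depends on switchings outside the subtree of $v$, and would not combine by union.

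The real difference is the running time, which the paper leaves open. Its sketch only estimates $\Oh(4^{\nsw}\cdot n)$, obtained by enumerating the pairs $(Y_1,Y_2)$ directly. That, rather than your $3^{\nsw}$, is the naive merge cost, since $\sum_{Y\subseteq U}3^{|Y|}=4^{|U|}$ over the merge universe $U$. You observe that both the $p$-update and the product-rule $d$-update are bilinear OR-convolutions, so they become pointwise after a zeta transform. This is sound and gives $\Oh(2^{\nsw}\cdot\nsw\cdot n)$. The universe $U=\bigcup_i T_{w_i}$ may contain $v$ in addition to at most $\nsw$ tails of $T_v$, but that only costs a factor of~2. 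Tracking whether $v$ belongs to the merged set also makes the $C$/$Q$ split unnecessary.
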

	\begin{proof}
		Let a tree-extension~$T$ of~N with node scanwidth~$\nsw$ be given.
		Let~$T_{\le v}$ be the subtree of~$T$ rooted at~$v$.
    Recall that~$B_v \subseteq V(N)$, for each~$v\in V(N)$, is the set of nodes that are a parent of a node in~$T_{\le v}$, but are not in~$T_{\le v}$.
		
		\proofpara{Algorithm}
		We define a dynamic programming algorithm with table~$\DP[v,Y]$, for dimensions~$v\in V(N)$ and~$Y \subseteq B_v$.
		We say that a display-tree~$\sigma$ of N is~$Y$-compatible\todo{overwriting the definition in \Cref{thm:FPT-sw}} if
		\begin{itemize}
			\item each node in~$Y$ has, in~$\sigma$, directed path to a leaf in~$X_N$,
			\item Vertex that are not in~$Y$ do not have a directed path to a leaf in~$X_N$, in~$\sigma$.
		\end{itemize}
		We store in~$\DP[v,Y]$ a tuple~$[p;d]$, indicating that the proportion of~$Y$-compatible switching-trees~$\sigma$ is~$p \in [0,1]$ and that the overall average diversity of these trees on edges incoming at~$T_{\le v}$ is~$d \in \mathbb{R}_{\ge 0}$.
		If there is no such switching tree, we store~$[0,0]$.
		We briefly write~$\DP[v,Y,1]=p$ and~$\DP[v,Y,2]=d$.
		
		
		Each leaf~$x\in X_N$ has exactly one incoming edge. Observe~$\GW(x) = \parents(x)$.
		We store
		\begin{align}
			\label{eq:nsw-base}
			\DP[x,Y] =~&
			\begin{cases}
				[1; \w( v_x x )] & \text{if~$B_x = Y = \{v_x\}$}\\
				[0; 0] & \text{else if~$Y = \emptyset$}
			\end{cases}
		\end{align}
		
		Now, let~$v$ be an internal node of~N, either a tree-node or a reticulation, and let~$w_1,\dots,w_\Delta$ be the children of~$v$ in the tree-extension~$T$.
		(Note that~$v$ can have more children in~$N$.
		We may assume that~$T$ is chosen in a manner that $v$ does not have less children in~$N$.)
		We use an auxiliary table~$\DP'$, where in~$\DP'[j,v,Y]$ we only consider the first~$j\leq\Delta$ children.
		We store
		\begin{align}
			\label{eq:nsw-sub-1}
			\DP'[1,v,Y] =~&
			\begin{cases}
				\DP[w_1,Y] & \text{if~$Y \subseteq \GW(w_1)$}\\
				[0;0] & \text{else}\\
			\end{cases}
		\end{align}
		
		Let~$\mathcal{H}(Y)$ for some set~$Y$ be the set of tuples~$(Y_1,Y_2)$ with~$Y_i\subseteq Y$ for~$i\in\{1,2\}$\todos{I think we already have this condition satisfied with the next condition. I would keep it, however, because of reader-friendliness.} and~$Y_1 \cup Y_2 = Y$.
		(We do not require~$Y_1$ and~$Y_2$ to be disjoint.)
		For further values, we use the recursion
		\begin{align}
			\label{eq:nsw-sub-p-j}
			\DP'[j+1,v,Y,1]
			=~& \sum_{(Y_1,Y_2) \in \mathcal{H}(Y)}
			\DP'[j,v,Y_1,1] \cdot \DP[w_{j+1},v,Y_2,1]\\
			\label{eq:nsw-sub-d-j}
			\DP'[j+1,v,Y,2]
			=~& \sum_{(Y_1,Y_2) \in \mathcal{H}(Y)}
			\DP'[j,v,Y_1,1] \cdot \DP[w_{j+1},v,Y_2,2] +\\
			& ~~~~~~~~~~~~~~~~ \DP'[j,v,Y_1,2] \cdot \DP[w_{j+1},v,Y_2,1]
		\end{align}
		\todosi{Interesting observation for the running time analysis: Each node in~$\GW(v)\cup\{v\}$ can be one of the following four options:\\
		- Not in $Y$\\
		- In $Y_1$, but not in~$Y_2$\\
		- In $Y_2$, but not in~$Y_1$\\
		- In $Y_1$ and in~$Y_2$\\
    Thus computing all values in the DP should be feasible in~$\Oh(4^{\nsw} \cdot n)$~time.}
		
		Finally, we store
		\begin{align}
			\label{eq:nsw-final-p}
			&\DP[v,Y,1]\\
			=~&	\DP[\Delta,v,Y,1] +\\ 
			&	\sum_{y \in Y \cap \parents(v)}
				p_I(yv) \cdot ( \DP[\Delta,v,Y \cup \{v\},1] + \DP[\Delta,v,Y \setminus \{y\} \cup \{v\},1] )\\
			\label{eq:nsw-final-d}
			&\DP[v,Y,2]\\
			=~&	\DP[\Delta,v,Y,2] +\\
			&	\sum_{y \in Y \cap \parents(v)}
				p_I(yv) \cdot ( \DP[\Delta,v,Y \cup \{v\},2] + \DP[\Delta,v,Y \setminus \{y\} \cup \{v\},2]\\
			&	~~~~~~~~~~~~~ + \w(yv) \cdot ( \DP[\Delta,v,Y \cup \{v\},1] + \DP[\Delta,v,Y \setminus \{y\} \cup \{v\},1] ) )
		\end{align}

		We return \yes, if $\DP[\rho,\emptyset,2] \ge D$, where~$\rho$ is the root of~N---and therefore also the root of the tree-extension~$T$.

		\proofpara{Correctness}

		\proofpara{Running time}

	\end{proof}
	\fi

	\section{\APDX on Almost Reticulation-Visible Networks}\label{sec:rv}
  Recall that a network is \emph{reticulation-visible} if all its reticulations are visible,
  e.g.~for each reticulation~$r$, there is a leaf~$\ell$ such that $r$ is on all root-$\ell$-paths in $N$.
  
  We show first that,
  if $N$ is reticulation-visible, then the \APDlong can be computed in polynomial time, in a bottom-up manner using the following lemma,
  contrasting the fact that computing~$\APDX$ is \#P-hard in general~\cite{van2025average}.
	
	\begin{lemma}\label{lem:visible}
    Let~$uv$ be an edge of a (not necessarily binary) reticulation-visible network~$N$.    
    Let~$R_v$ be the set of reticulations that can be reached via tree-paths from $v$ and,
    for each $r\in R_v$, let $Q_r$ be the set of edges on $v$-$r$-tree-paths whose head is~$r$.
    Then,
    \[
      \edgeproba{uv}{X_N} =
      \begin{cases}
        p_I(uv) & \text{if $v$ is visible in $N$}\\
        1 - \prod\limits_{r\in R_v} \left(1 - \sum\limits_{qr\in Q_r}p_I(qr)\right) & \text{otherwise}
      \end{cases}
    \]
	\end{lemma}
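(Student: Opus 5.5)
The plan is to analyse, for a random switching~$\sigma$ drawn according to $P$, exactly when the event $\haspath[\sigma]{uv}{X_N}$ occurs. I will use two basic facts.

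First, by definition $P(\sigma)=\prod_r p_I(\text{chosen in-edge of } r)$. So drawing $\sigma$ is the same as choosing, independently for each reticulation~$r$, one incoming edge $qr$ with probability $p_I(qr)$. Hence events depending on disjoint sets of reticulations are independent, and the events ``$qr\in\sigma$'' for distinct $q$ and fixed $r$ are disjoint with probabilities adding up.

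Second, in every switching $\sigma$ each non-root node has exactly one parent. Following parents upwards from any node therefore terminates at the root by acyclicity. So for every leaf~$\ell$, $\sigma$ contains a root-$\ell$-path, and this path is also a path of $N$.

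Combining these, I first prove an auxiliary claim: if a node $w$ is visible from a leaf $\ell$, then $\haspath[\sigma]{w}{\ell}$ for every switching~$\sigma$. Indeed, the root-$\ell$-path in $\sigma$ is a root-$\ell$-path of $N$, so it contains $w$, and its suffix from $w$ lies in $\sigma$.

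\textbf{Case 1: $v$ is visible.} Then $\haspath[\sigma]{uv}{X_N}$ holds if and only if $uv\in E(\sigma)$, since $v$ always reaches a leaf by the claim. This event has probability $p_I(uv)$, and $p_I(uv)=1$ if $v$ is a tree-node, as $p_I$ is normal.

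\textbf{Case 2: $v$ is not visible.} Since $N$ is reticulation-visible, $v$ must be a tree-node, so $uv\in E(\sigma)$ always.

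Next I show that no node reachable from $v$ by a tree-path is a leaf or, more generally, visible. Suppose a tree-node $w$ reached from $v$ by a tree-path were visible from $\ell$. Every root-$\ell$-path passes $w$. Going upward from $w$, every node on the tree-path has a unique parent, so every such path also passes $v$, making $v$ visible, a contradiction. Consequently every maximal tree-path starting at $v$ ends in a reticulation, that is, in $R_v$. Moreover, all edges of these tree-paths other than their last edges point into tree-nodes and hence belong to every switching.

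Now $\haspath[\sigma]{v}{X_N}$ holds if and only if some $r\in R_v$ has its chosen in-edge in $Q_r$.
\begin{itemize}
\item ``$\Leftarrow$'': such an $r$ is reached from $v$ in $\sigma$ via the tree-path. As a reticulation, $r$ is visible, so by the claim it reaches a leaf in $\sigma$.
\item ``$\Rightarrow$'': take a $v$-leaf path in $\sigma$. Its first reticulation $r$ exists, because the leaf is not reachable by a tree-path. Then $r\in R_v$, and the edge entering $r$ on this path lies in $Q_r\cap E(\sigma)$.
\end{itemize}

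By the product structure, the events ``the chosen in-edge of $r$ lies in $Q_r$'' for different $r\in R_v$ are independent. Each has probability $\sum_{qr\in Q_r}p_I(qr)$, since the events are disjoint over $q$. Taking the complement of the event that none of them happens yields the stated formula.

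The main obstacle I expect is making the independence and disjointness argument rigorous in terms of the paper's definition of $P(\sigma)$. I would handle this via \cref{obs:combine-reti-swis} with $R$ the set of reticulations, summing $P$ over switchings grouped by the choices at $R_v$. This factorizes as a product over $R_v$ times a sum equal to~$1$ over the remaining reticulations.
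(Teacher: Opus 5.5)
Your proof is correct and follows essentially the same route as the paper. The visible case uses the fact that every switching contains a root-$\ell$-path. The invisible case reduces $v$ reaching a leaf to the event that some $r\in R_v$ has its chosen in-edge in $Q_r$, and then applies independence of the choices at distinct reticulations. Your version is more careful than the paper's in one place: you explicitly show that no leaf (indeed no visible node) is reachable from $v$ by a tree-path, which the paper uses only implicitly in its step from (b) to (c).
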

	\begin{proof}
    If $v$ is visible in $N$, then there is some leaf~$\ell$ such that all root-$\ell$ paths in $N$ contain~$v$.
    Since each switching contains a root-$\ell$-path, it also contain a $v$-$\ell$-path,
    so the probability of picking a switching in which~$uv$ is on a root-leaf-path is the
    probability of picking a switching that \emph{contains}~$uv$.
    This probability is $p_I(uv)$.
    Suppose now that $v$ is not visible in $N$.
    Since $N$ is reticulation-visible, $v$ is a tree-node and has children~$w_1,\dots,w_t$.
    Then,\\
    \phantom{xxxxxx} \begin{minipage}{.8\textwidth}
    \begin{enumerate}
        \item[(a)] $uv$ is \emph{not} on a root-leaf-path in a switching~$\sigma$
        \item[$\iff$ (b)] none of the edges~$vw_i$ are on a root-leaf-path in $\sigma$
        \item[$\iff$ (c)] none of the edges~$qr\in \bigcup_{r\in R_v} Q_r$ are in $\sigma$.
    \end{enumerate}
    \end{minipage}\\
    %
    Note that, for (c), being present in $\sigma$ is sufficient for~$qr$ to have a path to a leaf in $\sigma$
    since each reticulation is visible from a leaf~$\ell$ in $N$ and,
    thus, has a path to~$\ell$ in $\sigma$.
    Since the incoming edge of each reticulation is chosen independently,
    the probability of (c)~is the product over all~$r\in R_v$ of the probability that, among all edges incoming to~$r$, no edge of $Q_r$ is in $\sigma$.
    The probability for this is $1-\sum_{qr\in Q_r}\edgeproba{qr}{X_N} = 1-\sum_{qr\in Q_r}p_I(qr)$ by the first case.
	\end{proof}

  To realize the bottom-up computation, we can consider the trees resulting from the removal of all reticulations from~$N$.
  For each tree-node~$v$ in such a component, the term~$\edgeproba{uv}{X_N}$ can then be computed by recursively combining the values of the children.
  A dynamic program can do this in time and space linear in the size of the component.
  In total, this is linear in the size (number of nodes and edges) of $N$.
	
	\begin{theorem}\label{thm:visible}
		\APDX can be computed in linear time on reticulation-visible networks.
	\end{theorem}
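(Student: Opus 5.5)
The plan is to combine \eqref{eq:APD2025} with \cref{lem:visible}. By \eqref{eq:APD2025}, $\APDX=\sum_{uv\in E(N)}\w(uv)\cdot\edgeproba{uv}{X_N}$. So it suffices to compute $\edgeproba{uv}{X_N}$ for every edge in total time $\Oh(|V(N)|+|E(N)|)$, and then sum the weighted terms in linear time.

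\emph{Step 1: visibility.} We first decide which nodes are visible. A node $v$ is visible if and only if some leaf $\ell$ has $v$ as a dominator in the DAG rooted at the root $\rho$. Dominator trees of a rooted DAG can be computed in linear time. One option is the classic Alstrup et al./Buchsbaum et al. algorithms. A simpler option exploits acyclicity: processing nodes in topological order, the immediate dominator of a node is the lowest common ancestor, in the partially built dominator tree, of its parents, maintained with a dynamic LCA structure. We then mark every node on the dominator-tree path from $\rho$ to each leaf $\ell$; these are exactly the visible nodes. Marking is done in linear time by propagating a flag bottom-up in the dominator tree: a node is marked if it is a leaf or has a marked dominator-tree child. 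For every edge $uv$ with $v$ visible, we set $\edgeproba{uv}{X_N}=p_I(uv)$ directly.

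\emph{Step 2: invisible nodes.} For an invisible $v$, which is a tree-node by reticulation-visibility, we need $1-\prod_{r\in R_v}(1-\sum_{qr\in Q_r}p_I(qr))$. We delete all edges entering reticulations, so the tree-nodes form a forest $F$ of tree-paths. For each node $v$, the $v$-$r$-tree-paths are exactly the paths in $F$ from $v$ to a tail $q$ of an edge $qr$ into a reticulation. Since $F$ is a forest, each such tail $q$ lies below $v$ in at most one way. The key point is that, for a fixed reticulation $r$, the sum $\sum_{qr\in Q_r}p_I(qr)$ ranges over the parents $q$ of $r$ lying in the $F$-subtree of $v$.

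Define $\pi(v):=\prod_{r}(1-s_r(v))$, where $s_r(v)$ is the sum over parents of $r$ in the $F$-subtree of $v$. The main obstacle is that this product does not factor over children of $v$ in $F$ when two children's subtrees both contain parents of the same $r$. I would resolve this in one of two ways. The first is to show that in a reticulation-visible network such sharing cannot occur for an invisible $v$. The second, which I would try first, is to handle it directly. Group the reticulation edges by the lowest common $F$-ancestor of their tails sharing the same $r$, and combine the terms for $r$ exactly at that LCA node. Below the LCA, each parent contributes an independent factor. At the LCA, the partial factors $(1-s')$ from the children are replaced by a single $(1-s)$ by dividing out and multiplying in. All LCAs can be computed offline in linear time with the Gabow--Tarjan offline LCA algorithm, and with an iterated merge each edge $qr$ is touched $\Oh(1)$ times. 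Zero factors are tracked separately by counting them, so that divisions stay valid.

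Finally, $\edgeproba{uv}{X_N}=1-\pi(v)$ is computed bottom-up over $F$ for every invisible $v$, and all terms are summed. This gives linear time, assuming constant-time arithmetic on the rational probabilities.
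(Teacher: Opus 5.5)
Your proposal is correct and follows the paper's route: sum $\omega(uv)\cdot\edgeproba{uv}{X_N}$ via \eqref{eq:APD2025}, take $\edgeproba{uv}{X_N}$ from \cref{lem:visible}, and evaluate the invisible case bottom-up on the forest of tree-nodes; you also spell out a linear-time visibility test (dominators), which the paper leaves implicit. Your worry about non-factorization is justified, and the paper's ``recursively combining the values of the children'' glosses over it: your first option holds only for in-degree-$2$ reticulations (if both parents of a visible $r$ lie in the $F$-subtree of $v$, then $v$ is visible), but fails for higher in-degree --- e.g.\ for edges $\rho v,\rho x,vq_1,vq_2,q_1r,q_2r,xr,r\ell$ the node $v$ is invisible and $\edgeproba{\rho v}{X_N}=p_I(q_1r)+p_I(q_2r)\neq 1-(1-p_I(q_1r))(1-p_I(q_2r))$ --- so your LCA-based correction is genuinely needed, applied (as your ``iterated merge'' suggests) at every branching node of the compressed tree spanned by the tails of $r$, not only at their overall LCA.
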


	\subsection{Parameterization by invisibility}
	  		
  In this section, we show that the \APDlong of a reticulation-visible network can be computed in linear time.
  We generalize this to networks containing only constantly many reticulations that are not visible, which we call ``invisible reticulations''.
  Our algorithm has two parts:
  First, we show how to compute the \APDlong for $N$
  by considering all partial switchings of invisible reticulations and applying \cref{thm:visible} to each of them.
  Then, we show how to remove a lowest biconnected component, given its \APDlong.
  In this way, we can handle biconnected components independently and,
  thus, reduce the number of reticulations that need to be considered simultaneously.

  \begin{lemma}\label{lem:bcc}
    Let~$R_I$ be the set of invisible reticulations of $N$.
    Then, $\APD(X_N)$ can be computed in $\prod_{r\in R_I}\indeg[N](r) \cdot O(n)$~time.
  \end{lemma}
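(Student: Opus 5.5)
We enumerate all partial switchings $\sigma_R\in\swis_{R_I}(N)$ of the invisible reticulations. There are exactly $\prod_{r\in R_I}\indeg[N](r)$ of them. By \cref{obs:combine-reti-swis}, the sets $\swis(\sigma_R)$ partition $\swis(N)$ and
\[
  \APD[N](X_N) = \smashsum{\sigma_R\in\swis_{R_I}(N)} \APD[\sigma_R](X_N).
\]
Here we read $\sigma_R$ as a network whose probability is $P(\sigma_R)=\prod_{r\in R_I} p_I(e_r)$, where $e_r$ is the retained in-edge of $r$. Its switchings carry their probability relative to $N$, so the sum over $\sigma_R$ is weighted correctly. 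Concretely, $\APD[\sigma_R](X_N)=P(\sigma_R)\cdot A(\sigma_R)$, where $A(\sigma_R)$ is the APD of $\sigma_R$ viewed as a network with its own normal inheritance probabilities. These probabilities are the original ones on reticulations that are still unswitched, and~$1$ on the switched nodes, which become tree-nodes. It therefore suffices to compute each $A(\sigma_R)$ in $O(n)$ time via \cref{thm:visible}.

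The main obstacle is that $\sigma_R$ need not be a reticulation-visible phylogenetic network. Removing edges can create non-leaf sinks: a node whose only out-edge led to a switched-off edge, or more generally nodes with no path to $X_N$. It can also change visibility. To handle this, we first prune from $\sigma_R$ every node with no path to a leaf of $X_N$, as in the reduction at the start of \cref{sec:sw-DP}. Such nodes and edges contribute nothing to the APD for any switching, so $A$ is unchanged. We also allow in- and out-degree-one nodes, as the preliminaries permit.

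Next we argue that in the pruned $\sigma_R$ every remaining reticulation $r$ is visible. Such an $r$ lies outside $R_I$, so $r$ was visible in $N$ from some leaf $\ell$, meaning every root-$\ell$ path in $N$ passes through $r$. Every root-$\ell$ path in $\sigma_R$ is a root-$\ell$ path in $N$, and $\ell$ is still reachable from the root because each node keeps at least one in-edge. Hence $r$ stays on all root-$\ell$ paths in $\sigma_R$. The reticulations switched in $\sigma_R$ have in-degree one and are now tree-nodes. So the pruned $\sigma_R$ is a rooted, reticulation-visible DAG whose leaves are $X_N$, with normal inheritance probabilities.

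\cref{lem:visible} and \cref{thm:visible} apply to this DAG; their proofs only use visibility and the independence of the choices at reticulations. The only other check is that pruning and reweighting take $O(n)$ time, which a single backward reachability search from $X_N$ achieves. This uses that $N$ has $O(n)$ nodes and edges, which we assume as in \cref{thm:FPT-sw}. Summing $P(\sigma_R)\cdot A(\sigma_R)$ over all partial switchings gives the claimed total running time of $\prod_{r\in R_I}\indeg[N](r)\cdot O(n)$.
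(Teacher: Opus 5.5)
Your proof is correct and follows the paper's argument. You enumerate the $\prod_{r\in R_I}\indeg[N](r)$ partial switchings of the invisible reticulations, decompose $\APD[N](X_N)$ via \cref{obs:combine-reti-swis} into the $P(\sigma_R)$-weighted APDs of these switchings, and evaluate each term in linear time with \cref{thm:visible}. You also make explicit three points the paper glosses over: the normalization factor $P(\sigma_R)$, pruning nodes that can no longer reach $X_N$, and why the remaining reticulations stay visible in $\sigma_R$.
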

  \begin{proof}
    Let~$R$ be the set of reticulations of $N$ and note that $R_I\subseteq R$.
    Further, let~$\swis_I(N)$ refer to the sets of (partial) switchings for $R_I$ of $N$.
    Then, by \cref{obs:combine-reti-swis}, 
    $$\APD(X_N)
      = \smashsum{\sigma\in\swis(N)} P(\sigma) \PD[\sigma](X_N)
      = \smashsum[l]{\sigma_I\in\swis_I(N)} \left(\smashsum[r]{\sigma\in\swis(\sigma_I)} P(\sigma) \PD[\sigma](X_N)\right)
      \stackrel{\text{Obs.~\ref{obs:combine-reti-swis}}}{=}
    \smashsum{\sigma_I\in\swis_I(N)} P(\sigma_I)\cdot\APD[\sigma_I](X_N).$$
    Since each~$\sigma_I\in\swis_I(N)$ is reticulation-visible,
    $\APD[\sigma_I](X_N)$ can be computed in linear time on~$\sigma_I$ by \cref{thm:visible}.
    Since $|\swis_I(N)|=\prod_{r\in R_I}\indeg[N](r)$, the claim follows.
  \end{proof}

  It remains to show that we can safely remove biconnected components, given their \APDlong.
  Indeed, we show that we can just remove a lowest biconnected component and treat it
  separately since the \APDlong of $N$ is just the sum of the two separate parts.

\newcommand\lemblob[1]{
	\begin{lemma}[$\star$]
		#1
    Let $B$ be a biconnected component of $N$ whose root~$\rho_B$ is minimal with respect to $\belo$.
    Let $N'$ be the result of
    removing all nodes of $B$ except~$\rho_B$ and
    labeling~$\rho_B$ with any label in $X_B$.
    Then, $\APD[N](X_N) = \APD[B](X_B) + \APD[N'](X_{N'})$.
  \end{lemma}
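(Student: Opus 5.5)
We prove the statement edge by edge via \eqref{eq:APD2025}. The edge set of $N$ splits as $E(N)=E(B)\uplus E(N')$, so
\[
  \APD[N](X_N)=\smashsum{e\in E(B)}\w(e)\cdot\edgeproba{e}{X_N}+\smashsum{e\in E(N')}\w(e)\cdot\edgeproba{e}{X_N}.
\]
It then suffices to show $\edgeproba{e}{X_N}=\edgeproba[B]{e}{X_B}$ for $e\in E(B)$ and $\edgeproba{e}{X_N}=\edgeproba[N']{e}{X_{N'}}$ for $e\in E(N')$. Since $p_I$ on $B$ and on $N'$ is the restriction of $p_I$ on $N$, it stays normal on both.

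\textbf{Structural facts.} We first collect three facts about a lowest biconnected component. Here $X_B$ is the set of leaves of $N$ below $\rho_B$.
\begin{enumerate}[(i)]
  \item Every node $w$ with $\rho_B\beloeq w$ belongs to $B$ (or is $\rho_B$ itself).
  \item Every reticulation $r$ of $B$ has all its parents in $B$.
  \item $\rho_B$ is visible from every leaf in $X_B$.
\end{enumerate}
For (i), suppose some $w$ below $\rho_B$ were not in $B$. Since $\rho_B$ is minimal, $w$ cannot lie in a biconnected component strictly below $B$, so the path from $\rho_B$ to $w$ must leave $B$ at a cut vertex distinct from $\rho_B$. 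That cut vertex would be the root of another biconnected component lying below $\rho_B$, contradicting minimality.

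For (ii), suppose a reticulation $r$ of $B$ had a parent $q\notin B$. Then the edge $qr$ together with $\rho_B$ being the unique entry point of $B$ would yield an undirected cycle through a vertex outside $B$ and a vertex of $B$ other than a cut vertex. This contradicts the maximality of $B$, or the fact that $\rho_B$ separates $B$ from the rest.

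For (iii), every root-to-$\ell$ path with $\ell\in X_B$ must enter $B$, and by (i) and (ii) it can only do so through $\rho_B$.

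I expect the main obstacle to be making these three facts rigorous, in particular the precise meaning of ``lowest'' and the treatment of pendant leaf edges as part of $B$.

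\textbf{Product structure of switchings.} By (ii), the reticulations of $N$ split into those of $B$ and those of $N'$, and their in-edges lie entirely in $B$ or entirely in $N'$. Hence every switching $\sigma$ of $N$ decomposes uniquely as $\sigma_B\oplus\sigma'$ with $\sigma_B\in\swis(B)$ and $\sigma'\in\swis(N')$. The switched sets are disjoint, so \eqref{eq:merged-prob} gives $P(\sigma)=P(\sigma_B)\cdot P(\sigma')$.

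\textbf{Edges of $B$.} For $e\in E(B)$, any path in $\sigma$ starting with $e$ stays below $\rho_B$, hence stays inside $B$ by (i). Therefore $\haspathzo[\sigma]{e}{X_N}=\haspathzo[\sigma_B]{e}{X_B}$. Summing over $\sigma'$, whose probabilities add up to $1$, gives $\edgeproba{e}{X_N}=\edgeproba[B]{e}{X_B}$.

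\textbf{Edges of $N'$.} For $e\in E(N')$, a path in $\sigma$ starting with $e$ reaches $X_N$ if and only if it either reaches a leaf of $N'$ other than $\rho_B$, or reaches $\rho_B$ and then continues to a leaf in $X_B$. The second part always succeeds. Indeed, every switching is a spanning tree containing a root-to-$\ell$ path for each $\ell\in X_B$, and by (iii) that path passes through $\rho_B$, so $\rho_B$ reaches a leaf in every $\sigma_B$. Consequently $\haspathzo[\sigma]{e}{X_N}=\haspathzo[\sigma']{e}{X_{N'}}$, where $\rho_B$ now counts as a leaf of $N'$. Summing out $\sigma_B$ gives $\edgeproba{e}{X_N}=\edgeproba[N']{e}{X_{N'}}$.

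Combining the two cases with \eqref{eq:APD2025} yields $\APD[N](X_N)=\APD[B](X_B)+\APD[N'](X_{N'})$.
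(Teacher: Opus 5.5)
Your proof is correct and is essentially the paper's argument. Each switching of $N$ factors into a $B$-part and an $N'$-part with multiplicative probabilities. Edges of $B$ depend only on the $B$-part, and edges of $N'$ depend only on the $N'$-part, because $\rho_B$ reaches a leaf in every switching of $B$. Summing out the other part, whose probabilities add up to $1$, then gives the two terms. You marginalize edge by edge via $\varphi$, while the paper sums over $\sigma_B$ first; this is only a reordering of the same sums.

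The one soft spot is your derivation of (i) from minimality. A path from $\rho_B$ can leave $B$ at $\rho_B$ itself when $\rho_B$ is the root of more than one biconnected component, for instance two non-trivial blobs in a non-binary network. In that case (i) fails, and $\rho_B$ would not even be a leaf of $N'$. So (i) has to be read as part of the hypothesis, which is precisely what makes $N'$ well defined. The paper assumes it tacitly as well.
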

}
\lemblob{\label{lem:blob}}
\thmtoappendix{lem:blob}{
\lemblob{}
}{
  \begin{proof}
    We remark that, in each~$\sigma_B\in\swis(B)$, the root~$\rho_B$ has a path to a leaf, so $\haspath[\sigma]{e}{X_N}$
    for each~$e\in E(N')$ and $\sigma\in\swis(\sigma_B)$ does not depend on $\sigma_B$.
    Thus,
    \begin{align*}
      \APD(X_N)
      & \stackrel{\text{Obs.~\ref{obs:combine-reti-swis}}}{=}
        \smashsum{\sigma_B\in\swis(B)} \APD[\sigma_B](X_N)\\
      & \stackrel{\eqref{eq:APD2025}}{=}
        \smashsum[l]{\sigma_B\in\swis(B)}\;
          \sum_{e \in E(N)}\w(e)
          \;\smashsum{\sigma\in\swis(\sigma_B)}\; P(\sigma) \cdot \haspathzo[\sigma]{e}{X_N}\\
      & =
        \smashsum[l]{\sigma_B\in\swis(B)} \left(
          \smashsum[r]{e \in E(B)}\w(e)
            \;\smashsum{\sigma\in\swis(\sigma_B)}\; P(\sigma) \cdot \haspathzo[\sigma]{e}{X_N} + 
              \smashsum{e \in E(N')}\w(e)
                \;\smashsum{\sigma\in\swis(\sigma_B)}\; P(\sigma) \cdot \haspathzo[\sigma]{e}{X_N}\right)\\
      & =
        \APD[B](X_B) +
        \sum_{\sigma_B\in\swis(B)}\;
          \sum_{e \in E(N')}\w(e)\cdot \;\;\smashsum{\sigma\in\swis(\sigma_B)}\; P(\sigma) \cdot \haspathzo[\sigma]{e}{X_N}
      \intertext{we conclude with $P(\sigma)=\prod_{e\in E(\sigma)}p_I(e)$,}
      & =
        \APD[B](X_B) +
        \smashsum[l]{\sigma_B\in\swis(B)}\;
          \smashsum[r]{e \in E(N')}
            \w(e) \;\smashsum{\sigma\in\swis(\sigma_B)}\;
              \haspathzo[\sigma]{e}{X_N} \cdot P(\sigma_B)\cdot  \;\smashprod{a\in E(\sigma)\cap E(N')}\; p_I(a)\\
      & =
        \APD[B](X_B) +
        \smashsum{\sigma_B\in\swis(B)}\;
          P(\sigma_B) 
          \;\smashsum{e \in E(N')}\;
            \w(e) \;\smashsum{\sigma\in\swis(\sigma_B)}\;
              \haspathzo[\sigma]{e}{X_N}\cdot
              \;\smashprod{a\in E(\sigma)\cap E(N')}p_I(a)\\
      \intertext{since $\haspath[\sigma]{e}{X_N}\iff\haspath[\sigma']{e}{X_N}$ for each $e\in E(N')$ and all $\sigma,\sigma'$ that agree on $E(N')$,}
      & =
        \APD[B](X_B) +
        \smashsum{\sigma_B\in\swis(B)}\;
          P(\sigma_B) 
          \;\smashsum{e \in E(N')}\;
            \w(e) \;\smashsum{\sigma'\in\swis(N')}\;
              \haspathzo[\sigma']{e}{X_N} \cdot
              \smashprod{a\in E(\sigma')}p_I(a)\\
      & \stackrel{\eqref{eq:APD2025}}{=}
        \APD[B](X_B) +
          \underbrace{\smashsum[r]{\sigma_B\in\swis(B)}\;  P(\sigma_B)}_{=1} \cdot \APD[N'](X_{N'})\\
      & = \APD[B](X_B) +
      \APD[N'](X_{N'})\qedhere
    \end{align*}
  \end{proof}
}
  \Cref{lem:bcc} and \cref{lem:blob} imply the following theorem.

	\begin{theorem}\label{thm:ivis-lvl}
    For each biconnected component~$C$ of a network~$N$,
    let~$R_C$ be the set of invisible reticulations in $C$ and
    let~$m_C \in\Oh(\sum_{r\in R_C}\indeg[N](r))$ denote the number of edges in $C$.
    Then,
    \APDX can be computed in $\Oh\big(\max_C \prod_{r\in R_C}\indeg[N](r) \cdot m_C\big)$~time.
	\end{theorem}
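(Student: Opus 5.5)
We plan to combine \cref{lem:blob} and \cref{lem:bcc} by an induction over the biconnected components of $N$, peeled off bottom-up. First, we compute the biconnected components of $N$ (as an undirected graph) in linear time and order them so that each component $B$ is processed only after all components whose root lies strictly below $\rho_B$. Every such component is a rooted DAG, as noted in the preliminaries. We then repeatedly pick a component $B$ whose root $\rho_B$ is minimal with respect to $\belo$. Among the remaining components, no other one lies below $\rho_B$, so the sinks of $B$ are leaves of the current network, apart from cut-vertices that were relabeled as leaves in earlier rounds. We compute $\APD[B](X_B)$, replace $B$ by the single leaf $\rho_B$ as in \cref{lem:blob}, and add the value to a running total. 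By \cref{lem:blob} and induction on the number of components, the final total equals $\APD[N](X_N)$. The last remaining network consists of the root component, which is handled in the same way.

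For the cost of one component, we apply \cref{lem:bcc} to $B$ viewed as a network. It uses time $\prod_{r\in R'_B}\indeg[B](r)\cdot O(|B|)$, where $R'_B$ is the set of reticulations that are invisible within $B$. The crucial point is $R'_B\subseteq R_B$: a reticulation $r$ of $B$ that is visible in $N$ is also visible in $B$. To see this, let $\ell\in X_N$ be a leaf such that every root-$\ell$-path in $N$ passes through $r$. Every path from $r$ to $\ell$ leaves $B$, if at all, through a unique cut-vertex $c$, which is a sink of $B$ and hence a leaf of $B$ after the earlier contractions. Suppose some $\rho_B$-$c$-path in $B$ avoided $r$. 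Prefixing it with a root-$\rho_B$-path of $N$, which lies outside $B$ except for $\rho_B$ and hence avoids $r$, and appending a $c$-$\ell$-path outside $B$ would give a root-$\ell$-path in $N$ that avoids $r$, a contradiction. Hence $r$ is visible from $c$ in $B$. It then remains to check that in-degrees are preserved, since reticulation edges stay inside their component, and that $|B|=O(m_C)$.

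Summing over the components gives $\sum_C \prod_{r\in R_C}\indeg[N](r)\cdot O(m_C)$. This is bounded by $\max_C \prod_{r\in R_C}\indeg[N](r)\cdot O(\sum_C m_C)$. So the theorem's $\max$-expression should be read with $m_C$ summed over all components, giving the total edge count of $N$; alternatively, the $O$ absorbs the factor coming from the number of components. I will state this reading explicitly, since a single maximum over $C$ of the product times $m_C$ cannot bound linear work on many small components.

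The main obstacle is the visibility transfer above: ensuring that contracting lower components to leaves and restricting to $B$ introduces no new invisible reticulations. The bookkeeping also needs care here. Visibility in $N$ may be witnessed by a leaf outside $B$, and that leaf must be mapped to the cut-vertex labeled in $N'$. Everything else is a direct composition of \cref{lem:bcc} and \cref{lem:blob}.
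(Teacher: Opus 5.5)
Your proposal is correct and follows the paper's route exactly. The paper's proof is just the composition of \cref{lem:blob} (peel off a lowest biconnected component) with \cref{lem:bcc} (enumerate the switchings of that component's invisible reticulations), and you supply details it leaves implicit: the visibility-transfer argument, and the remark that the argument really yields $\max_C \prod_{r\in R_C}\indeg[N](r)\cdot O(m)$ with $m=\sum_C m_C$ (your first reading; the $O$ cannot absorb the number of components, so the second reading is wrong).

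One small correction: the cut vertex $c$ at which a lower component hangs need not be a sink of $B$, e.g.\ a tree node on a cycle carrying a pendant leaf. After contraction, $c$ is then a labeled \emph{internal} vertex of $B$ rather than a leaf. Your argument that every $\rho_B$-$c$-path in $B$ meets $r$ is unaffected once visibility is taken with respect to labeled vertices, which is how the paper's own \cref{lem:blob} implicitly treats the relabeled $\rho_B$.
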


    \Cref{thm:ivis-lvl} implies that computing~$\APD(X)$ is \FPT\todos{FPT is not defined. Better write $2^\ell$.} with respect to the number of invisible reticulations per blob, when the maximum in-degree is a constant; especially on binary networks.

	\section{Discussion}\label{sec:discussion}\looseness=-1
	In this paper, we extended the state of the art on an important problem for guiding conservation priorities when financial and temporal resources are limited.
	Specifically, we developed algorithms for computing the \APDlong (APD)
  of a given set of taxa whose evolutionary history includes reticulate events such as hybrid speciation, lateral gene transfer, and recombination.
	Our main contribution are bounds on the theoretical time complexity of this problem using the recently introduced graph parameter \emph{scanwidth}.
	Additionally, we show that it can be solved in linear time on so-called reticulation-visible networks and
    we generalize this result to networks that are close to being reticulation-visible by
    introducing the \emph{invisibility level}.
    
	While our results significantly advance the theoretical understanding of APD computation in the presence of reticulate evolution,
    the practical implementation of these algorithms, along with their empirical evaluation and comparison on real and simulated datasets,
    remains an important direction for future work.
	Further, several theoretical open questions with potentially significant practical impact remain after our work.
	First, can our scanwidth-based algorithm be improved to the smaller node-scanwidth (see~\cite{berry2020scanning,Bruchhold24,holtgrefeThesis})?
	If so, would this modification lead to improved practical running times on empirical datasets? 
  Another promising research direction is to develop a ``visibility-aware'' version of scanwidth, such as we did for the level,
  and see if APD can still be computed efficiently on networks with small ``invisibility scanwidth''.
  How exactly such a notion would be defined is unclear so far.
	A further avenue of research is to restrict the space of phylogenetic networks by focusing on specific graph classes.
	For instance, can the APD score of a given set of taxa be computed in polynomial time on orchard networks?
	In addition, several open questions arise for the maximization variant of our problem on special graph classes,
  with perhaps the most practically relevant being:
	Can \MaxAPD be solved, in polynomial time, on tree-child networks, and does this problem become any easier on ultrametric networks?

  \todo[inline]{MW: what about asking for FPT wrt.~invisible level on non-binary networks? So far we only have XP}
		
		
		

\newpage
\thispagestyle{empty}
\bibliography{ref}



\appendixproofs

\end{document}